\newtheorem{theorem}{Theorem}[section]
\newtheorem{lemma}[theorem]{Lemma}
\newtheorem{corollary}[theorem]{Corollary}
\newtheorem{definition}[theorem]{Definition}
\newcommand{\E}{\mathbb E}
\newcommand*\bigcdot{\mathpalette\bigcdot@{.5}}
\newcommand*\bigcdot@[2]{\mathbin{\vcenter{\hbox{\scalebox{#2}{$\m@th#1\bullet$}}}}}
\newcommand{\is}{\bigcdot }
\def\comg#1{\left ( #1\right )\!^{p,\mathbb G}}
\def \Lbrack {[\![}
\def \Rbrack {]\!]}
\numberwithin{equation}{section}
\DeclareMathOperator*{\loc}{loc}
\DeclareMathOperator*{\Cov}{Cov}
\begin{document}
\title{Mortality/longevity Risk-Minimization with or without securitization}

\author{Tahir Choulli\\ University of Alberta \and Catherine Daveloose and Mich\`ele Vanmaele\\Ghent University}



\maketitle\unmarkedfntext{
This research is supported by NSERC (through grant NSERC RGPIN04987). The financial support from the Agency for Innovation by Science and Technology in Flanders (IWT, grant number 111262) is gratefully acknowledged by Catherine Daveloose.\\
Tahir Choulli is thankful to Mich\`ele Vanmaele and Department of Applied Mathematics, Computer Science, and Statistics (Ghent University),  for their invitation and  hospitality, where this work started.\\
Address correspondence to Mich\`ele Vanmaele, Department of Applied Mathematics, Computer Science and Statistics, Ghent University, Ghent, Belgium; e-mail: Michele.Vanmaele@UGent.be}

\begin{abstract}
This paper addresses the risk-minimization problem, with and without mortality securitization, {\it \`a la} F\"ollmer-Sondermann for a large class of equity-linked mortality contracts when no model for the death time is specified. This framework includes the situation where the correlation between the market model and the time of death is arbitrary general, and hence leads to the case of a market model where there are two levels of information.  The public information which is generated by the financial assets,
and a larger flow of information that contains additional knowledge about a death time of an insured. 
By enlarging the filtration, the death uncertainty and its entailed risk are fully considered without any mathematical restriction.
 Our key tool lies in our optional martingale representation that states  that any martingale in the large filtration stopped at the death time can be decomposed into precise orthogonal local martingales. This allows us to derive the dynamics of the value processes of the mortality/longevity securities used for the securitization, and to decompose any mortality/longevity liability into the sum of  orthogonal risks by means of a risk basis. The first main contribution of this paper resides in quantifying, as explicit as possible, the effect of mortality uncertainty on the risk-minimizing strategy by determining the optimal strategy in the enlarged  filtration in terms of strategies in the smaller filtration. Our second main contribution consists of finding risk-minimizing strategies with insurance securitization by investing in stocks and one (or more) mortality/longevity derivatives such as longevity bonds. This generalizes the existing literature on risk-minimization using mortality securitization in many directions.\\

\noindent{\bf Keywords:} Time of death/random horizon/default, Progressively enlarged filtration, Optional martingale representation, Risk decomposition,
Unit-linked mortality contracts, Risk-Minimization, Mortality/Longevity Risk, Insurance securitization
\end{abstract}

\section{Introduction}
In this paper we manage the risk of a life insurance portfolio that faces two main types of risk: financial risk and mortality or longevity risk by designing quadratic hedging strategies {\it \`a la} F\"ollmer-Sondermann, introduced in \cite{follmersondermann86}, with and without mortality securization.
We consider  a financial setting consisting of an initial market, characterized by its flow of information $\mathbb F$ and its underlying traded assets $S$, and a random time, the death time $\tau$, that might not be observed through $\mathbb F$ when it occurs. The financial risk originates  from the investment in the risky assets, while the mortality risk follows from the uncertainty of the 
death time and can be split into a systematic and an unsystematic part, see, e.g.,  \cite{dhalmelchiormoller08, dhalmoller06} and the references therein. Longevity risk refers to the risk that the  reference population might, on average, live longer than anticipated. The unsystematic mortality risk, i.e., the risk corresponding to individual mortality rates, can be diversified by increasing the size of the portfolio while systematic mortality risk and longevity risk cannot be diversified away by pooling.   The market for mortality-linked instruments, the so-called
life market, to transfer such illiquid risks into financial markets as an alternative to the classical actuarial form of risk mitigation,  is in full development. 
 In \cite{blakeburrows01}, see also  \cite{barrieuetal09,blakeetal08b}  and the references therein, the authors were
the first to advocate the use of mortality-linked securities for hedging purposes. The first longevity bonds were sold in the late 1990s. The longevity derivatives market has since expanded to include forward contracts, options and swaps. A detailed account of this evolution can be found in \cite{tanetal2015, blakeetal2018}. The development of the life market entails questions about engineering of mortality-linked securities or derivatives as well as their pricing and finding their dynamics.
As the authors in \cite{jevticetal2017} state `The pricing of any mortality linked derivative security begins with the choice of a mortality model.', these prices obviously depend heavily on the chosen mortality model and the method used to price those securities. Since the Lee-Carter model introduced in \cite{leecarter92},  many  mortality models have been suggested. They can be classified into two main groups, depending on whether the obtained model was inspired from credit risk modelling or interest rate modelling. The first approach is based on the strong similarity between mortality and default and hence uses the arguments of credit risk theory, while the second approach follows the interest rate term structure approach such as in \cite{barbarin09}.
However model misspecification can have a significant impact on the performance of hedging strategies for mortality or longevity risk. Recently, in \cite{friedbergwebb07} (see also \cite{cairnsetal06, bauer10} for related discussion),  the authors use the CAPM and the CCAPM to price longevity bonds, and concluded that this pricing is not accurate with the reality and suggest that there might be a kind of `mortality premium puzzle' {\it \`a la } Mehra and Prescott \cite{Mehra1985}. While this mortality premium puzzle might exist, the `poor and/or bad' specification of the model for the mortality plays an important role in getting those wrong prices for longevity bonds.  In \cite{lietal2017} they propose a robust mean-variance hedging approach to deal with parameter uncertainty and model misspecification. 

Our aim is to position ourselves in a context \textit{without mortality specification}
and  derive the dynamics of the security's price process and design the risk-minimizing hedging strategies.\\

To further elaborate our main aim in this paper and its relation to the literature, we introduce some notations that are valid throughout the whole paper.
The tuplet $\left(\Omega, {\cal F}, {\mathbb F},S, P\right)$
represents mathematically the financial market model.  Herein, the filtered probability space $\left(\Omega,{\cal F},  {\mathbb  F}=({\cal F}_t)_{t\geq 0}, P\right)$ satisfies the usual condition (i.e., filtration is complete and  right continuous) with ${\cal F}_t\subset{\cal F}$, and $S$ is an $\mathbb F$-semimartingale representing the discounted price process of $d$ risky assets. The mortality is modelled with the death time of the insured, $\tau$, which is mathematically an arbitrary random time (i.e., a $[0,+\infty]$-valued random variable).  The flow of information generated by the public flow $\mathbb F$ enlarged by the random time will be denoted by $\mathbb G$, where the relationship between the three components $\mathbb F$, $\tau$ and $\mathbb G$ will be specified in the next section. \\

 Up to our knowledge, apart from the recent paper \cite{choullidaveloose2018}, all the existing literature  about mortality and/or longevity assumes a specific model for mortality and derives the dynamics for longevity bonds prices accordingly. For an up-to-date extensive list of relevant papers, see \cite{lietal2017}. We follow the approach of \cite{choullidaveloose2018}.
Even though,  $\mathbb G$ is the progressive enlargement of $\mathbb F$ with $\tau$ as in credit risk theory, the death time is kept arbitrary general with no assumption at all. This translates into the fact that the survival probabilities over time constitute a general nonnegative supermartingale. To capture which risk to hedge under $\mathbb G$ we use the classification of \cite{choullidaveloose2018} where a $\mathbb G$-risk up to $\tau$ is expressed as a functional of pure financial risk (PFR), pure mortality risk (PM) and correlation risk (CR) intrinsic to the correlation between the financial market and the mortality. Mathematically, these risks can expressed as (local) martingales due to arbitrage theory. In \cite{choullidaveloose2018}, we elaborated an optional martingale representation for  martingales in the large filtration $\mathbb G$ stopped at the death time under no assumption of any kind. This representation states that any martingale in the large filtration $\mathbb G$ stopped at the death time can be decomposed into precise orthogonal local martingales. 
By means of  this optional martingale representation we derive the dynamics of the value processes of the mortality/longevity securities and decompose any mortality/longevity liability into the sum of  orthogonal risks by means of a risk basis. 
Then, \textit{our main objective} lies, when one considers the quadratic hedging {\it \`a la} F\"ollmer-Sondermann, in quantifying the functional $\Xi$ and $(\xi^{\rm pf}, \xi^{{\rm pm}}_1,\ldots, \xi^{{\rm pm}}_k,\xi^{\rm cr}_1,\dots,\xi^{\rm cr}_l)$  such that
\begin{equation}\label{StrategyDecomposition}
\xi^{\mathbb G}=\Xi\Bigl(\xi^{\rm pf}, \xi^{{\rm pm}}_1,\ldots, \xi^{{\rm pm}}_k,\xi^{\rm cr}_1,\ldots,\xi^{\rm cr}_l\Bigr).
\end{equation}
Here $\xi^{\mathbb G}$ is the optimal hedging strategy for the whole risk encountered under $\mathbb G$ on $\Lbrack0,\tau\Rbrack$, $\xi^{\rm pf}$ is the optimal hedging strategy for
the pure financial risk, $\xi^{\rm pm}_i$, $i=1,\ldots,k$, are the optimal hedging strategies for the pure mortality risks, and $\xi^{\rm cr}_j$, $j=1,\ldots,l$, are the optimal hedging strategies for the correlation risks. Even though our results can be extended to more general quadratic hedging approaches, we opted to focus on the F\"ollmer-Sondermann's method to well illustrate our main ideas. The literature addressing this objective becomes quite rich in the last decade, while the existing literature assumes assumptions on the triplet $(\mathbb F, S,\tau)$ that can be translated, in one way or another, to a sort of independence and/or no correlation between the financial market -represented by the pair $(\mathbb F,S)$- and the mortality represented by the death time $\tau$. This independence feature, with its various degree, has been criticized in the literature by both empirical and theoretical studies. In fact, a recent stream of financial literature highlights several links between demography and financial variables when dealing with longevity risk, see \cite{barrieuetal12,bisetti12} and references therein. \\

We have two main contributions  that are intimately related to each other and that realize the aforementioned main objective by giving a rigorous and precise formulation for (\ref{StrategyDecomposition}).  Our first main contribution lies in quantifying, as explicit as possible, the effect of mortality uncertainty on the risk-minimizing strategy by determining the optimal strategy in the enlarged  filtration $\mathbb G$ in terms of strategies in the smaller filtration $\mathbb F$. Our second main contribution resides in finding risk-minimizing strategies with securization by investing in stocks and one (or more)  insurance contracts such as longevity bonds.

Concerning the literature about the risk-minimization with or without mortality securitization, we cite \cite{barbarin08, barbarin09, biaginibotero15,  biaginirheinlander16, biaginischreiber13, dhalmelchiormoller08, dhalmoller06, moller98, moller01} and the references therein to cite few. In \cite{dhalmoller06,moller98, moller01}, the authors assume independence between the financial market and the insurance model, a fact that was criticized in  \cite{dhaenkukush13}. The works of \cite{biaginibotero15, biaginirheinlander13, biaginischreiber13} assume `the  H-hypothesis', which guarantees that the mortality has no effect on the martingale structure at all (i.e.,  every $\mathbb F$-martingale remains a $\mathbb G$-martingale). This assumption can be viewed as a  strong no correlation condition between the financial market and the mortality. In \cite{barbarin09}, the author weakens this assumption by considering the following two assumptions:
\begin{equation}\label{assumption1}
\mbox{Either}\ \tau\ \mbox{avoids}\ \mathbb F\mbox{-stopping times or all}\ \mathbb F\mbox{-martingales are continuous,}
\end{equation}
and 
\begin{equation}\label{assumption2} 
M^{\mathbb G}\  \mbox{is given by}\ M^{\mathbb G}_t:=\mathbb{E}[h_{\tau}\  \big|\ {\cal G}_t]\  \mbox{where}\ h\ \mbox{is}\ \mathbb F\mbox{-predictable with suitable integrability.}
\end{equation}
However,  these assumptions are also  very restrictive. It is clear that for the popular  and simple discrete time market models the assumption (\ref{assumption1}) fails. Furthermore for most models in insurance (if not all), a Poisson process is an important component in the modelling,  and hence for these models the second part of assumption (\ref{assumption1}) fails, while its first part can be viewed as a kind of `independence' assumption between the mortality (i.e., the random time $\tau$) and the financial market (i.e., the pair $(\mathbb F, S)$). In \cite{gerbershiu13} and the references therein, the authors treat many death-related claims and liabilities in (life) insurance whose payoff process $h$ fails (\ref{assumption2}).
 In \cite{barbarin09, biaginibotero15, biaginischreiber13, dhalmoller06, moller98}, the author assumed that the mortality has a hazard rate
process, mimicking the intensity-based approach of credit risk, while in \cite{barbarin08} the author uses the interest rate modelling of Heath-Jarrow-Morton. Up to our knowledge, all the literature considers the Brownian setting for the financial market except \cite{barbarin09}.\\

\noindent This paper contains four sections, including the current section, and an appendix. The aim of the next section (Section \ref{section2})  lies in introducing the mathematical model, the optional martingale representation, and the F\"ollmer-Sondermann optimization criterion. The  third and the fourth sections are the principal innovative sections of the paper and deal with quadratic hedging for mortality/longevity risks, in the spirit of F\"ollmer-Sondermann, in the cases where mortality/longevity securitization is incorporated or not.  For the sake of easy exposition, the proof of an intermediate technical lemma is relegated to the  appendix.

\section{Mathematical model and preliminaries}\label{section2}
This section presents our mathematical model, which is constituted by an initial market model and a death time, and recalls our optional martingale representation result that we use throughout the paper. We conclude this section by briefly reviewing the quadratic hedging criterion of F\"ollmer-Sondermann. 
\subsection{Time of death, enlargement of filtration and a martingale representation theorem}
In addition to the initial market model $\left(\Omega, {\cal F},S,\mathbb F=({\cal F}_t)_{t\geq 0}, P\right)$,  we consider from now on an \({\cal F}\)-measurable random time $\tau$, that represents the \emph{time of death} of an insured, which might not be an $\mathbb F$-stopping time. The knowledge about this time of death is limited.  The right-continuous and non-decreasing process indicating whether death has occurred or not is denoted by $D$, while the enlarged filtration of $\mathbb F$ associated with the couple $(\mathbb F, \tau)$ is denoted by $\mathbb G$, and they are defined by
\begin{equation}\label{ProcessDandG}
D := I_{\Lbrack\tau,+\infty\Lbrack},\ \ \ \mathbb G:=({\cal G}_t)_{t\geq 0},\ \ \ {\cal G}_t=
\cap_{s>0}\left({\cal F}_{s+t}\vee\sigma\left(D_{u},\ u\leq s+t\right)\right).
\end{equation}
Thus, starting from the filtration \(\mathbb F\), which represents the flow of public information, $\mathbb G$ is the \emph{progressively enlarged} filtration by incorporating the information included in the process \(D\). \(\mathbb G\) is the smallest filtration which contains \(\mathbb F\) and makes \(\tau\) a \(\mathbb G\)-stopping time. 

We recall some notations that we will use throughout the paper.  For any filtration  $\mathbb H\in \{\mathbb F,\mathbb G\}$, we denote by ${\cal A}(\mathbb H)$ (respectively ${\cal M}(\mathbb H)$) the set
of $\mathbb H$-adapted processes with $\mathbb H$-integrable variation (respectively that are $\mathbb H$-uniformly integrable martingales).
For any process $X$,  $^{o,\mathbb H}X$  (respectively $^{p,\mathbb H}X$)  is the
$\mathbb H$-optional (respectively $\mathbb H$-predictable) projection of $X$. For an increasing process $V$,  the process $V^{o,\mathbb H}$ (respectively $V^{p,\mathbb H}$) represents its dual $\mathbb H$-optional (respectively $\mathbb H$-predictable) projection. For a filtration $\mathbb H$, ${\cal O}(\mathbb H)$, ${\cal P}(\mathbb H)$ and  $\mbox{Prog}(\mathbb H)$ denote the $\mathbb H$-optional, the $\mathbb H$-predictable and the $\mathbb H$-progressive $\sigma$-fields  respectively on $\Omega\times[0,+\infty[$. For an $\mathbb H$-semimartingale $X$, we denote by $L(X,\mathbb H)$ the set of all $X$-integrable processes in Ito's sense, and for $H\in L(X,\mathbb H)$, the resulting integral is a  one dimensional $\mathbb H$-semimartingale denoted by $H\is X:=\int_0^{\centerdot} H_u dX_u$. If ${\cal C}(\mathbb H)$ 
is a set of processes that are adapted to $\mathbb H$,
then ${\cal C}_{\loc}(\mathbb H)$ --except when it is stated otherwise-- is the set of processes, $X$,
for which there exists a sequence of $\mathbb H$-stopping times,
$(T_n)_{n\geq 1}$, that increases to infinity and $X^{T_n}$ belongs to ${\cal C}(\mathbb H)$, for each $n\geq 1$. We recall the definition of orthogonality between local martingales.
\begin{definition}
	Let $M$ and $N$ be two $\mathbb H$-local martingales. Then $M$ is said to be orthogonal to $N$ whenever $MN$ is also an $\mathbb H$-local martingale, or equivalently $[M,N]$ is an $\mathbb H$-local martingale.
\end{definition}

The public who has access to the filtration $\mathbb F$, can only get information about $\tau$ through the survival probabilities denoted by $G_t$ and $ \widetilde{G}_t$, and are given by
\begin{equation}\label{GGtildem}
G_t := {}^{o,\mathbb F}\!(I_{\Lbrack0,\tau\Lbrack})_t=P(\tau > t | {\cal F}_t),\quad  \widetilde{G}_t := {}^{o,\mathbb F}\!(I_{\Lbrack0,\tau\Rbrack})=P(\tau \ge t | {\cal F}_t),
\quad \mbox{ and } \quad \ m := G + D^{o,\mathbb F}.
\end{equation}
The processes $G$ and $\widetilde G$ are known as Az\'ema supermartingales ($G$ is right-continuous with left limits, while in general  $\widetilde G$ has right and left limits only), while $m$ is a BMO $\mathbb F$-martingale. For more details about these, we refer the reader to  \cite[paragraph 74, Chapitre XX]{dellacheriemeyer92}. 
To derive the risk-minimizing strategies for a mortality claim we will make use of the  optional martingale representation for a $\mathbb G$-martingale  introduced in \cite{choullidaveloose2018} that states that the risk can be decomposed into three  types of risks.
 \begin{theorem}\cite[Theorem2.19]{choullidaveloose2018} \label{TheoRepresentation} Let $h\in L^2({\cal O}(\mathbb F),P\otimes D)$, and $M^h$ be given by
 	\begin{equation}\label{processesMhandJ}
 M^h_t:={}^{o,\mathbb F}\left(\int_0^{\infty}h_udD_u^{o,\mathbb F}\right)_t={\mathbb E}\Bigl[
 \int_0^{\infty}h_udD_u^{o,\mathbb F}\mid \mathcal{F}_t \Bigr].
 	\end{equation}
 	Then the $\mathbb G$-martingale $H_t := \!\! \ ^{o,\mathbb G}(h_\tau)_t=\E[h_\tau | {\cal G}_t]$ admits the following representation.
 	\begin{equation}\label{Gdecomposition}
 	H - H_0 ={\frac{I_{\Rbrack 0,\tau\Rbrack}}{G_-}}\is  \widehat{M^h}
 	-{\frac{M^h_{-} - (h\is D^{o,\mathbb F})_{-}}{G_{-}^2}} I_{\Rbrack 0,\tau\Rbrack}\is \widehat{m} +
 	{\frac{hG-M^h +h\is D^{o,\mathbb F}}{G}}I_{\Rbrack 0,R\Lbrack}\is N^{\mathbb G},
 	\end{equation}
 	where $	R:= \inf \{t\ge 0: G_t =0 \}$  and both processes 
 	\begin{eqnarray*}
	\left(hG-M^h +h\is D^{o,\mathbb F}\right){{I_{\Rbrack 0,R\Lbrack}}\over{G_{-}}}\is N^{\mathbb G}\ \mbox{and}\quad
 	{{I_{\Rbrack 0,\tau\Rbrack}}\over{G_{-}}}\is  \widehat{M^h}
 	-\left(M^h_{-} - (h\is D^{o,\mathbb F})_{-}\right){{I_{\Rbrack 0,\tau\Rbrack}}\over{G_{-}^{2}}}\is \widehat{m}\end{eqnarray*}
 	are square integrable  and orthogonal martingales under $\mathbb G$. 
 \end{theorem}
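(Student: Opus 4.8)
The plan is to compute the $\mathbb{G}$-martingale $H$ explicitly, split according to whether death has occurred, obtain its $\mathbb{G}$-canonical decomposition via It\^o's formula, and then recognise the three stochastic integrals by matching the finite-variation terms against the known drift corrections that turn the stopped $\mathbb{F}$-martingales $(M^h)^\tau$, $m^\tau$ and the indicator $D$ into the $\mathbb{G}$-(local) martingales $\widehat{M^h}$, $\widehat m$ and $N^{\mathbb{G}}$.

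First I would determine $H$ explicitly. The classical key formula for the $\mathbb{G}$-optional projection, together with the definition of the dual optional projection and the $\mathbb{F}$-optionality of $h$, gives $\E[h_\tau I_{\{\tau>t\}}\mid{\cal F}_t]=M^h_t-(h\is D^{o,\mathbb{F}})_t$, whence on $\Lbrack 0,R\Lbrack$
\[
H_t=h_\tau D_t+\frac{M^h_t-(h\is D^{o,\mathbb{F}})_t}{G_t}\,(1-D_t).
\]
Writing $Y:=(M^h-h\is D^{o,\mathbb{F}})/G$ for the pre-death value, the jump of $H$ at $\tau$ is $h_\tau-Y_{\tau-}$, and the algebraic identity $h-Y=(hG-M^h+h\is D^{o,\mathbb{F}})/G$ already exhibits the integrand multiplying $N^{\mathbb{G}}$ in \eqref{Gdecomposition}, identifying the pure-mortality term as the compensated jump of $H$ at the death time.

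Next I would differentiate. Applying the product rule to $H=h_\tau D+Y(1-D)$ and the quotient rule to $Y$, using $G=m-D^{o,\mathbb{F}}$ so that $dG=dm-dD^{o,\mathbb{F}}$, yields a $\mathbb{G}$-semimartingale whose local-martingale candidates are driven by $(M^h)^\tau$, $m^\tau$ and $D$, plus finite-variation contributions from $h\is D^{o,\mathbb{F}}$, the brackets $[m]$ and $[M^h,m]$, and the jump at $\tau$. I would then invoke the $\mathbb{G}$-drift corrections underlying $\widehat{M^h}$, $\widehat m$ and $N^{\mathbb{G}}$ and reorganise the finite-variation part so that it cancels exactly against these compensators. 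What survives are the three stochastic integrals with coefficients $I_{\Rbrack 0,\tau\Rbrack}/G_-$, $-(M^h_--(h\is D^{o,\mathbb{F}})_-)/G_-^2$ and $(hG-M^h+h\is D^{o,\mathbb{F}})/G$ against $\widehat{M^h}$, $\widehat m$ and $N^{\mathbb{G}}$ respectively, which is precisely \eqref{Gdecomposition}.

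Finally I would verify the two analytic assertions. Square integrability follows from $h\in L^2({\cal O}(\mathbb{F}),P\otimes D)$, which controls $H$ in $L^2$, combined with the BMO property of $m$ that bounds the $\widehat m$-integral. Orthogonality of the pure-mortality martingale to the financial/correlation one reduces, by the definition of orthogonality recalled above, to showing that $[\widehat{M^h},N^{\mathbb{G}}]$ and $[\widehat m,N^{\mathbb{G}}]$ are $\mathbb{G}$-local martingales; this is where the precise form of the compensators is used, the single jump of $N^{\mathbb{G}}$ at $\tau$ being matched against the jumps the stopped $\mathbb{F}$-martingales contribute there. I expect the main obstacle to be the third step: carrying out the It\^o and drift-correction bookkeeping with no simplifying hypothesis---neither avoidance of $\mathbb{F}$-stopping times nor continuity of $\mathbb{F}$-martingales---and controlling the behaviour on $\{G=0\}$ through the random time $R$, which is exactly what forces the careful distinction between $G$, $G_-$ and the various jump terms throughout the computation.
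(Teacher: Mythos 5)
You should first note that the paper you are being compared against contains \emph{no proof} of this statement: it is recalled verbatim from the companion work \cite{choullidaveloose2018} (their Theorem 2.19), so the real benchmark is the optional-projection machinery developed there. Your preparatory steps are correct and follow the classical route: the identity $\E[h_\tau I_{\{\tau>t\}}\mid{\cal F}_t]=M^h_t-(h\is D^{o,\mathbb F})_t$, the resulting explicit formula for $H$ (which is exactly \eqref{DecompHexplicit} in the paper), the algebraic identity $h-Y=(hG-M^h+h\is D^{o,\mathbb F})/G$ for $Y=(M^h-h\is D^{o,\mathbb F})/G$, and the product-rule split $H-H_0=I_{\Rbrack 0,\tau\Rbrack}\is Y+(h-Y)\is D$ are all sound.

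The genuine gap is in your third step, where ``drift-correction bookkeeping'' is supposed to finish the proof; two essential ideas are missing there. First, the third term of \eqref{Gdecomposition} is the integral of an $\mathbb F$-\emph{optional}, non-predictable integrand against the finite-variation $\mathbb G$-martingale $N^{\mathbb G}$, and such integrals are \emph{not} local martingales in general: a semimartingale identity obtained by cancelling finite-variation terms cannot, by itself, establish that this term (hence the decomposition) has the claimed martingale property. What saves it is the specific structure of $N^{\mathbb G}$ in \eqref{processNG}, whose compensating part is the \emph{dual optional} projection scaled by ${\widetilde G}^{-1}$ rather than the predictable compensator, so that for bounded $\mathbb F$-optional $K$ one gets
\begin{equation*}
\E\bigl[(K\is N^{\mathbb G})_\infty\bigr]
=\E\bigl[K_\tau I_{\{\tau<+\infty\}}\bigr]
-\E\Bigl[\int_0^\infty {}^{o,\mathbb F}\bigl(I_{\Lbrack 0,\tau\Rbrack}\bigr)_u\,K_u\,{\widetilde G}_u^{-1}\,dD_u^{o,\mathbb F}\Bigr]=0,
\end{equation*}
because ${}^{o,\mathbb F}(I_{\Lbrack 0,\tau\Rbrack})=\widetilde G$; the conditional version of this projection/duality computation is precisely the ``pure default martingale of type one'' result (Theorem 2.13 of \cite{choullidaveloose2018}) that the present paper invokes, and it appears nowhere in your sketch. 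Second, without avoidance or continuity hypotheses the processes $\widehat{M^h}$ and $\widehat m$ are not merely ``stopped martingale minus bracket correction'': by \eqref{processMhat} they carry the additional terms $I_{\Rbrack 0,\tau\Rbrack}\is\bigl(\Delta M_{\widetilde R}I_{\Lbrack\widetilde R,+\infty\Lbrack}\bigr)^{p,\mathbb F}$ supported on $\{\widetilde G=0<G_-\}$. Your It\^o computation will produce leftover finite-variation contributions exactly at $\widetilde R$, and showing that they are absorbed by these correction terms is where the full generality of the theorem is earned; a proof that ignores them only establishes \eqref{Gdecomposition} under an avoidance-type assumption, i.e.\ in the restricted settings of \cite{blanchetjeanblanc04} and \cite{barbarin09} that this result is designed to surpass. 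The same projection arguments (and not the mere fact that $N^{\mathbb G}$ is a martingale) are also what your orthogonality claim requires, since $[\widehat{M^h},K\is N^{\mathbb G}]$ is a pure-jump process whose martingale property must be checked through the jump structure at $\tau$ and $\widetilde R$.
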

 The first term in the RHS of \eqref{Gdecomposition} represents the `pure' financial risk, while the second term represents the risk resulting from correlation between the market model and the death time $\tau$. Both the first and the second terms are expressed in terms of  $\mathbb G$-local martingales derived in \cite{aksamitetal15}. Below, we recall this class of local martingales.
\begin{theorem} \cite[Theorem 3]{aksamitetal15} For any $\mathbb F$-local martingale $M$, the following
	\begin{equation} \label{processMhat}
	\widehat{M} := M^\tau -{\widetilde{G}}^{-1} I_{\Rbrack 0,\tau\Rbrack}\is [M,m] +
	I_{\Rbrack 0,\tau\Rbrack}\is\Big(\Delta M_{\widetilde R} I_{\Lbrack\widetilde R,+\infty\Lbrack}\Big)^{p,\mathbb F},
	\end{equation}
	is a $\mathbb G$-local martingale. Here
	\begin{equation} \label{timetildeR}
	R:= \inf \{t\ge 0: G_t =0 \},\ \ \ \  \mbox{and}\ \ \ \ \ \widetilde R := R_{\{\widetilde{G}_R=0<G_{R-}\}} =
	R I_{\{\widetilde{G}_R=0<G_{R-}\}} + \infty I_{\{\widetilde{G}_R=0<G_{R-}\}^c}.
	\end{equation}
\end{theorem}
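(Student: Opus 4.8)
The plan is to establish the statement by a localization reduction followed by transferring the $\mathbb{G}$-martingale property into an $\mathbb{F}$-computation carried on the stochastic interval $\Rbrack 0,\tau\Rbrack$.

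First I would reduce to the case where $M$ is a bounded $\mathbb{F}$-martingale. Since every $\mathbb{F}$-stopping time is a $\mathbb{G}$-stopping time, an $\mathbb{F}$-localizing sequence $(T_n)_n$ for $M$ also localizes the right-hand side of \eqref{processMhat}, and one checks directly from the formula that $\widehat{M^{T_n}}=\widehat{M}^{T_n}$; hence it suffices to treat $M$ bounded. In that case $m$ is $\mathrm{BMO}$, the optional bracket $[M,m]$ has $\mathbb{F}$-integrable variation, and since $\widehat{M}$ is carried by $\Rbrack 0,\tau\Rbrack$, on which $\widetilde G$ is strictly positive outside the single $\mathbb{F}$-predictable time $\widetilde R$, each of the three terms in \eqref{processMhat} is a well-defined integrable process and $\widehat{M}$ is an integrable $\mathbb{G}$-semimartingale.

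The main idea is then that, after this reduction, $\widehat{M}$ is a $\mathbb{G}$-local martingale precisely when its $\mathbb{G}$-predictable finite-variation part vanishes, and this drift can be identified by testing: it suffices to show $\E\bigl[(K\is\widehat{M})_\infty\bigr]=0$ for every bounded $\mathbb{G}$-predictable $K$. Such $K$ may be taken $\mathbb{F}$-predictable, because $\mathcal{P}(\mathbb{G})$ and $\mathcal{P}(\mathbb{F})$ agree on $\Rbrack 0,\tau\Rbrack$. Plugging \eqref{processMhat} into this expectation produces three contributions, one from $M^\tau$, one from $\widetilde G^{-1}I_{\Rbrack 0,\tau\Rbrack}\is[M,m]$ and one from the $\widetilde R$-correction. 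Each is converted from a $\mathbb{G}$-expectation into an $\mathbb{F}$-expectation using the dual projection formulae together with the identities $G={}^{o,\mathbb{F}}(I_{\Lbrack 0,\tau\Lbrack})$, $\widetilde G={}^{o,\mathbb{F}}(I_{\Lbrack 0,\tau\Rbrack})$ and $m=G+D^{o,\mathbb{F}}$ from \eqref{GGtildem}; in particular the dual $\mathbb{F}$-optional projection controlling $I_{\Rbrack 0,\tau\Rbrack}$ is governed by $\widetilde G$, which is exactly what makes $\widetilde G^{-1}$ the correct normalisation.

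The heart of the argument is the cancellation of these three $\mathbb{F}$-expectations. Computing the $\mathbb{G}$-drift of $M^\tau$ by the Yor--Jeulin method, and using the dual-predictable-projection relation between $[M,m]$ and its $\mathbb{F}$-compensator together with the jump identities $\widetilde G=G_-+\Delta m$ and $\Delta[M,m]=\Delta M\,\Delta m$, reproduces exactly the $\widetilde G^{-1}[M,m]$-term, up to the contribution of the predictable time $\widetilde R$. I expect the main obstacle to be precisely this degenerate set $\{\widetilde G=0<G_-\}=\Lbrack\widetilde R\Rbrack$: away from it the classical drift correction applies verbatim, but at $\widetilde R$ one cannot divide by $\widetilde G=0$, and the term $I_{\Rbrack 0,\tau\Rbrack}\is(\Delta M_{\widetilde R}I_{\Lbrack\widetilde R,+\infty\Lbrack})^{p,\mathbb{F}}$ is designed to supply exactly the missing mass there. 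Isolating $\widetilde R$ and evaluating the dual predictable projection on $\Lbrack\widetilde R\Rbrack$ should show that this correction restores the vanishing of the drift, completing the proof; a secondary technical care is then only needed to justify the projection manipulations near $R$, which is handled by the boundedness reduction and the built-in stopping.
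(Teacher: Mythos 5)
You should first be aware that the paper does not prove this statement: it is imported verbatim from \cite[Theorem 3]{aksamitetal15}, and the proof in that reference (see also \cite{aksamit2017}) runs through Jeulin's classical decomposition (namely that $M^\tau-G_{-}^{-1}I_{\Rbrack 0,\tau\Rbrack}\is\langle M,m\rangle^{\mathbb F}$ is a $\mathbb G$-local martingale) combined with the $\mathbb G$-compensator formula reproduced here as Lemma \ref{GFCompensators}, plus the jump algebra on $\Lbrack\widetilde R\Rbrack$. Your duality/testing route is a genuinely different organization and could in principle be made to work, but as written it has concrete errors. First, the localization step is invalid: a general $\mathbb F$-local martingale is \emph{not} locally bounded (its jumps may be unbounded), so a localizing sequence only produces uniformly integrable martingales; the correct reduction is to $M$ locally in $H^1$ (so that Fefferman's inequality with $m\in{\rm BMO}$ still yields that $[M,m]$ has integrable variation), or to split off the large-jump finite-variation part and treat it separately. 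Second, and more seriously, $\widetilde R$ is \emph{not} an $\mathbb F$-predictable time in general, and this is not a harmless slip: if $\widetilde R$ were predictable, then $\bigl(\Delta M_{\widetilde R}I_{\Lbrack\widetilde R,+\infty\Lbrack}\bigr)^{p,\mathbb F}\equiv 0$ for every $\mathbb F$-local martingale $M$ (the jump of a local martingale at a predictable time has vanishing compensator), so the third term of \eqref{processMhat} would be identically zero. The theorem is non-trivial precisely because $\widetilde R$ can be totally inaccessible, in which case that compensator is non-zero and is exactly what restores the martingale property; an argument predicated on predictability of $\widetilde R$ covers only the degenerate case in which the correction term is superfluous.

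Third, the step you yourself call the heart of the argument is asserted ("should show") rather than carried out, and the ingredient that makes it work is absent from your outline. Testing against a bounded $\mathbb F$-predictable $k$, the first two terms give $\E\bigl[(k\is M)_\tau\bigr]=\E\bigl[(k\is [M,m])_\infty\bigr]$ and, via the optional projection ${}^{o,\mathbb F}(I_{\Rbrack 0,\tau\Rbrack})=\widetilde G$, $\E\bigl[(k\,I_{\{\widetilde G>0\}}\is [M,m])_\infty\bigr]$; their difference is the mass of $k\is[M,m]$ on $\{\widetilde G=0\}$, which equals $-\E\bigl[k_{\widetilde R}\,G_{\widetilde R-}\,\Delta M_{\widetilde R}I_{\{\widetilde R<\infty\}}\bigr]$ using $\Delta [M,m]_{\widetilde R}=\Delta M_{\widetilde R}\Delta m_{\widetilde R}=-G_{\widetilde R-}\Delta M_{\widetilde R}$ \emph{together with} the fact that $d[M,m]$ puts no mass on $\{\widetilde G=0=G_{-}\}$ (there $\Delta m=\widetilde G-G_{-}=0$ and $m^c$ is constant after $R$) --- a verification you never mention. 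The third term of \eqref{processMhat} must then produce exactly $+\E\bigl[k_{\widetilde R}\,G_{\widetilde R-}\,\Delta M_{\widetilde R}I_{\{\widetilde R<\infty\}}\bigr]$, and it does so only through the \emph{predictable} projection identity ${}^{p,\mathbb F}(I_{\Rbrack 0,\tau\Rbrack})=G_{-}$, whose factor $G_{-}$ is what matches the $G_{\widetilde R-}$ coming from $\Delta m_{\widetilde R}$. Your outline invokes only the optional projection $\widetilde G$, so the claimed cancellation cannot even be set up; without it, nothing beyond Jeulin's classical decomposition has been established.
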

The third term in the RHS of \eqref{Gdecomposition} models the {\it pure mortality risk of type one} (see \cite[Theorem 2.13]{choullidaveloose2018} for details) where the process $N^{\mathbb G}$ is given by  
	\begin{equation} \label{processNG}
	N^{\mathbb G}:=D - \widetilde{G}^{-1} I_{\Rbrack 0,\tau\Rbrack}\is D^{o,\mathbb  F},
	\end{equation}
 which is a $\mathbb G$-martingale with integrable variation. 
 This pure mortality risk is called a pure default martingale in \cite[Definition 2.2]{choullidaveloose2018}, and it quantifies the uncertainty in $\tau$ (or equivalently in  $D$ defined in (\ref{ProcessDandG})) that cannot be seen through $\mathbb F$. For other types of pure mortality risks (local martingale) and for further details about pure mortality (or default) local martingales, we refer the reader to \cite{choullidaveloose2018}. Our decomposition \eqref{Gdecomposition} extends \cite{blanchetjeanblanc04} to an arbitrary general pair $(\mathbb F, \tau)$ and to the case where $h$ is $\mathbb F$-optional, as is the case for some examples in \cite{gerbershiu13}.



 
\subsection{The quadratic risk-minimizing method}\label{subsection4Riskmini}
In this subsection, we quickly review the main ideas of risk-minimizing strategies, a concept that
 was introduced in \cite{follmersondermann86}  for financial contingent claims
 and extended in \cite{moller01} for insurance payment processes. Note that \cite{follmersondermann86} assumed that  the discounted
 risky asset is a square-integrable martingale under
    the original measure $P$. In \cite{schweizer01}, the results are proved under the  weaker assumption
     that $X$ is only a local $P$-martingale, that does not need to be locally square integrable. Throughout this subsection, 
     we consider given an $\mathbb H$-adapted process $X$ with values in  $\mathbb{R}^d$ representing the discounted assets' price process.  $\mathbb H$ is any filtration satisfying the usual condition (usually $\mathbb H\in\{\mathbb F, \mathbb G\}$). Throughout the paper, we denote by $x^{tr}y$ the inner product of $x$ and $y$, for any $x,y\in{\mathbb R}^d$. 
\begin{definition}\label{riskmini} Suppose that $X\in {\cal M}_{loc}(\mathbb H)$.

{\rm{(a)}} An $0$-admissible  trading strategy is any pair $\rho:=(\xi,\eta)$ where $\xi\in L^2(X^T)$ with $L^2(X^T)$
 the space of all $\mathbb{R}^d$-valued predictable processes $\xi$ such that
\[
\|\xi \|_{L^2(X)}:=\left(\mathbb{E}\left[\int_0^T\xi^{tr}_ud[X]_u\xi_u\right]\right)^{1/2} <\infty ,
\]
and $\eta$  is a real-valued adapted process such that the discounted value process
 \begin{equation}\label{Valurprocess}
  V(\rho)=\xi^{tr} X^T+\eta\ \ \mbox{is  right-continuous and square-integrable, and }  V_T(\rho)=0,\ \ \ \  P\mbox{-a.s.}.
 \end{equation}
 {\rm{(b)}} $\rho$ is called risk-minimizing  for the square integrable
  $\mathbb{H}$-adapted payment process $A=(A_t)_{t\geq 0}$, if it is  an $0$-admissible strategy and for any $0$-admissible strategy $\tilde{\rho}$, we have
  \begin{equation}\label{risminimistrategy}
  R_{t\wedge T}(\rho) \le R_{t\wedge T}(\tilde{\rho})\quad P\mbox{-a.s.\ for every } t\geq 0,
\end{equation}
where
\[
 R_t(\rho) := \E[(C_T(\rho) - C_{t\wedge T}(\rho))^2 \mid {\cal F}_t]\ \ \mbox{and}\ \
C(\rho) := V(\rho) - \xi \is X^T + A^T.
\]
\end{definition}

It is known in the literature that the Galtchouk-Kunita-Watanabe decomposition  (called hereafter GKW decomposition) plays a central role in determining the risk-minimizing strategy. 

\begin{theorem}\label{GKWdecomposition} Let $M, N\in {\cal M}_{loc}^2(\mathbb H)$. Then there exist
 $\theta\in L^2_{loc}(N)$ and $L\in {\cal M}_{0,loc}^2(\mathbb H)$ such that
\begin{equation}\label{GKW}
M=M_0+\theta\is N+L,\ \ \ \ \mbox{and}\ \ \ \langle {N},L\rangle^{\mathbb H}\equiv 0.
\end{equation}
Furthermore, $M\in {\cal M}^2(\mathbb H)$ if and only if $M_0\in L^2({\cal F}_0,P)$, $\theta\is N\in {\cal M}_0^2(\mathbb H)$ and $L\in {\cal M}_0^2(\mathbb H)$.
\end{theorem}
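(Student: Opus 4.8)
The plan is to obtain \eqref{GKW} by orthogonally projecting $M-M_0$ onto the stable subspace of square-integrable martingales generated by $N$, i.e. onto the integrals $\theta\is N$. Since $M,N\in{\cal M}^2_{loc}(\mathbb H)$ and $\mathbb H$ satisfies the usual conditions, the predictable covariation $\langle N,M\rangle^{\mathbb H}$ and the predictable quadratic variation $\langle N\rangle^{\mathbb H}:=\langle N,N\rangle^{\mathbb H}$ exist as $\mathbb H$-predictable processes of locally integrable variation, characterised by $NM-\langle N,M\rangle^{\mathbb H}\in{\cal M}_{0,loc}(\mathbb H)$, and similarly for $\langle N\rangle^{\mathbb H}$. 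The integrand $\theta$ we seek must force $\langle N,L\rangle^{\mathbb H}=0$; since $\langle N,\theta\is N\rangle^{\mathbb H}=\theta\is\langle N\rangle^{\mathbb H}$, this is exactly the requirement
\begin{equation*}
\langle N,M\rangle^{\mathbb H}=\theta\is\langle N\rangle^{\mathbb H}.
\end{equation*}

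First I would produce such a $\theta$. By the Kunita--Watanabe inequality the total variation of $\langle N,M\rangle^{\mathbb H}$ is dominated by $(\langle N\rangle^{\mathbb H})^{1/2}(\langle M\rangle^{\mathbb H})^{1/2}$; in particular the signed measure generated by $\langle N,M\rangle^{\mathbb H}$ on the predictable $\sigma$-field ${\cal P}(\mathbb H)$ is absolutely continuous with respect to that generated by $\langle N\rangle^{\mathbb H}$, and the predictable Radon--Nikodym theorem yields an $\mathbb H$-predictable $\theta$ with $\langle N,M\rangle^{\mathbb H}=\theta\is\langle N\rangle^{\mathbb H}$. The differential form of Kunita--Watanabe gives $\theta^2\is\langle N\rangle^{\mathbb H}\le\langle M\rangle^{\mathbb H}$ as increasing processes; since $\langle M\rangle^{\mathbb H}\in{\cal A}_{loc}(\mathbb H)$, this shows $\theta^2\is\langle N\rangle^{\mathbb H}\in{\cal A}_{loc}(\mathbb H)$, i.e. $\theta\in L^2_{loc}(N)$, so that $\theta\is N\in{\cal M}^2_{0,loc}(\mathbb H)$.

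Next I would set $L:=M-M_0-\theta\is N$, which lies in ${\cal M}^2_{0,loc}(\mathbb H)$ as a difference of locally square-integrable martingales null at $0$. By bilinearity of the predictable covariation together with $\langle N,\theta\is N\rangle^{\mathbb H}=\theta\is\langle N\rangle^{\mathbb H}$ one gets $\langle N,L\rangle^{\mathbb H}=\langle N,M\rangle^{\mathbb H}-\theta\is\langle N\rangle^{\mathbb H}=0$, which is \eqref{GKW}; the same identity shows $\theta$ is determined up to a $d\langle N\rangle^{\mathbb H}\otimes dP$-null set. For the \emph{furthermore} part I would use this orthogonality to split the energy: since $\langle \theta\is N,L\rangle^{\mathbb H}=\theta\is\langle N,L\rangle^{\mathbb H}=0$, expanding $\langle M\rangle^{\mathbb H}=\langle\theta\is N+L\rangle^{\mathbb H}$ yields
\begin{equation*}
\langle M\rangle^{\mathbb H}=\theta^2\is\langle N\rangle^{\mathbb H}+\langle L\rangle^{\mathbb H}.
\end{equation*}
Taking expectations at $\infty$ and invoking $\E[M_\infty^2]=\E[M_0^2]+\E[\langle M\rangle^{\mathbb H}_\infty]$ (and the analogous identities for $\theta\is N$ and $L$, both null at $0$), all three summands are nonnegative, so $\E[M_\infty^2]<\infty$ holds precisely when each of $M_0\in L^2({\cal F}_0,P)$, $\E[(\theta^2\is\langle N\rangle^{\mathbb H})_\infty]<\infty$ and $\E[\langle L\rangle^{\mathbb H}_\infty]<\infty$ holds; these are exactly $M\in{\cal M}^2(\mathbb H)$, $M_0\in L^2({\cal F}_0,P)$, $\theta\is N\in{\cal M}^2_0(\mathbb H)$ and $L\in{\cal M}^2_0(\mathbb H)$.

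The main obstacle is the second step: establishing the absolute continuity $\langle N,M\rangle^{\mathbb H}\ll\langle N\rangle^{\mathbb H}$ and extracting a genuinely \emph{predictable} density $\theta$ via Kunita--Watanabe, then controlling it so that $\theta\in L^2_{loc}(N)$. The remaining manipulations are formal consequences of the bilinearity of $\langle\cdot,\cdot\rangle^{\mathbb H}$; the only further care required is in passing between local and genuine $L^2$ statements, which I would handle by a single localizing sequence of $\mathbb H$-stopping times reducing $M$, $N$, $\theta\is N$ and $L$ to ${\cal M}^2(\mathbb H)$ simultaneously.
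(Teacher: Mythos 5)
The first thing to note is that the paper offers no proof of this statement for you to be compared against: Theorem~\ref{GKWdecomposition} is quoted as a known result, with the reader referred to \cite{anselstricker92,choullistricker96} ``and the references therein'', and the paper then moves straight on to Theorem~\ref{RiskMini}. Your argument is correct and is essentially the classical one contained in those references: the Kunita--Watanabe inequality gives absolute continuity of $d\langle N,M\rangle^{\mathbb H}$ with respect to $d\langle N\rangle^{\mathbb H}$ on the predictable $\sigma$-field, the predictable Radon--Nikodym theorem produces $\theta$, the domination $\theta^2\is\langle N\rangle^{\mathbb H}\le\langle M\rangle^{\mathbb H}$ places $\theta$ in $L^2_{\loc}(N)$, and $L:=M-M_0-\theta\is N$ is then orthogonal to $N$ by bilinearity; the ``furthermore'' equivalence follows from the energy identity $\langle M\rangle^{\mathbb H}=\theta^2\is\langle N\rangle^{\mathbb H}+\langle L\rangle^{\mathbb H}$ together with the standard fact that a locally square-integrable local martingale $X$ with $X_0\in L^2$ belongs to ${\cal M}^2(\mathbb H)$ if and only if $\E[\langle X\rangle^{\mathbb H}_\infty]<\infty$, your single localizing sequence handling the passage between local and global statements. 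One caveat is worth flagging: your construction is written for scalar $N$, whereas the paper invokes the theorem for the pair $(M^h,S)$ with $S$ an $\mathbb{R}^d$-valued locally square-integrable martingale (proof of Theorem~\ref{riskMininStrategy}, and again in Section~\ref{SectionSecuritisation}). In the vector case the Radon--Nikodym step must be upgraded: one writes $d\langle N^i,N^j\rangle^{\mathbb H}=c^{ij}\,dA$ and $d\langle N^i,M\rangle^{\mathbb H}=b^i\,dA$ for a common scalar increasing predictable process $A$, obtains the vector integrand $\theta$ through a measurable pseudo-inverse of the matrix $(c^{ij})$, and orthogonality becomes $\langle N^i,L\rangle^{\mathbb H}\equiv 0$ for every component $i$. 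That extension is routine but is a genuinely additional step which your write-up, as it stands, does not cover.
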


For more about GKW decomposition, we refer the reader
 to \cite{anselstricker92, choullistricker96},
  and the references therein. The following theorem was proved for a single payoff in \cite{schweizer01}, and was extended to payment processes in \cite{moller01}.

\begin{theorem}\label{RiskMini}
Suppose that $X\in {\cal M}_{loc}(\mathbb H)$, and let $A=(A_t)_{t\geq 0}$ be the payment process that is square integrable. Then the following holds.\\
{\rm{(a)}} There exists a unique risk-minimizing strategy $\rho^*=(\xi^*,\eta^*)$  for $A$ given by
\begin{equation}\label{strategies}
\xi^*:=\xi^A\ \ \ \ \mbox{and}\ \ \ \eta^*_t:=\E[A_T-A_{t\wedge T}\mid  {\cal H}_t]-{\xi^*_t}^{tr} X_{t\wedge T},
\end{equation}
where $(\xi^A, L^A)=(\xi^A I_{\Lbrack0,T\Rbrack}, (L^A)^T)$ is the pair resulting from the GKW decomposition of $\E[A_T\mid {\cal F}_t]$ with respect
 to $X$ with $\xi^A\in L^2(X^T)$ and $L^A\in \mathcal{M}^2_0(\mathbb H)$ satisfying  $\langle L^A, \theta \is X\rangle\equiv 0$, for all $ \theta\in L^2(X)$.\\
{\rm{(b)}} The remaining (undiversified) risk is  $L^A$, while the optimal cost, risk and value processes are 
 \begin{equation}\label{Cost}
 C_t(\rho^*) = \E[A_T\mid {\cal H}_0]  + L^{A}_t ,\quad
R_t(\rho^*) = \E[(L^{A}_T- L^{A}_t)^2\mid {\cal H}_t], \quad \mbox{and}\quad V_t(\rho^*)=\E[A_T-A_{t\wedge T}\mid  {\cal H}_t].
\end{equation}
\end{theorem}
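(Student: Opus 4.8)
The plan is to follow the classical Föllmer--Sondermann argument in the form extended to payment processes by \cite{moller01, schweizer01}, reducing everything to an orthogonal (bias--variance) splitting of the quadratic risk $R_t(\rho)$ into a piece governed solely by the hedging component $\xi$ and a piece governed by the value process (hence by $\eta$). First I would record the structural facts that make the argument run: since $X\in\mathcal M_{loc}(\mathbb H)$ and $0$-admissibility forces $\xi\in L^2(X^T)$, the integral $\xi\is X^T$ is a square-integrable $\mathbb H$-martingale; combined with $V_T(\rho)=0$ this gives $C_T(\rho)=A_T-(\xi\is X^T)_T$, and $A_T\in L^2$ guarantees that $\widetilde C_t:=\E[C_T(\rho)\mid\mathcal H_t]=\E[A_T\mid\mathcal H_t]-(\xi\is X^T)_t$ is a well-defined square-integrable martingale.

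The central step is to introduce the intrinsic value martingale $t\mapsto\E[A_T\mid\mathcal H_t]\in\mathcal M^2(\mathbb H)$ and to apply the GKW decomposition of Theorem \ref{GKWdecomposition} with respect to $X$,
\begin{equation*}
\E[A_T\mid\mathcal H_t]=\E[A_T\mid\mathcal H_0]+(\xi^A\is X)_t+L^A_t,\qquad \langle L^A,\theta\is X\rangle^{\mathbb H}\equiv 0\ \ \text{for all }\theta\in L^2(X),
\end{equation*}
with $\xi^A\in L^2(X^T)$ and $L^A\in\mathcal M^2_0(\mathbb H)$. Adding and subtracting $\widetilde C_t$ inside $R_t(\rho)$ and conditioning on $\mathcal H_t$ kills the cross term, giving
\begin{equation*}
R_t(\rho)=\E\bigl[(C_T(\rho)-\widetilde C_t)^2\mid\mathcal H_t\bigr]+\bigl(\widetilde C_t-C_{t\wedge T}(\rho)\bigr)^2 .
\end{equation*}
Substituting the GKW representation into $\widetilde C_t$ and using the strong orthogonality of $L^A$ to every integral of $X$, the first term splits, for $t\le T$, into a purely $\xi$-dependent part and the irreducible part:
\begin{equation*}
\E\bigl[(C_T(\rho)-\widetilde C_t)^2\mid\mathcal H_t\bigr]=\E\bigl[\bigl(((\xi^A-\xi)\is X^T)_T-((\xi^A-\xi)\is X^T)_t\bigr)^2\mid\mathcal H_t\bigr]+\E\bigl[(L^A_T-L^A_t)^2\mid\mathcal H_t\bigr].
\end{equation*}

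Both nonnegative terms are then minimized essentially independently. The bias term satisfies $\widetilde C_t-C_{t\wedge T}(\rho)=\E[A_T-A_{t\wedge T}\mid\mathcal H_t]-V_t(\rho)$, which depends on $\eta$ only through $V(\rho)$ and is annihilated for every $t$ by setting $V_t(\rho)=\E[A_T-A_{t\wedge T}\mid\mathcal H_t]$, i.e. by the stated $\eta^*$; the $\xi$-dependent integral is annihilated simultaneously for all $t$ exactly when $\xi=\xi^A$ in $L^2(X^T)$, leaving the strategy-independent risk $\E[(L^A_T-L^A_t)^2\mid\mathcal H_t]$. This proves \eqref{risminimistrategy} for $\rho^*=(\xi^*,\eta^*)$ against every competitor, and uniqueness follows from the uniqueness of the GKW pair together with the strict positivity of the two quadratic terms whenever $\xi\neq\xi^A$ or $V\neq V(\rho^*)$. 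The formulas in part (b) are then read off: $C_t(\rho^*)=\E[A_T\mid\mathcal H_t]-(\xi^A\is X^T)_t=\E[A_T\mid\mathcal H_0]+L^A_t$, which is an $\mathbb H$-martingale (the mean-self-financing property), while $V_t(\rho^*)$ and $R_t(\rho^*)$ are exactly as displayed.

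I expect the main obstacle to be verification rather than insight: one must check that the candidate $\rho^*$ is genuinely $0$-admissible --- in particular that $V(\rho^*)$ has a right-continuous, square-integrable version with $V_T(\rho^*)=0$ --- which rests on $A_T\in L^2$ and on selecting the càdlàg version of $\E[A_T\mid\mathcal H_\cdot]$, and that all integrability survives the stopping at $T$ so that the orthogonality relation $\langle L^A,\theta\is X\rangle^{\mathbb H}\equiv0$ may legitimately be invoked with $\theta=\xi^A-\xi$. The only genuinely new bookkeeping relative to the single-payoff case is the running term $A_{t\wedge T}$ inside both $C$ and $V$; one must track it carefully to confirm that the two minimizations decouple at every $t$ at once.
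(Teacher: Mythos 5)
Your proof is correct. Note, however, that the paper never proves Theorem \ref{RiskMini} at all --- it quotes the result from \cite{schweizer01} (single payoff) and \cite{moller01} (payment processes) --- and your argument is exactly the classical projection proof from those references: GKW decomposition of the intrinsic value martingale $\E[A_T\mid{\cal H}_t]$, the orthogonal splitting of $R_t(\rho)$ into a $\xi$-term, the irreducible term $\E[(L^A_T-L^A_t)^2\mid{\cal H}_t]$, and the bias term killed by the stated $\eta^*$, so you have in effect reconstructed the standard argument the paper implicitly relies on.
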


 \noindent The next two sections contain the main two contributions of this paper and deal with hedging mortality liabilities {\it \`a la} F\"ollmer-Sondermann.

\section{Hedging mortality risk without securitization} \label{secRiskMinimisationFG}

In this section, we hedge the mortality liabilities without mortality securitization. In this context, our aim
lies in quantifying -as explicit as possible- the effect of mortality uncertainty on the risk-minimizing strategy.
 This will be achieved by determining the $\mathbb G$-optimal strategy in terms of $\mathbb F$-strategies for a large class of mortality contracts. This section contains three subsections. The first subsection deals with the general setting, while the second and third subsections illustrates further the obtained results in the first subsection on particular cases of mortality liabilities. {\bf Throughout the rest of the paper, we consider a  given finite time horizon $T>0$}.


\subsection{\(\mathbb G\)-Optimal strategy in terms of \(\mathbb F\)-optimal strategies: The general formula}\label{Subsection4results}

This subsection considers a portfolio consisting of life insurance
 liabilities depending on the random time of death \(\tau\) of a single insured.
  For the sake of simplicity, we assume that the policyholder of a contract is the insured itself.
   In the financial market, there is a risk-free asset and a multidimensional risky asset at hand.
    The price of the risk-free asset follows a strictly positive, continuous process of finite variation, and the risky asset follows
    a real-valued RCLL  \(\mathbb F\)-adapted stochastic process. The discounted value of the risky asset is denoted by \(S \). 
    In order to reach our goal of expressing the \(\mathbb G\)-optimal strategy in terms of \(\mathbb F\)-strategies via the F\"ollmer-Sondermann method,  we need to assume on the pair $(S,\tau)$ the following conditions.
\begin{equation}\label{mainassumpOn(X,tau)}
 S\in{\cal M}_{\loc}^2(\mathbb F), \quad \langle S, m\rangle^{\mathbb F}\equiv 0,
 \quad  \mbox{and}\quad \{\Delta S\not=0\}\cap\{\widetilde G=0<G_{-}\}=\emptyset. \end{equation}

The assumption $\{\Delta S\not=0\}\cap\{\widetilde G=0<G_{-}\}=\emptyset$ guarantees the structure
   conditions for $\left(S^{\tau},\mathbb G\right)$, and hence the quadratic risk-minimization problem can have a solution for this model. This assumption holds when the hazard rate (i.e., $G>0$) exists for instance. The conditions $S\in{\cal M}_{\loc}^2(\mathbb F)$ and $\langle S, m\rangle^{\mathbb F}\equiv 0$  are dictated by the method used for risk minimization. In particular, the assumption $\langle S, m\rangle^{\mathbb F}\equiv 0$ implies that the risk $m$ cannot be hedged in the model $(S, \mathbb F)$. The risk-minimizing method is the quadratic
  hedging approach \`a la F\"ollmer and Sondermann, which requires that the discounted price processes for the underlying assets are locally
   square integrable martingales. In fact, under these two latter conditions,  both models $(S, P,\mathbb F)$ and $(S^{\tau},P,\mathbb G)$ are local martingales, and hence  the F\"ollmer-Sondermann
 method will be applied simultaneously for both models. These two assumptions in (\ref{mainassumpOn(X,tau)}) can be relaxed at the expenses of considering the quadratic hedging method considered in \cite{choullistricker98, monatstricker95}, and the references therein.
  For the risk-minimization framework of these papers, the assumption $\sup_{0\leq t\leq \cdot}\vert S_t\vert^2\in {\cal A}^+_{loc}(\mathbb F)$
  will suffice together with some ``no-arbitrage or viability" assumption on $(S,\tau)$, developed in \cite{choullideng14}.\\
  
  \noindent Our main results of this section are based essentially on the following.

\begin{lemma}\label{consequences4MainAssum}
Suppose that (\ref{mainassumpOn(X,tau)}) holds. Then the following assertions hold.\\
{\rm{(a)}} We have $S^{\tau}\in {\cal M}^2_{\loc}(\mathbb G)$.\\
{\rm{(b)}} The \(\mathbb  G\)-martingale $\widehat L$ is orthogonal to $S^{\tau}$, for any $L\in {\cal M}_{\loc}(\mathbb F)$ that is orthogonal to $S$.\\
 {\rm{(c)}} The process
 \begin{equation}\label{processU}
 U:=I_{\{G_{-}>0\}}\is [S,m]
 \end{equation}
 is an $\mathbb F$-locally square integrable local martingale. Thus, there exist $\varphi^{(m)}\in L^2_{\loc}(S,\mathbb F)$
  and $L^{(m)}\in {\cal M}^2_{0,\loc}(\mathbb F)$ orthogonal to $S$ such that
\begin{equation}\label{decompositionUandhatU}
U=\varphi^{(m)}\is S +L^{(m)},\ \ \ \ \mbox{and}\ \ \ \ \
 \Rbrack 0, \tau \Rbrack \subseteq \{G_- > 0\} \subseteq \{G_- + \varphi^{(m)} > 0 \}, \quad P\text{-a.s.}.
\end{equation}
 {\rm{(d)}} We have $ \widehat U=G_{-}{\widetilde G}^{-1}I_{\Rbrack 0,\tau\Rbrack}\is U$ and 
\begin{equation}\label{decompositionXhat}
(G_{-}+\varphi^{(m)})\is\widehat {S}=G_{-}\is S^{\tau}-\widehat{L^{(m)}}.
\end{equation}
\end{lemma}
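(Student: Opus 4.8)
The plan is to prove the four assertions in the order (c), (d), (a), (b), because the locally square--integrable $\mathbb F$--martingale $U$ and the explicit form of $\widehat U$ drive everything else. Two elementary observations are used throughout. First, since $\langle S,m\rangle^{\mathbb F}\equiv0$, the covariation $[S,m]$ equals its own compensated jump part, so $[S,m]\in{\cal M}_{\loc}(\mathbb F)$ and is purely discontinuous with $\Delta[S,m]=\Delta S\,\Delta m$. Second, the graph $\Lbrack\widetilde R\Rbrack$ is contained in $\{\widetilde G=0<G_-\}$, so the standing assumption $\{\Delta S\neq0\}\cap\{\widetilde G=0<G_-\}=\emptyset$ gives $\Delta S_{\widetilde R}=0$ on $\{\widetilde R<\infty\}$; I shall also use the standard inclusion $\Rbrack0,\tau\Rbrack\subseteq\{G_->0\}$.

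For (c), local square integrability of $U=I_{\{G_->0\}}\is[S,m]$ follows from $[S,m]\in{\cal M}_{\loc}(\mathbb F)$ together with $S\in{\cal M}^2_{\loc}(\mathbb F)$ and the BMO property of $m$ (Kunita--Watanabe, after a localisation reducing $[S]$ and $[m]$), and the decomposition $U=\varphi^{(m)}\is S+L^{(m)}$ is then Theorem~\ref{GKWdecomposition}. The delicate point --- and the \emph{main obstacle} of the whole lemma --- is the inclusion $\{G_->0\}\subseteq\{G_-+\varphi^{(m)}>0\}$. Using $\varphi^{(m)}\is\langle S\rangle^{\mathbb F}=\langle U,S\rangle^{\mathbb F}=I_{\{G_->0\}}\is\bigl(\sum(\Delta S)^2\Delta m\bigr)^{p,\mathbb F}$, pulling the predictable factor $G_-$ inside the compensator, and invoking $\widetilde G=G_-+\Delta m$, one obtains on $\{G_->0\}$
\begin{equation*}
(G_-+\varphi^{(m)})\is\langle S\rangle^{\mathbb F}=G_-\is\langle S^c\rangle^{\mathbb F}+\Bigl(\textstyle\sum(\Delta S)^2\,\widetilde G\Bigr)^{p,\mathbb F}.
\end{equation*}
Both terms on the right are nonnegative since $\widetilde G\ge0$; moreover $\Delta S=0$ on $\{\widetilde G=0<G_-\}$, so the right-hand side charges every point at which $d\langle S\rangle^{\mathbb F}$ charges, forcing $G_-+\varphi^{(m)}>0$ there, while off the support of $d\langle S\rangle^{\mathbb F}$ one takes $\varphi^{(m)}=0$ and the inclusion is trivial.

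For the first identity in (d) I apply the representation~\eqref{processMhat} to the $\mathbb F$--local martingale $U$: the $\widetilde R$--term drops because $\Delta U=I_{\{G_->0\}}\Delta S\,\Delta m$ vanishes at $\widetilde R$, and, $U$ being purely discontinuous, $[U,m]$ reduces to the sum of the common jumps $\Delta U\,\Delta m$; combined with $\widetilde G=G_-+\Delta m$ this turns $\widehat U=U^\tau-\widetilde G^{-1}I_{\Rbrack0,\tau\Rbrack}\is[U,m]$ into $G_-\widetilde G^{-1}I_{\Rbrack0,\tau\Rbrack}\is U$. Part (a) is then immediate: \eqref{processMhat} applied to $S$ (again with $\Delta S_{\widetilde R}=0$) gives $S^\tau=\widehat S+\widetilde G^{-1}I_{\Rbrack0,\tau\Rbrack}\is U=\widehat S+G_-^{-1}\is\widehat U$, a sum of two $\mathbb G$--local martingales since $G_-^{-1}I_{\Rbrack0,\tau\Rbrack}$ is locally bounded and $\mathbb G$--predictable; as $[S^\tau]=[S]^\tau$ is $\mathbb G$--locally integrable, $S^\tau\in{\cal M}^2_{\loc}(\mathbb G)$. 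The second identity in (d) follows by hatting $U=\varphi^{(m)}\is S+L^{(m)}$: the map $M\mapsto\widehat M$ commutes with integration of $\mathbb F$--predictable integrands, so $\widehat U=\varphi^{(m)}\is\widehat S+\widehat{L^{(m)}}$, and eliminating $\widehat U$ via $\widehat U=G_-\is(S^\tau-\widehat S)$ rearranges to $(G_-+\varphi^{(m)})\is\widehat S=G_-\is S^\tau-\widehat{L^{(m)}}$.

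Finally, for (b) I would show directly that $[\widehat L,S^\tau]$ is a $\mathbb G$--local martingale. Since $\langle L,S\rangle^{\mathbb F}\equiv0$, the bracket $[L,S]$ is itself an $\mathbb F$--local martingale, so $\widehat{[L,S]}$ is a $\mathbb G$--local martingale by~\eqref{processMhat}. Computing $[\widehat L,S^\tau]$ term by term from the representation of $\widehat L$ --- with $\Delta[L,S]=\Delta L\,\Delta S$, $[[L,S],m]=\sum\Delta L\,\Delta S\,\Delta m$, and $\Delta[L,S]_{\widetilde R}=0$ because $\Delta S_{\widetilde R}=0$ --- should collapse it to the single identity $[\widehat L,S^\tau]=\widehat{[L,S]}$, which yields the orthogonality at once. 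Besides the positivity in (c), the only genuinely technical step is verifying that the bracket against $S^\tau$ of the predictable $\widetilde R$--correction term in $\widehat L$ vanishes, which once more rests on $\Delta S$ being null on $\Lbrack\widetilde R\Rbrack$.
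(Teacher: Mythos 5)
Your parts (a), (c) and (d) are correct and are, in substance, the paper's own proof. In (c), your identity $(G_-+\varphi^{(m)})\is\langle S,S\rangle^{\mathbb F}=G_-\is\langle S^c,S^c\rangle^{\mathbb F}+\bigl(\sum(\Delta S)^2\,\widetilde G\bigr)^{p,\mathbb F}$ on $\{G_->0\}$ is exactly the paper's computation of $W^{p,\mathbb F}$ for $W:=\widetilde G I_{\{G_->0\}}\is[S,S]$, and your positivity argument (the right side dominates $d\langle S,S\rangle^{\mathbb F}$ on $\{G_->0\}$ because $G_->0$ handles the continuous part and $\Delta S I_{\{\widetilde G=0<G_-\}}\equiv0$ handles the jumps, after which one modifies $\varphi^{(m)}$ on a $d\langle S,S\rangle^{\mathbb F}$-null set) is an inlined, slightly informal version of the paper's auxiliary Lemma \ref{lemmaapp}. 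Your (d) coincides with the paper's Step 3. For (a) you use $S^\tau=\widehat S+G_-^{-1}I_{\Rbrack 0,\tau\Rbrack}\is\widehat U$ built from (\ref{processMhat}), where the paper simply quotes Jeulin's decomposition together with $\langle S,m\rangle^{\mathbb F}\equiv0$; both are valid, and you rightly add the local square integrability via $[S^\tau]=[S]^\tau$, which the paper leaves implicit.

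The genuine gap is in (b): the identity $[\widehat L,S^\tau]=\widehat{[L,S]}$ is false in general, and the reason you give for the vanishing of the correction-term bracket does not work. Setting $C:=I_{\Rbrack 0,\tau\Rbrack}\is\bigl(\Delta L_{\widetilde R}I_{\Lbrack\widetilde R,+\infty\Lbrack}\bigr)^{p,\mathbb F}$, the term-by-term computation gives, exactly as in the paper,
\begin{equation*}
[\widehat L,S^\tau]=G_-{\widetilde G}^{-1}I_{\Rbrack 0,\tau\Rbrack}\is[L,S]
+{}^{p,\mathbb F}\bigl(\Delta L I_{\Lbrack\widetilde R\Rbrack}\bigr)\is S^\tau
=\widehat{[L,S]}+[C,S^\tau],
\end{equation*}
where the second equality uses $\Delta[L,S]_{\widetilde R}=\Delta L_{\widetilde R}\Delta S_{\widetilde R}=0$. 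The hypothesis $\Delta S I_{\Lbrack\widetilde R\Rbrack}\equiv0$ kills the jump of $[L,S]$ at $\widetilde R$, but it does \emph{not} kill $[C,S^\tau]$: the dual predictable projection $(\Delta L_{\widetilde R}I_{\Lbrack\widetilde R,+\infty\Lbrack})^{p,\mathbb F}$ spreads the jump of $L$ at $\widetilde R$ over predictable times $s$ at which merely $P(\widetilde R=s\mid{\cal F}_{s-})$ is positive; the jump $\Delta C_s=I_{\{s\le\tau\}}\,{}^{p,\mathbb F}(\Delta LI_{\Lbrack\widetilde R\Rbrack})_s$ is ${\cal F}_{s-}$-measurable up to the $\mathbb G$-predictable factor $I_{\{s\le\tau\}}$, so it does not ``see'' whether $\widetilde R=s$ actually occurred. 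Since $\Rbrack 0,\tau\Rbrack\cap\Lbrack\widetilde R\Rbrack=\emptyset$ (as $\widetilde G>0$ on $\Rbrack 0,\tau\Rbrack$), the event $\{s\le\tau\}$, on which $S^\tau$ still moves and $\Delta S_s$ may be nonzero, is disjoint from $\{\widetilde R=s\}$, and the standing assumption says nothing there; hence $\Delta C\,\Delta S^\tau\not\equiv0$ in general. The repair is precisely the paper's argument: $C$ is $\mathbb G$-predictable with finite variation and $S^\tau\in{\cal M}_{\loc}(\mathbb G)$ by (a), so $[C,S^\tau]={}^{p,\mathbb F}(\Delta LI_{\Lbrack\widetilde R\Rbrack})\is S^\tau$ is a $\mathbb G$-local martingale (Yoeurp's lemma), while $\widehat{[L,S]}$ is a $\mathbb G$-local martingale because $[L,S]\in{\cal M}_{\loc}(\mathbb F)$, as you observed. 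Thus $[\widehat L,S^\tau]$ is a sum of two $\mathbb G$-local martingales and the orthogonality follows --- but as a sum, not because the second summand vanishes.
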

The proof of this lemma is postponed to Appendix \ref{AppendixproofLemma} for the sake of simple exposition. Below,
we state our main results  of this subsection.

\begin{theorem}\label{riskMininStrategy}
Suppose that (\ref{mainassumpOn(X,tau)}) holds, and let $h\in L^2\left({\cal O}(\mathbb F),P\otimes D\right)$.
Then the following hold.\\
{\rm{(a)}} The risk-minimizing strategy for the mortality claim $h_{\tau}$, at term \(T\) under the model $(S^{\tau},\mathbb G)$,
 is denoted by $\xi^{(h,\mathbb G)}$ and is given by
\begin{equation}\label{riskminG}
\xi^{(h,\mathbb G)}:=\xi^{(h,\mathbb F)}\big(G_{-}+\varphi^{(m)}\big)^{-1}I_{\Rbrack 0,\tau\Rbrack}.
\end{equation}
Here $\xi^{(h,\mathbb F)}$ is the risk-minimizing strategy under $(S,\mathbb F)$ for the claim
 $\E\Big[\int_0^\infty h_u  dD_u^{o,\mathbb F} \Big| {\cal F}_T \Big]$.\\
{\rm{(b)}} The remaining (undiversified) risk for the mortality claim $h_{\tau}$, at term \(T\) under
 the model $(S^{\tau},\mathbb G)$, is denoted by $L^{(h,\mathbb G)}$ and is given by
\begin{align}\label{riskminGremaining}
L^{(h,\mathbb G)}:=& {{- \xi^{(h,\mathbb  F)} G_{-}^{-1}}\over{G_{-}+\varphi^{(m)}}}I_{\Rbrack 0,\tau\Rbrack}\is\widehat {L^{(m)}}
+{{I_{\Lbrack 0,\tau\Rbrack}}\over{G_{-}}} \is\widehat{L^{(h,\mathbb F)}}
 - {{M^h_{-}-(h\is D^{o,\mathbb F})_{-}}\over{ G_{-}^2}} I_{\Rbrack 0,\tau\wedge T\Rbrack} \is\widehat{m}\nonumber\\
 & + {\frac{Gh-M^h+h\is D^{o,\mathbb F}}{G}} I_{\Rbrack 0,R\Lbrack}\is \left(N^{\mathbb G}\right)^T.
 \end{align}
Here $L^{(h,\mathbb F)}$ is the remaining (undiversified) risk under $(S,\mathbb F)$ for the claim
$\E\left[\int_0^{\infty}h_u dD^{o,\mathbb F}_u \Big| {\cal F}_T\right]$, while  \({M^h}\) and  (\(\varphi^{(m)}\), 
\(L^{(m)}\)) follow from {\eqref{processesMhandJ}} and (\ref{decompositionUandhatU}) respectively.\\
 {\rm{(c)}} The value of the risk-minimizing portfolio \(V(\rho^{*,\mathbb{G}})\) under $(S^{\tau},\mathbb G)$ is given by
\begin{equation} \label{ValueRiskMiniGeneral}
V(\rho^{*,\mathbb{G}}) =  h_\tau I_{\Lbrack\tau,+\infty\Lbrack}  +
	G^{-1}{\; }^{o,\mathbb F}\!\left(h_\tau  I_{\Rbrack 0,\tau\Lbrack}\right) I_{\Rbrack 0,\tau\Lbrack} -h_\tau I_{\Lbrack T\Rbrack}.
\end{equation}
\end{theorem}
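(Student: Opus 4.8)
The plan is to apply the risk-minimization result (Theorem \ref{RiskMini}) in the large filtration $\mathbb G$ to the claim $h_\tau$ at term $T$ with underlying price process $S^\tau$. By Lemma \ref{consequences4MainAssum}(a) we have $S^\tau\in\mathcal M^2_{\loc}(\mathbb G)$, so the framework applies, and parts (a)--(b) reduce to computing the GKW decomposition (Theorem \ref{GKWdecomposition}) of the $\mathbb G$-martingale $H=\E[h_\tau\mid\mathcal G_\cdot]$ with respect to $S^\tau$. The starting point is the optional martingale representation of $H-H_0$ from Theorem \ref{TheoRepresentation}; the whole task is to rewrite its right-hand side in the form $\xi^{(h,\mathbb G)}\is S^\tau+L^{(h,\mathbb G)}$ with $L^{(h,\mathbb G)}$ orthogonal to $S^\tau$, after stopping at $T$.

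First I would take the GKW decomposition of $M^h$ (defined in \eqref{processesMhandJ}) in $\mathbb F$ with respect to $S$, namely $M^h=M^h_0+\xi^{(h,\mathbb F)}\is S+L^{(h,\mathbb F)}$ with $L^{(h,\mathbb F)}$ orthogonal to $S$; this identifies the $\mathbb F$-strategy $\xi^{(h,\mathbb F)}$ and the $\mathbb F$-remaining risk $L^{(h,\mathbb F)}$ of parts (a)--(b). Next I would apply the map $M\mapsto\widehat M$ of \eqref{processMhat}. Because $\{\Delta S\neq0\}\cap\{\widetilde G=0<G_-\}=\emptyset$ in (\ref{mainassumpOn(X,tau)}), the predictable jump term in \eqref{processMhat} vanishes for $S$, so the hat operation is linear and commutes with the $\mathbb F$-predictable integrand $\xi^{(h,\mathbb F)}$; this yields $\widehat{M^h}=\xi^{(h,\mathbb F)}\is\widehat S+\widehat{L^{(h,\mathbb F)}}$. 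Substituting the identity $(G_-+\varphi^{(m)})\is\widehat S=G_-\is S^\tau-\widehat{L^{(m)}}$ from Lemma \ref{consequences4MainAssum}(d) --- legitimate since $G_-+\varphi^{(m)}>0$ on $\Rbrack0,\tau\Rbrack$ by \eqref{decompositionUandhatU} --- into the first term $G_-^{-1}I_{\Rbrack0,\tau\Rbrack}\is\widehat{M^h}$ of \eqref{Gdecomposition} and collecting the coefficient of $S^\tau$ gives exactly $\xi^{(h,\mathbb F)}(G_-+\varphi^{(m)})^{-1}I_{\Rbrack0,\tau\Rbrack}=\xi^{(h,\mathbb G)}$, proving (a); the leftover $\widehat{L^{(m)}}$-piece together with the $\widehat{L^{(h,\mathbb F)}}$-piece, the $\widehat m$-term and the $N^{\mathbb G}$-term of \eqref{Gdecomposition}, all stopped at $T$, assemble into the four summands of $L^{(h,\mathbb G)}$ in \eqref{riskminGremaining}.

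It then remains to check that $L^{(h,\mathbb G)}$ is orthogonal to $S^\tau$ and to invoke uniqueness in Theorem \ref{GKWdecomposition}. The $\widehat{L^{(m)}}$- and $\widehat{L^{(h,\mathbb F)}}$-terms are orthogonal to $S^\tau$ by Lemma \ref{consequences4MainAssum}(b), since $L^{(m)}$ and $L^{(h,\mathbb F)}$ are orthogonal to $S$ in $\mathbb F$; the $\widehat m$-term is orthogonal to $S^\tau$ by the same Lemma (b) applied to $m$, using $\langle S,m\rangle^{\mathbb F}\equiv0$ from (\ref{mainassumpOn(X,tau)}); and the $N^{\mathbb G}$-term is orthogonal to $S^\tau$ because pure-mortality martingales are orthogonal to all hat-martingales (Theorem \ref{TheoRepresentation}) and, via the hat formula for $S$ and Lemma \ref{consequences4MainAssum}(d), $S^\tau$ is itself a combination of $\widehat S$ and $\widehat U$, with $U$ as in \eqref{processU}. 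I expect this orthogonality bookkeeping --- together with the justification that the hat operation commutes with the predictable integrand $\xi^{(h,\mathbb F)}$ (where assumption (\ref{mainassumpOn(X,tau)}) is essential) and that the $T$-stopping is correctly propagated through the hat map --- to be the main obstacle; the rest is algebraic simplification.

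Finally, for (c) I would use Theorem \ref{RiskMini}(b), which gives $V_t(\rho^{*,\mathbb G})=\E[A_T-A_{t\wedge T}\mid\mathcal G_t]$ for the terminal claim $A_T=h_\tau$. Splitting on $\{\tau\le t\}$ and $\{\tau>t\}$ and using that $\E[h_\tau\mid\mathcal G_t]=h_\tau$ on $\{\tau\le t\}$ while $\E[h_\tau\mid\mathcal G_t]=G_t^{-1}\,{}^{o,\mathbb F}(h_\tau I_{\Rbrack0,\tau\Lbrack})_t$ on $\{\tau>t\}$ reproduces the process $h_\tau I_{\Lbrack\tau,+\infty\Lbrack}+G^{-1}\,{}^{o,\mathbb F}(h_\tau I_{\Rbrack0,\tau\Lbrack})I_{\Rbrack0,\tau\Lbrack}$, and the terminal adjustment $-h_\tau I_{\Lbrack T\Rbrack}$ enforces $V_T(\rho^{*,\mathbb G})=0$, yielding \eqref{ValueRiskMiniGeneral}.
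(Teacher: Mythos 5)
Your outline follows the same route as the paper's own proof (GKW decomposition of $M^h$ with respect to $S$ in $\mathbb F$, the hat map applied termwise using the third condition of (\ref{mainassumpOn(X,tau)}), substitution of (\ref{decompositionXhat}), regrouping, and part (c) via the value-process formula), but it has two genuine gaps, and they are exactly the points on which the paper spends most of its proof. The first gap: you never establish that $\int_0^\infty h_u\, dD_u^{o,\mathbb F}$ is square integrable, i.e.\ that $M^h\in{\cal M}^2(\mathbb F)$. This is not automatic from $h\in L^2({\cal O}(\mathbb F),P\otimes D)$; the paper's Step 1 proves it by pairing the claim against an arbitrary bounded ${\cal F}_\infty$-measurable $K$, using the defining property of the dual optional projection to replace $D^{o,\mathbb F}$ by $D$, and then Cauchy--Schwarz plus Doob's inequality. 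Without it, the GKW pair $(\xi^{(h,\mathbb F)},L^{(h,\mathbb F)})$ cannot be identified with the $(S,\mathbb F)$-risk-minimizing strategy and remaining risk of Theorem \ref{RiskMini} (which requires a square integrable payoff), and the $L^2$ bound $\E\bigl[[H,H]_\infty\bigr]<\infty$ needed below is unavailable.

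The second, more serious gap: your justification ``legitimate since $G_-+\varphi^{(m)}>0$ on $\Rbrack 0,\tau\Rbrack$'' does not deliver admissibility. Strict positivity does not prevent $G_-+\varphi^{(m)}$ from getting arbitrarily close to zero, so it is not clear that $\xi^{(h,\mathbb G)}=\xi^{(h,\mathbb F)}\bigl(G_-+\varphi^{(m)}\bigr)^{-1}I_{\Rbrack 0,\tau\Rbrack}$ belongs to $L^2(S^\tau,\mathbb G)$, nor that the companion integrand against $\widehat{L^{(m)}}$ is integrable and produces a $\mathbb G$-local martingale; until these facts are known, the identity $H=H_0+\xi^{(h,\mathbb G)}\is S^\tau+L^{(h,\mathbb G)}$ is not even well defined, and the term-by-term orthogonality checks you propose cannot start. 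The paper resolves this by localizing on the sets $\Sigma_n:=\bigl(\{\vert\xi^{(h,\mathbb F)}\vert\le n,\ G_-+\varphi^{(m)}\ge 1/n\}\cap\Rbrack 0,\tau\Rbrack\bigr)\cup\Rbrack\tau,+\infty\Lbrack$, on which the rearrangement is legitimate, writing $I_{\Sigma_n}\is H=\xi^{(n,\mathbb G)}\is S^\tau+L^{(n,\mathbb G)}$, bounding $\E\bigl[[\xi^{(n,\mathbb G)}\is S^\tau]_{\sigma(n,k)}\bigr]\le\E\bigl[[H,H]_\infty\bigr]<\infty$ (this is where Step 1 enters), and then using Fatou's lemma and $L^2$-convergence to conclude both that $\xi^{(h,\mathbb G)}\in L^2(S^\tau,\mathbb G)$ and that the limit $L^{(h,\mathbb G)}$ is orthogonal to $S^\tau$: in the paper the orthogonality falls out of the limiting procedure rather than from direct bookkeeping. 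Your argument for part (c) matches the paper's.
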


The life insurance liabilities where the claim \(h_\tau\) is determined by an optional process \(h\) appear, typically,
in the form of unit-linked insurance products.  In these type of term insurance contracts, the insurer pays an amount \(K_\tau\)
at the time of death \(\tau\), if the policyholder dies before or at the term of the contract \(T\), or equivalently the discounted
payoff is \(I_{\{\tau\le T\}} {K}_\tau\), where ${K}\in L^2({\cal O}(\mathbb F), P\otimes D)$. As a result, the payoff process
for this case is
\begin{equation}\label{h2}
h_t:= I_{\{t\le T\}} {K}_t,\ \ \ \ \  \mbox{where}\ \ \ K\in {\cal O}(\mathbb F),\
 \E\left[\vert K_{\tau}\vert ^2 I_{\{\tau<+\infty\}}\right]<+\infty.
\end{equation}
For this case, the pair $\left(\xi^{(h,\mathbb F)}, L^{(h,\mathbb F)}\right)$ in (\ref{riskminG})-(\ref{riskminGremaining})
 are the minimizing strategy and the remaining risk for the payoff $\int_0^T K_t dD^{o,\mathbb F}_t$ under the model
  $\left(S,\mathbb F\right)$, while the value process $V(\rho^{*\mathbb G})$ under the model $(S^{\tau},\mathbb G)$ becomes
\begin{equation}\label{ValueTI}
V(\rho^{*\mathbb G})=		G^{-1}\ ^{o,\mathbb F}\!\left(h_\tau  I_{\Lbrack 0,\tau\Lbrack}\right) I_{\Lbrack 0,\tau\Lbrack}.
\end{equation}
Hereto, by considering the payment process $A=K_{\tau}I_{\Lbrack \tau, +\infty\Lbrack}$, we derive $A_T=h_{\tau}$ and  for $t\in [0,T]$ 	
$$A_T-A_t=I_{\{\tau\leq T\}}K_{\tau}-I_{\{\tau \leq t\}}K_{\tau}=I_{\{t<\tau\}}I_{\{\tau\leq T\}}K_{\tau}=I_{\{t<\tau\}} h_{\tau}.$$
 Thus $V(\rho^{*\mathbb G})={\;}^{o,\mathbb G}(h_{\tau}I_{\Lbrack 0,\tau\Lbrack}) $ which is exactly the second term on the RHS of
  \eqref{ValueRiskMiniGeneral}. This extends the results of  \cite{biaginischreiber13}, where the authors assume that $K$ does not jump at $\tau$ (i.e., so that
$I_{\{\tau\leq t\}}K_{\tau}=I_{\{\tau\leq t\}}K_{\tau -} $ ), and hence they can treat it as a predictable case. More precisely they consider a life insurance payment process 
$A$ with $A_t=I_{\{\tau\leq t\}}\bar{A}_t$ with $\bar{A}$ a predictable process given by $\bar{A}_t=K_{t-}$ for $t\in ]0,T]$. 

\begin{proof}[Proof of Theorem \ref{riskMininStrategy}]:  By applying Theorem \ref{TheoRepresentation} to $H$, where  $H_t=\E[h_{\tau}\mid {\cal G}_t]$ is a $\mathbb G$-square integrable martingale, we get the decomposition \eqref{Gdecomposition}.

Thus, the main idea of the proof lies in applying the risk-minimization for the risk $M^h=\ ^{o,\mathbb F}\!\left(\int_0^{\infty}h_udD^{o,\mathbb F}_u\right)$ under the model $(S,\mathbb F)$,
 and using Lemma \ref{consequences4MainAssum} to get the explicit form of the $\mathbb G$-strategy. Notice that the risk $m$ cannot be hedged under the model $(S,\mathbb F)$ due to the second assumption in \eqref{mainassumpOn(X,tau)}. Once the strategy is described, we will prove that this strategy indeed belongs to $L^2(S^{\tau},\mathbb G )$ (i.e., it is `admissible') afterwards.
    This will follow from proving $M^h$ is a square integrable $\mathbb F$-martingale.
     This is the aim of the first step below, while the second step describes the  $\mathbb G$-strategy explicitly and locally on a sequence of subsets that increases to $\Omega\times [0,+\infty)$. The third (last) step proves the admissibility of the $\mathbb G$-strategy and ends the proof of the theorem.\\
{\bf Step 1)} Let $K\in L^{\infty}({\cal F}_{\infty},P)$, and put the $\mathbb F$-martingale $K_t:=\E[K \mid {\cal F}_t]$. Then, we derive
\begin{align*}
\E\left(K\int_0^{\infty} h_u dD^{o,\mathbb F}_u\right)&= \E\left(\int_0^{\infty} K_uh_u dD^{o,\mathbb F}_u\right)\leq
\E\left(\int_0^{\infty} \sup_{0\leq t\leq u}\vert K_t\vert \vert h_u\vert dD^{o,\mathbb F}_u\right)\\
&=  \E\left(\int_0^{\infty} \sup_{0\leq t\leq u}\vert K_t\vert \vert h_u\vert dD_u\right)=
\E\left(\sup_{0\leq t\leq \tau}\vert K_t\vert \vert h_{\tau}\vert I_{\{\tau<+\infty\}}\right)\\
&\leq  \sqrt{\E(h_{\tau}^2I_{\{\tau<+\infty\}})}\sqrt{\E\left(\sup_{t\geq 0}\vert K_t\vert^2\right)}\leq 2
\sqrt{\E(h_{\tau}^2I_{\{\tau<+\infty\}})}\sqrt{\E\left(\vert K\vert^2\right)},
\end{align*}
where the last inequality follows from Doob's inequality. Thus, this proves that
$\int_0^{\infty}h_u dD^{o,\mathbb F}_u$ is a square integrable random  variable for any $h\in L^2({\cal O}({\mathbb F}), P\otimes D)$. As a result, $M^h\in{\cal M}^2(\mathbb F)$. \\
{\bf Step 2)} By applying Theorem \ref{GKWdecomposition}  to  the pair $(M^h, S)$ of elements of $ {\cal M}_{\loc}^2(\mathbb F)$, we deduce the existence of the pair  $(\xi^{(h,\mathbb F)} ,L^{(h,\mathbb F)})$ such that 
\begin{equation}\label{mainequation100}
M^h = M_0^h + \xi^{(h,\mathbb F)} \is S + L^{(h,\mathbb F)}.
\end{equation}
Hence $\xi^{(h,\mathbb F)}$ is the risk-minimizing strategy and $L^{(h,\mathbb F)}$ is the remaining risk, under $(S,\mathbb F)$ for the 
claim $\E[\int_0^{\infty}h_u dD^{o,\mathbb F}_u\mid \mathcal{F}_T]$ at term \(T\). As a result, we get $\widehat{M^h} = \xi^{(h,\mathbb F)} \is{\widehat S} + \widehat{L^{(h,\mathbb F)}}$, and by inserting this into (\ref{Gdecomposition}) we obtain 
\begin{equation}\label{Gdecomposition2}
H = H_0 + {\frac{\xi^{(h,\mathbb F)}}{G_{-}}} I_{\Rbrack 0,\tau\Rbrack} \is\widehat{S}
 +{\frac{I_{\Rbrack 0,\tau\Rbrack}}{ G_{-}}} \is\widehat{L^{(h,\mathbb F)}}
   - {\frac{M^h_{-}-(h\is D^{o,\mathbb F})_{-}}{ G_{-}^2}}I_{\Rbrack 0,\tau\Rbrack}\is\widehat{m}
    +{\frac{hG-M^h+h\is D^{o,\mathbb F}}{ G}} I_{\Lbrack 0,R\Lbrack}\is \left( N^{\mathbb G}\right)^T.
\end{equation}
Put
\begin{equation}\label{Gamman}
\Sigma_n:=\left( \{\vert \xi^{(h,\mathbb F)}\vert\leq n\ \&\ G_{-}+
\varphi^{(m)}\geq 1/n\}\cap\Rbrack 0,\tau\Rbrack\right)\bigcup\ \Rbrack\tau,+\infty\Lbrack,
\end{equation}
and utilize (\ref{decompositionXhat}) to derive
\begin{align*}
I_{\Sigma_n}\is H& =  {{\xi^{(h,\mathbb F)}}\over{G_{-}+\varphi^{(m)}}} I_{\Rbrack 0,\tau\Rbrack\cap \Sigma_n} \is S^{\tau}
 - {{\xi^{(h,\mathbb F)} G_{-}^{-1}}\over{G_{-}+\varphi^{(m)}}}I_{\Rbrack 0,\tau\Rbrack\cap \Sigma_n}\is\widehat{ L^{(m)}} \\
&\quad + {{I_{\Rbrack 0,\tau\Rbrack\cap \Sigma_n}}\over{G_{-}}}  \is\widehat{L^{(h,\mathbb F)}}
 - {{M^h_{-}-(h\is D^{o,\mathbb F})_{-}}\over{ G_{-}^2}}I_{\Rbrack 0,\tau\Rbrack\cap \Sigma_n} \is\widehat{m}
  + {{hG-M^h+h\is D^{o,\mathbb F}}\over{ G}}I_{\Lbrack 0,R\Lbrack}I_{\Sigma_n} \is \left( N^{\mathbb G}\right)^T\\
&=: \xi^{(n,\mathbb G)}\is S^{\tau}+L^{(n,\mathbb G)},
\end{align*}
where
\begin{align*}
	\xi^{(n,\mathbb G)}&:= \xi^{(h,\mathbb F)}\big(G_{-}+\varphi^{(m)}\big)^{-1} I_{\Rbrack 0,\tau\Rbrack\cap \Sigma_n}\ \ \mbox{and}\\
	\\
	L^{(n,\mathbb G)}&:= {{-\xi^{(h,\mathbb F)}I_{\Rbrack 0,\tau\Rbrack\cap \Sigma_n}}\over{ G_{-}\big(G_{-}
+\varphi^{(m)}\big)}}\is\widehat{L^{(m)}} +{{I_{\Rbrack 0,\tau\Rbrack\cap \Sigma_n}}\over{ G_{-}}}
 \is  \widehat{L^{(h,\mathbb F)}}
 - {{M^h_{-}-(h\is D^{o,\mathbb F})_{-}}\over{ G_{-}^2}}I_{\Rbrack 0,\tau\Rbrack\cap \Sigma_n} \is\widehat{m}  \\
 &\quad  +  {{hG-M^h+h\is D^{o,\mathbb F}}\over{ G}} I_{\Lbrack 0,R\Lbrack}I_{\Sigma_n} \is \left( N^{\mathbb G}\right)^T.
 \end{align*}
{\bf Step 3)} Here we prove that $\xi^{(h,\mathbb G)}:=\displaystyle\lim_{n\longrightarrow+\infty} \xi^{(n,\mathbb G)}$
 belongs in fact to $L^2(S^{\tau},\mathbb G)$. To this end, we remark that $[\xi^{(n,\mathbb G)}\is S^{\tau},L^{(n,\mathbb G)}]=
 \xi^{(n,\mathbb G)}\is [S^{\tau},L^{(n,\mathbb G)}]$
is a $\mathbb G$-local martingale, and we consider a sequence of
$\mathbb G$-stopping times $(\sigma(n,k))_{k\geq 1}$ that goes to infinity with $k$ such that
$[\xi^{(n,\mathbb G)}\is S^{\tau},L^{(n,\mathbb G)}]^{\sigma(n,k)}$ is a uniformly integrable martingale. Then, we get
\[
\E\big[ [I_{\Sigma_n}\is H ]_{\sigma(n,k)}\big]= \E\big[ [\xi^{(n,\mathbb G)}\is S^{\tau} ]_{\sigma(n,k)}\big]+
\E \big[ [L^{(n,\mathbb G)} ]_{\sigma(n,k)} \big] \leq \E \big[ [H, H]_{\infty}\big] < +\infty.
\]
Thus, by combining this with Fatou's lemma (we let $k$ goes to infinity and then $n$ goes to infinity afterwards) and the fact
 $\xi^{(n,\mathbb G)}$ converges pointwise to $\xi^{(h,\mathbb G)}$, we conclude that $\xi^{(h,\mathbb G)}\in L^2(S^{\tau},\mathbb G)$, and $\xi^{(n,\mathbb G)}\is S^{\tau}$ converges to
  $\xi^{(h,\mathbb G)}\is S^{\tau}$ in ${\cal M}^2(\mathbb G)$. Since $I_{\Sigma_n}\is H$ converges to
   $H-H_0$ in the space of ${\cal M}^2(\mathbb G)$, we conclude that $L^{(n,\mathbb G)}$ converges in the space ${\cal M}^2(\mathbb G)$,
    and its limit $L^{(h,\mathbb G)}$ is orthogonal to $S^{\tau}$. As a result, we deduce
$\xi^{(h,\mathbb F)}\left(G_{-}+\varphi^{(m)}\right)^{-1}I_{\Rbrack 0,\tau\Rbrack}$ is $\widehat{L^{(m)}}$-integrable
and the resulting integral is a $\mathbb G$-local martingale.  This proves assertions (a) and (b), while assertion (c)
  is immediate from the fact that the $\mathbb G$-payment process corresponding to the claim $h_{\tau}$ at term $T$ is
   $A_t= I_{\{t=T\}}h_{\tau}$ and the value process of the portfolio is given by
\[
V_t(\rho^{*,\mathbb{G}}) = \E[A_T|{\cal G}_t]-A_t=\E[h_{\tau}|{\cal G}_t]- I_{\{t=T\}}h_{\tau} =H_t- I_{\{t=T\}}h_{\tau}.
\]
where the $\mathbb G$-martingale $H$ is decomposed as 
\begin{equation}\label{DecompHexplicit}
H_t=h_{\tau}I_{\{\tau\leq t\}}+\frac{I_{\{t<\tau\}}}{G_t}\mathbb{E}\left[h_\tau I_{\{t<\tau\}}\mid\mathcal{F}_t\right].
\end{equation}
This ends the proof of this theorem.
\end{proof}

Below, we elaborate the results of Theorem \ref{riskMininStrategy} in this setting where the payoff process $h$ is $\mathbb F$-predictable.

\begin{corollary}\label{riskMininStrategyPredictable}
Suppose that (\ref{mainassumpOn(X,tau)}) holds, and consider $h\in L^2\left({\cal P}(\mathbb F),P\otimes D\right)$. Let $m^h$ be given by 
\begin{equation} \label{processesmhandJ}
m^h := \ ^{o,\mathbb F}\Big(\int_0^\infty h_u  dF_u \Big),\ \ \mbox{where}\ \ \ F:=1-G.
\end{equation} 
 Then the risk-minimizing strategy and the remaining risk for the mortality claim $h_{\tau}$, at term \(T\) under $(S^{\tau},\mathbb G)$, 
 are denoted by $\xi^{(h,\mathbb G)}$ and $L^{(h,\mathbb G)}$ and are given by
\begin{align}
\xi^{(h,\mathbb G)}&:= \xi^{(h,\mathbb F)}\big(G_{-}+\varphi^{(m)}\big)^{-1}I_{\Rbrack 0,\tau\Rbrack},\label{riskminGPred}\\
 L^{(h,\mathbb G)} &:=   \frac{- G_{-}^{-1}\xi^{(h,\mathbb F)}}{G_{-}+\varphi^{(m)}} I_{\Rbrack 0,\tau\Rbrack}\is\widehat{ L^{(m)}}
 +{\frac{I_{\Rbrack 0,\tau\Rbrack}}{G_{-}}} \is\widehat{L^{(h,\mathbb F)}}
  	+ {\frac{hG_{-}-{m^h_{-}+(h\is F)_{-}}}{ G_{-}^2}}I_{\Rbrack 0,\tau\wedge T\Rbrack} \is\widehat{m} \nonumber \\
  	&\quad   + {\frac{hG-{m^h+h\is F}}{ G}} I_{\Rbrack 0,R\Lbrack} \is \left(N^{\mathbb G}\right)^T. \label{riskminGremainingPred}
\end{align}
Here the pair $\left(\xi^{(h,\mathbb F)}, L^{(h,\mathbb F)}\right)$ is the risk-minimizing strategy and the remaining risk,
 under $(S,\mathbb F)$ for the claim $\E\left[ \int_0^\infty h_u d F_u |  {\cal F}_T \right]$,  and $(\varphi^{(m)}, L^{(m)})$ is
 given in (\ref{decompositionUandhatU}).
\end{corollary}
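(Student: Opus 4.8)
The plan is to obtain this Corollary as a direct specialization of Theorem \ref{riskMininStrategy} to an $\mathbb F$-predictable $h$, so that the whole task reduces to rewriting the objects attached to $M^h$ in \eqref{riskminG}--\eqref{riskminGremaining} in terms of their $m^h$-counterparts. The starting point is the identity $D^{o,\mathbb F}=m-1+F$, which is immediate from \eqref{GGtildem} together with $F=1-G$. Since $h$ is $\mathbb F$-predictable, integrating against $dD^{o,\mathbb F}=dm+dF$ gives $h\is D^{o,\mathbb F}=h\is m+h\is F$, where $h\is m\in{\cal M}_0^2(\mathbb F)$ after an integrability check (this uses that $m$ is a BMO martingale and $h\in L^2({\cal P}(\mathbb F),P\otimes D)$, in the spirit of Step~1 of the proof of Theorem \ref{riskMininStrategy}, noting ${\cal P}(\mathbb F)\subset{\cal O}(\mathbb F)$). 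Taking $\mathbb F$-optional projections and using that a martingale equals its own optional projection, I would derive the key identity
\begin{equation*}
M^h=h\is m+m^h,\qquad\text{hence}\qquad M^h-h\is D^{o,\mathbb F}=m^h-h\is F .
\end{equation*}
In particular $Gh-M^h+h\is D^{o,\mathbb F}=Gh-m^h+h\is F$, which already converts the last term of \eqref{riskminGremaining} into the last term of \eqref{riskminGremainingPred}.

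Next I would transport the Galtchouk--Kunita--Watanabe decomposition. Because of the standing assumption $\crof{S,m}\equiv0$ in \eqref{mainassumpOn(X,tau)}, the martingale $h\is m$ is orthogonal to $S$, since $\crof{S,h\is m}=h\is\crof{S,m}\equiv0$ forces $[S,h\is m]$ to be an $\mathbb F$-local martingale. Consequently the GKW decomposition of $M^h=h\is m+m^h$ with respect to $S$ shares its $S$-integrand with that of $m^h$: writing the GKW decomposition of $m^h$ as $m^h=m_0^h+\xi^{(h,\mathbb F)}\is S+L^{(h,\mathbb F)}$ (the pair announced in the Corollary), I obtain that the strategy attached to $M^h$ is again $\xi^{(h,\mathbb F)}$, while its orthogonal part is $h\is m+L^{(h,\mathbb F)}$. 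Feeding the unchanged strategy into \eqref{riskminG} yields \eqref{riskminGPred} verbatim, which explains why the hedging strategy is insensitive to the optional-versus-predictable distinction.

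Finally I would account for the extra orthogonal mass $h\is m$ carried by the remainder. Substituting $\widehat{h\is m+L^{(h,\mathbb F)}}=\widehat{h\is m}+\widehat{L^{(h,\mathbb F)}}$ into \eqref{riskminGremaining}, the genuinely new contribution is $G_{-}^{-1}I_{\Rbrack 0,\tau\Rbrack}\is\widehat{h\is m}$. The crucial point, and what I expect to be the main obstacle, is the commutation $\widehat{h\is m}=h\is\widehat{m}$, which I would verify term by term from \eqref{processMhat}: for predictable $h$ one has $(h\is m)^\tau=h\is m^\tau$ and $[h\is m,m]=h\is[m,m]$, while, since the dual predictable projection commutes with predictable integration, $\big(\Delta(h\is m)_{\widetilde R}I_{\Lbrack\widetilde R,+\infty\Lbrack}\big)^{p,\mathbb F}=h\is\big(\Delta m_{\widetilde R}I_{\Lbrack\widetilde R,+\infty\Lbrack}\big)^{p,\mathbb F}$. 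Hence $G_{-}^{-1}I_{\Rbrack 0,\tau\Rbrack}\is\widehat{h\is m}=hG_{-}^{-1}I_{\Rbrack 0,\tau\Rbrack}\is\widehat{m}$, and adding this, on the relevant support $\Rbrack 0,\tau\wedge T\Rbrack$, to the $\widehat{m}$-term of \eqref{riskminGremaining}, now equal to $-G_{-}^{-2}\big(m^h_{-}-(h\is F)_{-}\big)I_{\Rbrack 0,\tau\wedge T\Rbrack}\is\widehat{m}$ by the first identity, produces exactly the coefficient $G_{-}^{-2}\big(hG_{-}-m^h_{-}+(h\is F)_{-}\big)$ of \eqref{riskminGremainingPred}. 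The $\widehat{L^{(m)}}$-, $\widehat{L^{(h,\mathbb F)}}$- and $N^{\mathbb G}$-terms then match directly, so beyond the commutation and the accompanying square-integrability and orthogonality verifications, the remainder is algebraic bookkeeping.
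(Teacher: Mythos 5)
Your proposal is correct, but it takes a genuinely different route from the paper's. The paper's own proof is simply ``mimic the proof of Theorem \ref{riskMininStrategy}'': rerun the whole machinery (representation theorem in its predictable-payoff version from \cite{choullidaveloose2018}, GKW step, localization on $\Sigma_n$, admissibility via Fatou) with $m^h$ in place of $M^h$. You instead deduce the corollary directly from the \emph{statement} of Theorem \ref{riskMininStrategy}, using three ingredients: the identity $D^{o,\mathbb F}=m-1+F$, whence $M^h=h\is m+m^h$ and $M^h-h\is D^{o,\mathbb F}=m^h-h\is F$ for predictable $h$; the orthogonality $\crof{S,h\is m}=h\is\crof{S,m}\equiv 0$, which forces the GKW $S$-integrands of $M^h$ and $m^h$ to coincide (so \eqref{riskminGPred} follows verbatim); and the commutation $\widehat{h\is m}=h\is\widehat{m}$, which you correctly verify term by term from \eqref{processMhat}, the only nontrivial point being that dual predictable projections commute with integration of predictable integrands. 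This reduction is more economical than the paper's route (no second representation theorem, no repetition of the admissibility argument) and it makes transparent why only the $\widehat{m}$- and $N^{\mathbb G}$-coefficients change between the optional and predictable statements. One soft spot: your integrability justification. BMO of $m$ together with $h\in L^2({\cal P}(\mathbb F),P\otimes D)$ does \emph{not} by itself make $h\is m$ a square-integrable (or even uniformly integrable) martingale, because $P\otimes D$ controls $h$ only along the graph of $\tau$, while $d[m,m]$ need not be dominated by any $dD$-related measure (honest times in a Brownian filtration give examples where $dD^{o,\mathbb F}$ is singular with respect to $d\langle m\rangle$). However, this well-posedness is already presupposed by the corollary itself, since the definition of $m^h$ requires $\int_0^\infty h_u\,dF_u=(h\is D^{o,\mathbb F})_\infty-(h\is m)_\infty$ to exist and be integrable; so your argument uses nothing beyond what the statement implicitly assumes, and it would be cleaner to state this as a standing hypothesis on $h$ rather than attribute it to the BMO property.
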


\noindent The proof of this corollary mimics the  proof of Theorem \ref{riskMininStrategy}, and will be omitted.

\begin{corollary}\label{RiskMiniPseudo} Let $h\in L^2\left({\cal O}(\mathbb F),P\otimes D\right)$. Suppose that $\tau$ is a pseudo-stopping time, i.e., $\mathbb{E}(M_{\tau})=\mathbb{E}(M_0)$ for any $\mathbb F$-martingale $M$ ({\it in particular when $\tau$ is independent of ${\cal F}_{\infty}:=\sigma(\cup_{t\geq 0}{\cal F}_t)$}). Then the following hold.\\
 {\rm{(a)}} If $S\in {\cal M}_{loc}^2(\mathbb F)$, then (\ref{mainassumpOn(X,tau)}) holds.\\
 {\rm{(b)}}  Suppose that  $S\in {\cal M}_{loc}^2(\mathbb F)$. Then the risk-minimizing strategy and the remaining risk
  for the mortality claim $h_{\tau}$, at term \(T\) under $(S^{\tau},\mathbb G)$, are denoted by $(\xi^{(h,\mathbb G)},L^{(h,\mathbb G)})$ and are given by
\begin{equation*}
\xi^{(h,\mathbb G)}:={\frac{\xi^{(h,\mathbb F)}}{G_{-}}}I_{\Rbrack 0,\tau\Rbrack}\quad \mbox{and}\quad
L^{(h,\mathbb G)} :={\frac{I_{\Rbrack 0,\tau\Rbrack}}{ G_{-}}}  \is L^{(h,\mathbb F)}
+ {\frac{hG-M^h+h\is D^{o,\mathbb F}}{ G}}  I_{\Lbrack 0,R\Lbrack} \is \left(N^{\mathbb G}\right)^T.\ \ \ \ 
\end{equation*}
Here $\left(\xi^{(h,\mathbb F)}, L^{(h,\mathbb F)}\right)$ is the pair of the risk-minimizing strategy and the remaining risk,
 under the model $(S,\mathbb F)$, for the claim
  $\E[ \int_0^\infty h_u d D^{o,\mathbb F}_u\ \big|\  {\cal F}_T ]$ at term \(T\), and \(M^h\) is  defined in {\eqref{processesMhandJ}}.
\end{corollary}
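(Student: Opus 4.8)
The plan is to reduce both assertions to a single structural fact: when $\tau$ is a pseudo-stopping time, the $\mathbb F$-martingale $m=G+D^{o,\mathbb F}$ of \eqref{GGtildem} is constant, $m\equiv m_0=G_0$. To establish this I would use the duality of the dual optional projection together with integration by parts. For any bounded $\mathbb F$-martingale $M$, optionality of $M$ gives $\E\!\left[M_\tau I_{\{\tau<\infty\}}\right]=\E\!\left[\int_0^\infty M_u\,dD_u\right]=\E\!\left[\int_0^\infty M_u\,dD^{o,\mathbb F}_u\right]$, and since $D^{o,\mathbb F}=m-G$ is of finite variation with $D^{o,\mathbb F}_\infty=m_\infty$ (because $G_\infty=0$), integrating $M$ against $D^{o,\mathbb F}$ by parts yields $\E[M_\tau]=\E[M_0]+\E[\langle M,m\rangle^{\mathbb F}_\infty]$ (using $G_0=1$). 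The pseudo-stopping property forces $\E[\langle M,m\rangle^{\mathbb F}_\infty]=0$ for every bounded $M$; taking $M=m$ after localization gives $\E[\langle m\rangle^{\mathbb F}_\infty]=0$, hence $m\equiv m_0$. This is the Nikeghbali--Yor characterization of pseudo-stopping times, and it covers the independence case, where $G$ is deterministic and thus $m$ is trivially constant.

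For assertion (a), once $m$ is constant the three requirements of \eqref{mainassumpOn(X,tau)} are immediate. The hypothesis already provides $S\in{\cal M}^2_{\loc}(\mathbb F)$. Since $m$ is constant, $[S,m]\equiv 0$, whence $\langle S,m\rangle^{\mathbb F}\equiv 0$. Finally, from the standard relation $\widetilde G=G_-+\Delta m$ (equivalently $\widetilde G=G+\Delta D^{o,\mathbb F}$ combined with $m=G+D^{o,\mathbb F}$) the vanishing of $\Delta m$ gives $\widetilde G=G_-$, so that $\{\widetilde G=0<G_-\}=\{G_-=0<G_-\}=\emptyset$ and the last condition in \eqref{mainassumpOn(X,tau)} holds trivially, irrespective of $S$.

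For assertion (b), part (a) ensures \eqref{mainassumpOn(X,tau)} holds, so Theorem \ref{riskMininStrategy} applies and I would specialize its conclusions. First, $U=I_{\{G_->0\}}\is[S,m]\equiv 0$ because $m$ is constant, so in \eqref{decompositionUandhatU} one may take $\varphi^{(m)}=0$ and $L^{(m)}=0$; this collapses $\xi^{(h,\mathbb G)}=\xi^{(h,\mathbb F)}(G_-+\varphi^{(m)})^{-1}I_{\Rbrack 0,\tau\Rbrack}$ into $\xi^{(h,\mathbb F)}G_-^{-1}I_{\Rbrack 0,\tau\Rbrack}$ and removes the $\widehat{L^{(m)}}$-term from \eqref{riskminGremaining}. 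Applying \eqref{processMhat} to $M=m$ shows $\widehat m$ is constant, so the $\widehat m$-term vanishes as well. For the surviving undiversified part, I would note that $m\equiv m_0$ turns \eqref{processMhat} into $\widehat M=M^\tau$ for every $\mathbb F$-local martingale $M$: the bracket term drops since $[M,m]=0$, and the predictable term drops because $\widetilde R=\infty$ here, as $\widetilde G=G_-$ makes $\{\widetilde G_R=0<G_{R-}\}=\emptyset$ in \eqref{timetildeR}. Applying this to $L^{(h,\mathbb F)}$ gives $\widehat{L^{(h,\mathbb F)}}=(L^{(h,\mathbb F)})^\tau$, and since the integrand $G_-^{-1}I_{\Rbrack 0,\tau\Rbrack}$ is supported on $\Rbrack 0,\tau\Rbrack$ the stopping is immaterial, so $G_-^{-1}I_{\Rbrack 0,\tau\Rbrack}\is\widehat{L^{(h,\mathbb F)}}=G_-^{-1}I_{\Rbrack 0,\tau\Rbrack}\is L^{(h,\mathbb F)}$. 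Collecting the two remaining terms reproduces the stated $L^{(h,\mathbb G)}$, and admissibility is inherited from Theorem \ref{riskMininStrategy}.

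The main obstacle is the first paragraph: identifying and proving that a pseudo-stopping time is characterized precisely by $m\equiv m_0$, since everything else is a mechanical substitution into Theorem \ref{riskMininStrategy}. A secondary point requiring care is the verification that $\widetilde R=\infty$ (the emptiness of $\{\widetilde G_R=0<G_{R-}\}$), which is what permits the clean reduction $\widehat M=M^\tau$ and thereby the replacement of $\widehat{L^{(h,\mathbb F)}}$ by $L^{(h,\mathbb F)}$ in the final formula.
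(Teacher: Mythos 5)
Your proposal is correct and follows essentially the same route as the paper: the paper's proof likewise rests on the fact that $m\equiv m_0$ when $\tau$ is a pseudo-stopping time (citing Nikeghbali--Yor rather than re-deriving it), concludes $\varphi^{(m)}\equiv 0$, $L^{(m)}\equiv 0$, $I_{\Rbrack 0,\tau\Rbrack}\is\widehat m\equiv 0$ and $\widehat M=M^{\tau}$ for every $\mathbb F$-local martingale $M$, and then substitutes these into Theorem \ref{riskMininStrategy}. The extra details you supply---a sketch of the Nikeghbali--Yor characterization itself, and the verification that $\widetilde G=G_-$ so that $\{\widetilde G=0<G_-\}=\emptyset$ and $\widetilde R=\infty$---are points the paper leaves implicit or delegates to the cited reference, and they do not change the approach.
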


\begin{proof} Thanks to \cite{nikeghbaliyor05}, we deduce that $m\equiv  m_0$ (constant process) as soon as $\tau$ is a pseudo-stopping time. This implies that  $U$, defined in (\ref{processU}), is a null process. Therefore, we conclude that
    $$\varphi^{(m)}\equiv 0,\quad  L^{(m)}\equiv 0,\quad I_{\Rbrack 0,\tau\Rbrack}\is\widehat m\equiv 0\quad \mbox{and}\quad \widehat M=M^{\tau}\quad \mbox{ for any}\quad M\in {\cal M}(\mathbb F).$$
Then, by inserting these into (\ref{riskminG}) and (\ref{riskminGremaining}), the proof of the corollary follows immediately.
\end{proof}

 It is worth mentioning that the pseudo-stopping time model for $\tau$ covers the case when $\tau$ is independent of ${\cal F}_{\infty}$ (no correlation between the financial
market and the death time), the case when $\tau$ is an $\mathbb F$-stopping time (i.e.,  the case of full correlation between the financial market and the death time), and the case when there is arbitrary moderate correlation such as the immersion case of $\tau:=\inf\{t\geq 0\ \big|\ S_t\geq E\}$ with $E$ is a random variable that is independent of ${\cal F}_{\infty}$.
For more details about pseudo-stopping times, we refer the reader to
\cite{nikeghbaliyor05}.\\

Theorem \ref{riskMininStrategy} and Corollary \ref{riskMininStrategyPredictable} give the general relation between the
 \(\mathbb G\)-risk-minimizing strategy in the model \((S^{\tau},\mathbb G)\) for the claim \(h_\tau\) at term \(T\) and the
  \(\mathbb F\)-risk-minimizing strategy in \((S,\mathbb F)\) for the claim
   \(\E[\int_0^{\infty}h_u dD^{o,\mathbb F}_u\ \big|\ {\cal F}_T]\) (or $\E\left[ \int_0^\infty h_u d F_u |  {\cal F}_T \right]$ when \(h\) is \(\mathbb F\)-predictable) at term \(T\).
In the next subsections,  we further establish the arising \(\mathbb F\)-risk-minimizing strategies for certain specific mortality contracts to highlight the impact of the interplay/correlation between the financial market and the death time. \\

Corollary \ref{RiskMiniPseudo} extends already the results of \cite[Chapter 5]{barbarin09}  and \cite{biaginibotero15, biaginirheinlander16, biaginischreiber13} to more broader models  of $(S,\mathbb F, \tau)$ and to a larger class of mortality liabilities. Indeed, in these papers, the authors study risk-minimization of life insurance liabilities consisting of two main building blocks: pure endowment and term insurance only, while an annuity contract can be dealt with as a combination of both. This literature derives the results, by applying the hazard rate approach of credit derivatives (see, e.g., \cite{bieleckirutkowski02}), under the following several assumptions:
\begin{enumerate}
\item[(a)] The random time $\tau$ is assumed to avoid $\mathbb F$-stopping times. This allows  $\tau$ to be a totally inaccessible $\mathbb G$-stopping time, and 
      $\Delta U_{\tau}=0$ for any $\mathbb F$-adapted RCLL process $U$.
\item[(b)]  The process  $G$ is strictly positive, i.e., the stopping time  $R=+\infty$ $P$-a.s..
\item[(c)]  The payment processes and payoff processes are predictable processes.
\item[(d)]  The H-hypothesis holds (i.e., $M^{\tau}$ is a $\mathbb G$-local martingale for any $\mathbb F$-local martingale $M$). 
\end{enumerate}
This H-hypothesis is relaxed in \cite{barbarin09}, while the author could not get rid of assumptions (a), (b), and (c), as his approach relies essentially on the martingale decomposition of \cite{blanchetjeanblanc04}. \\
 

Throughout  the rest of the paper, we consider the following survival probabilities
\begin{equation}\label{G(T)process}
F_t (s) := P(\tau \le s | {\cal F}_t) , \ \ \ \ \ \mbox{and}\ \ \ \ \ \
G_t (s):= P(\tau > s | {\cal F}_t)=1-F_t(s),\ \ \ \ \ \forall\ \ s,t \in [0,T],
\end{equation}
and the following mortality/longevity derivatives. 

\begin{definition}\label{insuranceConctracts} Consider $T\in (0,+\infty)$, $g\in L^1({\cal F}_T)$ and $K\in L^1({\cal O}(\mathbb F), P\otimes D)$. \\
{\rm{(a)}} A {zero-coupon} longevity bond is an insurance contract that pays the conditional survival probability at term $T$ (i.e., an
insurance contract with payoff $G_T=P(\tau>T|{\cal F}_T)$). \\
{\rm{(b)}} A pure endowment insurance, with benefit $g$, is an insurance contract that pays $g$ at term $T$ if the insured survives
(i.e., an insurance contract with payoff  $gI_{\{\tau>T\}}$).\\
{\rm{(c)}} An endowment insurance contract with benefit pair $(g,K)$, is an insurance contract that pays $g$ at term $T$ if the insured survives and pays $K_{\tau}$ at the time of death if the insured dies before or at the maturity (i.e,. its payoff is $gI_{\{\tau>T\}}+K_{\tau}I_{\{\tau\leq T\}}$).
\end{definition}

\subsection{The case of pure endowment insurance contract}\label{SubsetionpayoffPrevisible}

This subsection considers the case of a pure endowment contract with benefit  $g$ defined in Definition \ref{insuranceConctracts}-(b). Thus, the payoff process for this contract takes the form of
\begin{equation}\label{h1}
h_t := g I_{\Rbrack T, +\infty\Lbrack} (t),\ \ \ \ \ \ \ g\in L^2( {\cal F}_T, P).
\end{equation}
The following describes precisely the risk-minimizing 
strategy in terms of the risk-minimizing strategies for the financial, mortality, and correlation components.

\begin{theorem} \label{PropAtTerm}
Suppose that (\ref{mainassumpOn(X,tau)}) holds, and consider $h$ given by (\ref{h1}).  Then the following hold.\\
{\rm{(a)}} The risk-minimizing strategy for the mortality claim \(h_{\tau}\), under $(S^{\tau},\mathbb G)$, takes the form of
\begin{equation}\label{StratAtTerm}
\xi^{(h,\mathbb G)}:=\left(G_{-}(T) \xi^{(g,\mathbb F)} + U_-^g \xi^{(G_T,\mathbb F)} +
 \xi^{({\rm Cor}_T,\mathbb F)}\right)\big(G_{-}+\varphi^{(m)}\big)^{-1}I_{\Rbrack 0,\tau\Rbrack},
 \end{equation}
and the corresponding remaining risk is given by
\begin{align}
L^{(h,\mathbb G)} &:= - {\frac{ G_{-}(T) \xi^{(g,\mathbb F)} + U_-^g \xi^{(G_T,\mathbb F)} + \xi^{({\rm Cor}_T,\mathbb F)}}{G_{-}
\big(G_{-}+\varphi^{(m)}\big)}}I_{\Rbrack 0,\tau\Rbrack}\is\widehat{L^{(m)}}
 +I_{\Rbrack 0,\tau\Rbrack} {\frac{G_{-}(T)}{G_{-}}} \is\widehat{L^{(g,\mathbb F)}}\nonumber\\
& \quad + I_{\Rbrack 0,\tau\Rbrack}{\frac{ U_-^g}{G_{-}}} \is\widehat{L^{(G_T,\mathbb F)}}  +
{\frac{ I_{\Rbrack 0,\tau\Rbrack}}{G_{-}}} \is\widehat{L^{({\rm Cor}_T,\mathbb F)}}{
-\frac{M^{(g)}_{-}}{ G^2_{-}} I_{\Rbrack 0,\tau\wedge T\Rbrack} \is\widehat{m}    
-\frac{M^{(g)}}{ G} I_{\Rbrack 0,R\Lbrack} \is \left(N^{\mathbb G}\right)^T.}
\label{StratAtTermRemain}
\end{align}
Here, the {\it correlation process} ${\rm Cor}=({\rm Cor}_t)_{t\geq 0}$ is given by
\begin{equation}\label{Rho4(g,tau)}
{\rm Cor}_t:=[G(T),U^g]_t+ {\Cov}\Bigl(I_{\{\tau > T\}}, {g}\ \big|\ {\cal F}_t\Bigr),
\end{equation}
and  $U^g$ and $M^{(g)}$ are two  $\mathbb{F}$-martingales given by $U^g_t=\E [ g\mid \mathcal{F}_t]$
 and $M^{(g)}_t=\E [ gG_T\mid \mathcal{F}_t]$ respectively.\\
The pairs
\(\left(\xi^{(g,\mathbb F)},L^{(g,\mathbb F)}\right) \), \(\left(\xi^{(G_T,\mathbb F)}, L^{(G_T,\mathbb F)}\right)\),
 and \(\left(\xi^{({\rm Cor}_T,\mathbb F)}, L^{({\rm Cor}_T,\mathbb F)}\right)\) are the risk-minimizing strategies
  and the remaining risks, under $(S,\mathbb F)$, for the claims \({g}\), \(G_T\), and \({\rm Cor}_T\) respectively, while $(\varphi^{(m)},L^{(m)})$ is defined in (\ref{decompositionUandhatU}).\\
{\rm{(b)}} The value process, \(V(\rho^{*,\mathbb G})\), of the risk-minimizing portfolio under the model $(S^{\tau},\mathbb G)$, is given by
\begin{equation} \label{ValueAtTerm}
V(\rho^{*,\mathbb G})=G^{-1}\ ^{o,\mathbb F}\!\left(h_{\tau}I_{\Lbrack 0,\tau\Lbrack}\right)I_{\Lbrack 0,\tau\wedge T\Lbrack}.
\end{equation}
\end{theorem}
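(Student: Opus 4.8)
The plan is to read off this pure-endowment payoff, $h_t = g\,I_{\Rbrack T,+\infty\Lbrack}(t)$, as a special case of Theorem \ref{riskMininStrategy}, and then to split the resulting $\mathbb F$-strategy into its financial, mortality and correlation pieces by integration by parts. First I would identify the $\mathbb F$-claim feeding Theorem \ref{riskMininStrategy}, namely $\E[\int_0^\infty h_u\,dD^{o,\mathbb F}_u\mid\mathcal F_T]$. Since $h_u=g\,I_{\{u>T\}}$ with $g\in L^2(\mathcal F_T)$, the process $g\,I_{\{u>T\}}$ is $\mathbb F$-optional, so the defining property of the dual optional projection gives $\E[\int_0^\infty h_u\,dD^{o,\mathbb F}_u\mid\mathcal F_T]=\E[g\,I_{\{\tau>T\}}\mid\mathcal F_T]=g\,G_T$. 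Hence the martingale $M^h$ of this claim agrees on $\Lbrack 0,T\Rbrack$ with $M^{(g)}_\cdot=\E[g\,G_T\mid\mathcal F_\cdot]$, and because $h\equiv 0$ on $\Lbrack 0,T\Rbrack$ one checks $M^h_- - (h\is D^{o,\mathbb F})_-=M^{(g)}_-$ on $\Rbrack 0,\tau\wedge T\Rbrack$ and $Gh-M^h+h\is D^{o,\mathbb F}=-M^{(g)}$ on $\Rbrack 0,T\Rbrack$. These two identifications already produce the $\widehat m$-term and the $(N^{\mathbb G})^T$-term of \eqref{StratAtTermRemain}.

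The heart of the argument is to decompose $M^{(g)}$. I would write $g\,G_T=U^g_T\,G_T(T)$, where $U^g_t=\E[g\mid\mathcal F_t]$ and $G_t(T)=\E[I_{\{\tau>T\}}\mid\mathcal F_t]$ are the $\mathbb F$-martingales attached to the benefit $g$ and to the longevity-bond payoff $G_T$. Integration by parts gives $U^g\,G(T)=U^g_0G_0(T)+U^g_-\is G(T)+G_-(T)\is U^g+[U^g,G(T)]$; projecting onto $\mathcal F_\cdot$ and using that for $t\le T$ one has $\Cov\bigl(I_{\{\tau>T\}},g\mid\mathcal F_t\bigr)=M^{(g)}_t-G_t(T)U^g_t$, I would identify the martingale carrying the bracket with the correlation process ${\rm Cor}$ of \eqref{Rho4(g,tau)}. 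This yields the clean identity $M^{(g)}=M^{(g)}_0+G_-(T)\is U^g+U^g_-\is G(T)+{\rm Cor}$, and in particular shows ${\rm Cor}$ is a martingale with ${\rm Cor}_T=[G(T),U^g]_T$, the conditional covariance vanishing at $t=T$ since $g$ is $\mathcal F_T$-measurable.

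With this decomposition I would apply the GKW theorem (Theorem \ref{GKWdecomposition}) separately to $U^g$, $G(T)$ and ${\rm Cor}$, obtaining the three pairs $(\xi^{(g,\mathbb F)},L^{(g,\mathbb F)})$, $(\xi^{(G_T,\mathbb F)},L^{(G_T,\mathbb F)})$ and $(\xi^{({\rm Cor}_T,\mathbb F)},L^{({\rm Cor}_T,\mathbb F)})$. Since predictable integration commutes with the GKW decomposition and preserves orthogonality to $S$, substituting back reads off $\xi^{(h,\mathbb F)}=G_-(T)\xi^{(g,\mathbb F)}+U^g_-\xi^{(G_T,\mathbb F)}+\xi^{({\rm Cor}_T,\mathbb F)}$ and $L^{(h,\mathbb F)}=G_-(T)\is L^{(g,\mathbb F)}+U^g_-\is L^{(G_T,\mathbb F)}+L^{({\rm Cor}_T,\mathbb F)}$. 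Plugging these into \eqref{riskminG}–\eqref{riskminGremaining}, and using that $M\mapsto\widehat M$ is linear and commutes with predictable integration (so $\widehat{L^{(h,\mathbb F)}}=G_-(T)\is\widehat{L^{(g,\mathbb F)}}+U^g_-\is\widehat{L^{(G_T,\mathbb F)}}+\widehat{L^{({\rm Cor}_T,\mathbb F)}}$), delivers exactly \eqref{StratAtTerm} and \eqref{StratAtTermRemain}. For part (b) I would compute directly rather than via \eqref{ValueRiskMiniGeneral}: the pure-endowment payment process is $A_t=g\,I_{\{\tau>T\}}I_{\{t\ge T\}}$, so $V_t(\rho^{*,\mathbb G})=\E[A_T-A_{t\wedge T}\mid\mathcal G_t]=I_{\{t<T\}}\E[h_\tau\mid\mathcal G_t]$; inserting the explicit form \eqref{DecompHexplicit} of $H=\E[h_\tau\mid\mathcal G_\cdot]$ and noting $\{\tau>T\}\subseteq\{t<\tau\}$ for $t<T$ gives \eqref{ValueAtTerm}.

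The step I expect to be the main obstacle is the integrability needed to make the three-way split legitimate, i.e. to guarantee that $U^g_-\is G(T)$ and ${\rm Cor}$ are genuine square-integrable martingales and that ${\rm Cor}_T=[G(T),U^g]_T\in L^2$, so that $\xi^{({\rm Cor}_T,\mathbb F)}$ is a bona fide risk-minimizing strategy. Boundedness of $G(T)\in[0,1]$ immediately controls $G_-(T)\is U^g$, but $U^g_-\is G(T)$ carries the unbounded integrand $U^g_-$ and need not lie in $\mathcal M^2$ under $g\in L^2$ alone (a Kunita--Watanabe bound only yields control of $[G(T),U^g]_T$ in $L^2$ once, say, $g\in L^4$). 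I would resolve this either by the localization device already used in the proof of Theorem \ref{riskMininStrategy} (a sequence of sets $\Sigma_n$ on which all integrands are bounded, passing to the limit via Fatou), or by first establishing the identity for bounded $g$ and extending to $g\in L^2(\mathcal F_T)$ by $L^2$-approximation, using that $g\mapsto M^{(g)}\in\mathcal M^2$ is continuous.
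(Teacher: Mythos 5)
Your main line of argument coincides with the paper's own proof: identify the $\mathbb F$-claim as $gG_T$, decompose $M^{(g)}$ by integration by parts into $G_{-}(T)\is U^g + U^g_{-}\is G(T) + {\rm Cor}$, apply the GKW decomposition to each piece, substitute into Theorem \ref{riskMininStrategy}, and compute the value process from the payment process $A_t=h_\tau I_{\{t\geq T\}}$ together with \eqref{DecompHexplicit}; all of that is correct and is exactly what the paper does. The gap sits precisely at the point you flag as ``the main obstacle,'' and neither of your proposed fixes closes it. Moreover, your claim that $U^g_{-}\is G(T)$ ``need not lie in $\mathcal{M}^2$ under $g\in L^2$ alone'' (suggesting $g\in L^4$ might be required) is false: the paper proves square integrability under $g\in L^2({\cal F}_T,P)$ only, via the estimate
\begin{equation*}
\E\Bigl[(U^g_{-})^2\is [G(T)]_T\Bigr]\leq \E\Bigl[\int_0^T \bigl([G(T)]_T-[G(T)]_t\bigr)\,d\!\sup_{0\leq s\leq t}(U^g_s)^2\Bigr]
=\E\Bigl[\int_0^T \E\bigl([G(T)]_T-[G(T)]_t\mid {\cal F}_t\bigr)\,d\!\sup_{0\leq s\leq t}(U^g_s)^2\Bigr]
\leq \E\Bigl[\sup_{0\leq s\leq T}(U^g_s)^2\Bigr],
\end{equation*}
i.e., a Stieltjes integration by parts exchanging the roles of the two increasing processes, then optional projection of the anticipating integrand, then the bound $\E\bigl([G(T)]_T-[G(T)]_t\mid{\cal F}_t\bigr)=\E\bigl[G_T(T)^2-G_t(T)^2\mid{\cal F}_t\bigr]\leq 1$ (since $0\leq G(T)\leq 1$), and finally Doob's inequality. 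This is the one genuinely new idea in the paper's proof of this theorem, and it is the piece missing from yours.

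Your fallbacks would not substitute for it. The localization-plus-Fatou device from the proof of Theorem \ref{riskMininStrategy} works there because the two components ($\xi^{(n,\mathbb G)}\is S^\tau$ and $L^{(n,\mathbb G)}$) are orthogonal, so $\E\bigl[[\xi^{(n,\mathbb G)}\is S^{\tau}]_{\sigma}\bigr]+\E\bigl[[L^{(n,\mathbb G)}]_{\sigma}\bigr]\leq \E\bigl[[H,H]_\infty\bigr]$ and Fatou transfers the a priori bound to each piece. Here the three martingales $G_{-}(T)\is U^g$, $U^g_{-}\is G(T)$ and ${\rm Cor}$ are \emph{not} mutually orthogonal, so no analogous a priori bound exists; localization only yields the identity \eqref{XiFdecomposed} with $\mathcal{M}^2_{\loc}$ objects, which is not enough to conclude that ${\rm Cor}_T\in L^2$ and that $\bigl(\xi^{({\rm Cor}_T,\mathbb F)},L^{({\rm Cor}_T,\mathbb F)}\bigr)$ is a bona fide risk-minimizing pair --- something the statement of the theorem itself requires. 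The approximation route runs into the same wall: to pass from bounded $g_n$ to $g\in L^2$ you need $U^{g_n}_{-}\is G(T)\to U^g_{-}\is G(T)$ in $\mathcal{M}^2$, i.e., $\E\bigl[\sup_{t\leq T}\vert U^{g_n}_t-U^g_t\vert^2\,[G(T)]_T\bigr]\to 0$; since $[G(T)]_T$ is integrable but unbounded, this is exactly the displayed estimate applied to $g_n-g$. Both fallbacks therefore presuppose the missing estimate rather than replace it.
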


\noindent The amount $g$ of a pure endowment is purely financial. The \(\mathbb F\)-strategy and the remaining risk for the claim
 \(\E\Bigl[ \int_0^\infty h_u dF_u\ \big|\ {\cal F}_T \Bigr]={g} (1-F_T) = {g} G_T\) are expressed as functions of  the corresponding strategy
  and risk for this  pure financial claim \({g}\), for the pure mortality claim $G_T$ and the correlation ${\rm Cor}_T$ between
   the pure financial market and the mortality model including the time of death. \\
   When $g$ is deterministic then $M^{(g)}=gG(T)$, the martingale $U^{(g)}$ is equal to $g$, and the correlation process $({\rm Cor}_t)_{0\leq t\leq T}$ is a null process. 
   Thus, we get $(\xi^{(g,\mathbb F)},L^{(g,\mathbb F)})=(\xi^{({\rm Cor}_T,\mathbb F)},L^{({\rm Cor}_T,\mathbb F)})\equiv (0,0)$, 
   and  conclude that in this case the pair $(\xi^{(h,\mathbb G)},L^{(h,\mathbb G)}) $ takes the following form:
	\begin{align*}
\xi^{(h,\mathbb G)} &:=g \xi^{(G_T,\mathbb F)} \big(G_{-}+\varphi^{(m)}\big)^{-1}I_{\Rbrack 0,\tau\Rbrack},\\
L^{(h,\mathbb G)} &:= - {\frac{g \xi^{(G_T,\mathbb F)}I_{\Rbrack 0,\tau\Rbrack}}{G_{-}
		\big(G_{-}+\varphi^{(m)}\big)}}\is\widehat{L^{(m)}}
 +{\frac{ g I_{\Rbrack 0,\tau\Rbrack}}{G_{-}}} \is\widehat{L^{(G_T,\mathbb F)}}
	-\frac{g G_{-}(T)}{ G^2_{-}} I_{\Rbrack 0,\tau\wedge T\Rbrack} \is\widehat{m}
	-\frac{g G(T)}{ G} I_{\Rbrack 0,R\Lbrack} \is\left(N^{\mathbb G}\right)^T.
	\end{align*}
This remaining risk $L^{(h,\mathbb G)} $ contains integrals with respect to $N^{\mathbb G}$ and $\widehat{m}$ that 
represent the unsystematic component of the mortality risk and a combination of systematic and unsystematic mortality risk respectively. \\
For this particular case of a pure endowment contract we further compare the pair
$\left(\xi^{(h,\mathbb F)}, L^{(h,\mathbb F)}\right)$ obtained in \cite{barbarin09, biaginibotero15, biaginirheinlander16,biaginischreiber13} 
with our pair given by \eqref{StratAtTerm}-\eqref{StratAtTermRemain}.  In \cite{barbarin09}, the author assumes that the financial market is independent of the mortality model in the
sense that the process $\rm{Cor}$, defined in \eqref{Rho4(g,tau)}, is null. Further, he assumes that $G(T)$ is strongly orthogonal to $S$ meaning that the systematic risk
mortality component cannot be hedged by investing in $S$. This implies that in \eqref{XiFdecomposed}
$\left(\xi^{(G_T,\mathbb F) }, L^{(G_T,\mathbb F) }\right)=(0, G(T))$.  In \cite{biaginibotero15, biaginirheinlander16,biaginischreiber13} , it is also assumed that $G(T)$ is driven by a
local $\mathbb F$-martingale $Y$ which is strongly orthogonal to  $S$  but follow a slightly different approach. They construct a predictable
decomposition of $M^{(g)}$ in terms of $S$ and $Y$ instead of the expression \eqref{decompositionmg}. Hence, the authors do not distinguish the three components (pure financial, pure mortality and correlation) as we do. In \cite[Chapter 5]{barbarin09} , the author studies risk-minimization for a pure endowment contract  under strict independence between the financial and insurance market and other further assumptions on $(S,\mathbb F,\tau)$. It is clear that by imposing additional specifications such as $\tau$ is an $\mathbb F$-pseudo stopping time, or $\tau$ avoids all $\mathbb F$-stopping times, or all $\mathbb F$-martingales are continuous,  our result boils down to that of  in \cite[Proposition 5.1]{barbarin09}. Therefore we conclude that Theorem \ref{PropAtTerm} generalizes \cite{barbarin09}  in several directions.

\begin{proof}[Proof of Theorem \ref{PropAtTerm}]: 
Notice that for the payoff process  $h$ given in (\ref{h1}), we have $h\equiv 0$ on $[0,T]$,   $\E[ \int_0^\infty h_u dD^{o,\mathbb F}_u \big| {\cal F}_T ] =gG_T$ and 
\begin{align*}
M^h_t=M_t^{(g)} &:=
 \ ^{o,\mathbb  F}\left( G_T {g} \right)_t
= \E[I_{\{\tau>T\}}| {\cal F}_t] \E\left[ {g} \big| {\cal F}_t\right]
 + {\Cov} \left(I_{\{\tau>T\}} , {g} \big| {\cal F}_t\right):= G_t(T) U_t^g + {\Cov}_t^g .
\end{align*}
Therefore, in virtue of Corollary  \ref{riskMininStrategyPredictable} (see also Theorem \ref{riskMininStrategy}) 
the proof of Theorem \ref{PropAtTerm} follows immediately as soon as we prove that
\begin{align}\label{XiFdecomposed}
	\xi^{(h,\mathbb F)} &=  G_{-}(T) \xi^{(g,\mathbb F)} + U_-^g \xi^{(G_T,\mathbb F)} +
\xi^{({\rm Cor}_T,\mathbb F)}\ \ \mbox{ and}\nonumber\\
	L^{(h,\mathbb F)}&= G_{-}(T) \is L^{(g,\mathbb F)}  + U_-^g \is L^{(G_T,\mathbb F)} + L^{({\rm Cor}_T,\mathbb F)}.
\end{align}
A direct application of the integration by parts formula to $G_t(T) U_t^g$ leads to
\begin{equation} \label{decompositionmg}
{M^{(g)}} =  G_0(T) U_0^g + G_{-}(T) \is U^g + U_-^g \is G(T) + \mbox{Cor},
\end{equation}
where $\mbox{Cor}$ is the process defined in (\ref{Rho4(g,tau)}). In order to apply the GKW decomposition for each
 of the $\mathbb F$-local martingale in the RHS term of (\ref{decompositionmg}),
  we need to prove that these local martingale are actually (locally) square integrable martingales.
   To this end, we remark that $0\leq G_{-}(T)\leq 1$ and $U^g$ is a square integrable $\mathbb F$-martingale.
    Furthermore, we derive $\displaystyle\sup_{0\leq t\leq T}\vert
     {M_t^{(g)}}\vert\leq \sup_{0\leq t\leq T}\E\left[\ \vert g\vert\ \big|\ {\cal F}_t\right]\in L^2(\Omega, {\cal F},P)$ and
 \begin{align*}
 	\E\Bigl[(U^g_{-})^2\is [G(T)]_T\Bigr]&\leq  \E\left[\int_0^T \sup_{0\leq s< t}(U^g_s)^2 d[G(T)]_t\right]=
 	\E\left[\int_0^T ([G(T)]_T-[G(T)]_t)\,d\sup_{0\leq s\leq t}(U^g_s)^2\right]\\
 	&=
 	\E\left[\int_0^T \E ([G(T)]_T-[G(T)]_t\ \big|\ {\cal F}_t)\, d\sup_{0\leq s\leq t}(U^g_s)^2\right]
 \leq \E\left[\sup_{0\leq s\leq T}(U^g_s)^2\right]<+\infty.
 \end{align*}
   As a result, the three local martingale ${M^{(g)}}$, $G(T)_{-}\is U^g$ and $U^g_{-}\is G(T)$ are square
   integrable martingales, and subsequently $G(T)_{-}\is U^g$, $U^g_{-}\is G(T)$ and $\mbox{Cor}$
    are square integrable martingales.\\
 Therefore, by applying the GKW decomposition to $U^g$, $G(T)$ and $\mbox{Cor}$, we obtain
\[
M^{(g)} =  M_0^{(g)} + \left(G_{-}(T) \xi^{(g,\mathbb F)} + U_-^g \xi^{(G_T,\mathbb F)} + \xi^{({\rm Cor}_T,\mathbb F)} \right)\is S
  + G_{-}(T) \is L^{(g,\mathbb F)}  + U_-^g \is L^{(G_T,\mathbb F)} + L^{({\rm Cor}_T,\mathbb F)},
\]
and the proof of (\ref{XiFdecomposed}) follows immediately. This ends the proof of assertion (a).\\
 Concerning the value process of the corresponding portfolio, we note that the payment process \(A\) is given by
 $A_t=I_{\{t=T\}}I_{\{\tau > T\}}g=I_{\{t=T\}}h_{\tau}$ such that $A_T-A_t=(1-I_{\{t=T\}})h_{\tau}$ and
\[
V_t(\rho^{*,\mathbb G}) = (1-I_{\{t=T\}})H_t,
\]
with $H$ given by \eqref{DecompHexplicit} where the first term is zero since we do not hedge beyond the term of the contract,
 thus $I_{\{\tau>T\}}I_{\{\tau\leq t\}}=0$. This ends the proof of the theorem.\end{proof}

\begin{corollary}\label{cases4correlations} Consider the mortality claim $h_{\tau}$ where $h$ is given by (\ref{h1}),
 and the square integrable $\mathbb F$-martingale $U^g_t:=\E[g\ \big|\ {\cal F}_t]$. Then the following assertions hold.\\
{\rm{(a)}} Suppose that $\tau$ is a pseudo-stopping time. Then the pair $(\xi^{(h,\mathbb G)},L^{(h,\mathbb G)})$, of (\ref{StratAtTerm})-(\ref{StratAtTermRemain}), becomes
\begin{align}
\xi^{(h,\mathbb G)}&:= \Bigl({\frac{G_{-}(T)}{G_{-}}} \xi^{(g,\mathbb F)} + {\frac{U_-^g}{G_{-}}} \xi^{(G_T,\mathbb F)}
 +{\frac{1}{G_{-}}} \xi^{({\rm Cor}_T,\mathbb F)}\Bigr)I_{\Rbrack 0,\tau\Rbrack},\label{stategyCase2}\\
L^{(h,\mathbb G)} &:=  {\frac{G_{-}(T)I_{\Rbrack 0,\tau\Rbrack}}{G_{-}}} \is L^{(g,\mathbb F)}
+  I_{\Rbrack 0,\tau\Rbrack} {\frac{U_-^g}{G_{-}}} \is L^{(G_T,\mathbb F)}
+  {\frac{I_{\Rbrack 0,\tau\Rbrack}}{G_{-}}} \is L^{({\rm Cor}_T,\mathbb F)}
-\frac{M^{(g)}}{ G} I_{\Rbrack 0,R\Lbrack} \is \left(N^{\mathbb G}\right)^T.\label{L(h,G)process2}
\end{align}
 {\rm{(b)}}  Suppose $\tau$ is independent of ${\cal F}_{\infty}$ and $P(\tau>T)>0$.
 Then $(\xi^{(h,\mathbb G)},L^{(h,\mathbb G)})$ takes the following form 
\begin{equation}
\xi^{(h,\mathbb G)}_t:={\frac{P(\tau>T)}{P(\tau\geq t)}} \xi^{(g,\mathbb F)}_tI_{\{t\leq \tau\}},\ \ \
L^{(h,\mathbb G)}_t :=\int_0^{t\wedge\tau} {\frac{P(\tau>T)}{P(\tau\geq s)}} dL^{(g,\mathbb F)}_s
 -\int_0^{t\wedge T} {\frac{P(\tau>T)U^{(g)}_s}{P(\tau>s)}}dN^{\mathbb G}_s.\label{L(h,G)process1}
\end{equation}
\end{corollary}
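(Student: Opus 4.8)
The plan is to deduce both assertions directly from Theorem \ref{PropAtTerm} by specializing the three quantities $\varphi^{(m)}$, the hatted martingales, and the $\mathbb F$-building blocks $(\xi^{(G_T,\mathbb F)},L^{(G_T,\mathbb F)})$ and $(\xi^{({\rm Cor}_T,\mathbb F)},L^{({\rm Cor}_T,\mathbb F)})$ under the respective hypotheses; no fresh optimization is needed, since Theorem \ref{PropAtTerm} already supplies the optimal pair in full generality.

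For assertion (a), I would start exactly as in the proof of Corollary \ref{RiskMiniPseudo}: since $\tau$ is a pseudo-stopping time, \cite{nikeghbaliyor05} gives that the BMO martingale $m$ is constant, whence $U\equiv 0$ in (\ref{processU}) and therefore $\varphi^{(m)}\equiv 0$, $L^{(m)}\equiv 0$, $I_{\Rbrack 0,\tau\Rbrack}\is\widehat m\equiv 0$, and $\widehat M=M^\tau$ for every $M\in{\cal M}(\mathbb F)$. Substituting $\varphi^{(m)}=0$ into (\ref{StratAtTerm}) immediately yields (\ref{stategyCase2}). For the remaining risk (\ref{StratAtTermRemain}), the first term (the $\widehat{L^{(m)}}$ integral) and the $\widehat m$ term vanish by the above; in the surviving terms the identity $\widehat{L^{(\cdot,\mathbb F)}}=(L^{(\cdot,\mathbb F)})^\tau$ together with the factor $I_{\Rbrack 0,\tau\Rbrack}$ lets me rewrite each stochastic integral against $\widehat{L^{(\cdot,\mathbb F)}}$ as the corresponding integral against $L^{(\cdot,\mathbb F)}$, since on $\Rbrack 0,\tau\Rbrack$ stopping at $\tau$ is immaterial. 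This produces (\ref{L(h,G)process2}).

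For assertion (b), the first step is to note that independence of $\tau$ from ${\cal F}_\infty$ makes $\tau$ a pseudo-stopping time, so assertion (a) is available. I would then identify the $\mathbb F$-building blocks: under independence $G_t=P(\tau>t)$ and $G_t(T)=P(\tau>T)$ are deterministic, so $G(T)$ is a constant $\mathbb F$-martingale, giving $(\xi^{(G_T,\mathbb F)},L^{(G_T,\mathbb F)})\equiv(0,0)$ and $[G(T),U^g]\equiv 0$. Next I would show the conditional covariance in (\ref{Rho4(g,tau)}) vanishes: conditioning on ${\cal F}_\infty$ and using $\E[I_{\{\tau>T\}}\mid{\cal F}_\infty]=P(\tau>T)$ gives $\E[I_{\{\tau>T\}}g\mid{\cal F}_t]=P(\tau>T)U^g_t=\E[I_{\{\tau>T\}}\mid{\cal F}_t]\,\E[g\mid{\cal F}_t]$, hence ${\rm Cor}\equiv0$ and $(\xi^{({\rm Cor}_T,\mathbb F)},L^{({\rm Cor}_T,\mathbb F)})\equiv(0,0)$, together with $M^{(g)}_t=P(\tau>T)U^g_t$. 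Inserting these into (\ref{stategyCase2})--(\ref{L(h,G)process2}) and using $G_{t-}=P(\tau\ge t)$ (the left limit of $P(\tau>\cdot)$) and $G_t=P(\tau>t)$ collapses all but the $\xi^{(g,\mathbb F)}$ term and the $N^{\mathbb G}$ term, yielding (\ref{L(h,G)process1}).

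The routine parts here are the substitutions. The step I expect to require the most care is the bookkeeping that identifies the hatted $\mathbb F$-martingales with their stopped versions and confirms that the $I_{\Rbrack 0,\tau\Rbrack}$ (resp. $I_{\Rbrack 0,R\Lbrack}$) factors let me pass from integrals against $\widehat{L^{(\cdot,\mathbb F)}}$ (resp. expressions involving the now-deterministic $R=\inf\{t:P(\tau>t)=0\}$) to the clean formulas with $P(\tau\ge\cdot)$ and $P(\tau>\cdot)$ in the denominators, while checking that the integration domains $t\wedge\tau$ and $t\wedge T$ in (\ref{L(h,G)process1}) are correctly produced by $I_{\Rbrack 0,\tau\Rbrack}$ and $(N^{\mathbb G})^T$ respectively.
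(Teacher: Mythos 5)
Your proposal is correct and follows essentially the same route as the paper: pseudo-stopping time implies $m\equiv m_0$, hence $\varphi^{(m)}\equiv 0$, $L^{(m)}\equiv 0$ and $\widehat M=M^\tau$, which are substituted into (\ref{StratAtTerm})--(\ref{StratAtTermRemain}); assertion (b) then follows by noting independence makes $\tau$ a pseudo-stopping time with deterministic $G$, $G(T)$, vanishing ${\rm Cor}$, and $M^{(g)}=P(\tau>T)U^g$. The only (immaterial) difference is order of presentation: the paper reduces (b) to (a) first and then proves (a), while you prove (a) first; your explicit conditioning argument for ${\rm Cor}\equiv 0$ fills in a detail the paper leaves implicit.
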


\begin{proof} It is clear that, when $\tau$ is independent of ${\cal F}_{\infty}$, we have
$$
G_t(T)=P(\tau>T)=G_{t-}(T),\ \ \ G_t=P(\tau>t),\ \ \ G_{t-}=P(\tau\geq t)=\widetilde G_t,\ \ \ m\equiv m_0,\ \ \ {\rm Cor}\equiv 0.$$
As a consequence, $\tau$ is a pseudo-stopping time  and  
\begin{equation*}
\xi^{(G_T,\mathbb F)}\equiv  0,\ \ \ L^{(G_T,\mathbb F)}\equiv 0,\ \ \xi^{({\rm Cor}_T,\mathbb F)}\equiv 0,
\ \ \ L^{({\rm Cor}_T,\mathbb F)}\equiv 0.
\end{equation*}
Thus, by plugging these in (\ref{stategyCase2}) and (\ref{L(h,G)process2}) and using the facts that $R>T$ $P$-a.s. (due to the assumption $G_T=P(\tau>T)>0$) and $M^{(g)}_t:=\E[gG_T|{\cal F}_t]=P(\tau>T)U^{(g)}_t$, assertion (b) follows immediately from assertion (a). Hence, the rest of the proof focuses on proving assertion (a).  To this end, recall that when $\tau$ is a pseudo-stopping time, we have $m\equiv m_0$, and as a consequence we get
$$\varphi^{(m)}\equiv 0,\quad L^{(m)}\equiv 0, \quad {\rm{and}}\quad  \widehat{M}=M^{\tau}\quad\mbox{for any}\quad M\in{\cal M}_{loc}(\mathbb F). $$
Hence, by inserting these in (\ref{StratAtTerm}) and (\ref{StratAtTermRemain}), assertions (a) follows immediately, and the proof of the corollary is completed.\end{proof}

\subsection{The case of annuity up-to the time of death}
Herein, we address an annuity paid until the time of death of the policyholder, or until the end of the contract $T$.
 This insurance contract is also called endowment insurance and is defined more generally in Definition \ref{insuranceConctracts}-(c).  Let \(C:=(C_t)_{t\geq 0}\) be  the
 $\mathbb F$-optional and square integrable (with respect to $P\otimes D$) process such that $C_t$ represents
 the discounted accumulated amount up to time $t$ paid by the insurer, with \(C_0=0\).
   Then,  \(I_{\{\tau > T\}}{C}_T + I_{\{\tau\le T\}} {C}_\tau \) gives the discounted payoff up to the time of death or the end $T$ of
   the contract  whatever occurs first. Thus, the payoff process $h$ takes the form
of \begin{equation}\label{h3}
h_t:=I_{\{t > T\}}{C}_T + I_{\{t\le T\}}{C}_t.
\end{equation}

\begin{theorem} \label{PropAnnuity}
Suppose that (\ref{mainassumpOn(X,tau)}) holds, and $h$ is given by (\ref{h3}). Let \(U^{K}\) be the $\mathbb F$-martingale
 $U^{K}_t :=\E [K \mid {\cal F}_t]$ for any $K\in L^2({\cal F}_T,P)$. Then the following assertions hold.\\
 {\rm{(a)}} The risk-minimizing strategy and the remaining risk for the mortality claim
  \(h_{\tau}\), under the model $(S^{\tau},\mathbb G)$, are  given by
\begin{eqnarray}\label{StrategyforAnuity}
\xi^{(h,\mathbb G)}:={\frac{G_{-}(T) \xi^{({C_T},\mathbb F)} + U_-^{C_T} \xi^{(G_T,\mathbb F)} + \xi^{({\rm Cor}_T,\mathbb F)}
+\xi^{({{\widetilde C}_T},\mathbb F)}}{G_{-}+\varphi^{(m)}}}I_{\Rbrack 0,\tau\Rbrack}.
\end{eqnarray}
and
\begin{align}\label{RemainRisk4Anuity}
L^{(h,\mathbb G)} :=\ &- {\frac{\xi^{({{\widetilde C}_T},\mathbb F)}+ G_{-}(T) \xi^{({C_T},\mathbb F)}
+ U_-^{C_T} \xi^{(G_T,\mathbb F)} + \xi^{({\rm Cor}_T,\mathbb F)}}{G_{-}
\big(G_{-}+\varphi^{(m)}\big)}}I_{\Rbrack 0,\tau\Rbrack}\is\widehat{L^{(m)}}\nonumber\\
 &+I_{\Rbrack 0,\tau\Rbrack} {\frac{G_{-}(T)}{G_{-}}} \is\widehat{L^{({C_T}},\mathbb F)}
  +{\frac{I_{\Rbrack 0,\tau\Rbrack}}{G_{-}}} \is \widehat{L^{({{\widetilde C}_T}},\mathbb F)}
   + I_{\Rbrack 0,\tau\Rbrack}{\frac{ U_-^{C_T}}{G_{-}}} \is\widehat{L^{(G_T,\mathbb F)}} \nonumber\\
& + {\frac{I_{\Rbrack 0,\tau\Rbrack}}{G_{-}}} \is\widehat{L^{({\rm Cor}_T,\mathbb F)}}
-{\frac{M^{(C_T)}_{-}}{ G^2_{-}}}I_{\Rbrack 0,\tau\wedge T\Rbrack} \is\widehat{m}
 -{\frac{M^{(C_T)}}{G}} I_{\Lbrack 0,R\Lbrack} \is \left(N^{\mathbb G}\right)^T.
\end{align}

Herein, $M_t^{(C_T)}:=\E [C_TG_T \mid {\cal F}_t]$,
\begin{eqnarray}\label{C(11,2)processes}
{\rm Cor}_t :=[G(T),U^{C_T}]_t + {\Cov} \left(I_{\{\tau> T\}}, {C}_T \big| {\cal F}_t \right),\quad \mbox{and}\quad
{\widetilde C}_t :=\int_0^t C_u dD^{o,\mathbb F}_u.
\end{eqnarray}
The pairs of processes $\left(\xi^{({C_T},\mathbb F)}, L^{({C_T},\mathbb F)}\right)$, $\left(\xi^{(G_T,\mathbb F)},L^{(G_T,\mathbb F)}\right)$,
 $\left(\xi^{({\rm Cor}_T,\mathbb F)},L^{({\rm Cor}_T,\mathbb F)}\right) $, and \\
  $\left(\xi^{({\widetilde C}_T,\mathbb F)},L^{({\widetilde C}_T,\mathbb F)}\right) $ are the risk-minimizing strategies and
   the remaining (undiversified) risk, under the model $(S,\mathbb F)$, for the contracts with claims \({C}_T\), \(G_T\), \({\rm Cor}_T\),
    and \({\widetilde C}_T\) respectively, and the pair $(\varphi^{(m)},L^{(m)})$ is given by
     Lemma \ref{consequences4MainAssum}.\\
{\rm{(b)}} The value of the risk-minimizing portfolio \(V(\rho^{*,\mathbb G})\) under the model $(S^{\tau},\mathbb G)$, is given by	
\begin{eqnarray}
V(\rho^{*,\mathbb G})=  G^{-1}\ ^{o,\mathbb F}\!\left(h_{\tau}I_{\Lbrack 0,\tau\Lbrack}\right)I_{\Lbrack 0,\tau\Lbrack}-
 I_{[T]} G^{-1}\ ^{o,\mathbb F}\!\left(I_{\{\tau\le T\}} {C}_\tau I_{\Lbrack 0,\tau\Lbrack}\right)I_{\Lbrack 0,\tau\Lbrack}.
\end{eqnarray}
\end{theorem}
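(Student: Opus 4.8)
The plan is to deduce Theorem \ref{PropAnnuity} from the general formula of Theorem \ref{riskMininStrategy} by computing the $\mathbb F$-martingale $M^h$ attached to the annuity payoff \eqref{h3} and by running its Galtchouk--Kunita--Watanabe (GKW) decomposition through the four $\mathbb F$-building blocks $C_T$, $G_T$, $\mathrm{Cor}_T$ and $\widetilde C_T$. First I would split $h=h^{(1)}+h^{(2)}$ with $h^{(1)}_t:=I_{\{t>T\}}C_T$ and $h^{(2)}_t:=I_{\{t\le T\}}C_t$, so that $h^{(1)}$ is exactly a pure endowment with (random) benefit $C_T$ and $h^{(2)}$ is the death/annuity component. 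Since $\int_0^\infty h_u\,dD^{o,\mathbb F}_u=\widetilde C_T+C_T\,(D^{o,\mathbb F}_\infty-D^{o,\mathbb F}_T)$ and $m=G+D^{o,\mathbb F}$ is a martingale, taking $\mathbb F$-optional projections gives $M^h=M^{(C_T)}+M^{(\widetilde C_T)}$, where $M^{(C_T)}_t=\E[C_TG_T\mid\mathcal F_t]$ and $M^{(\widetilde C_T)}_t=\E[\widetilde C_T\mid\mathcal F_t]$. For the endowment piece I would reuse the integration-by-parts identity from the proof of Theorem \ref{PropAtTerm} (see \eqref{decompositionmg}), namely $M^{(C_T)}=G_0(T)U^{C_T}_0+G_-(T)\is U^{C_T}+U^{C_T}_-\is G(T)+\mathrm{Cor}$, which produces the correlation process $\mathrm{Cor}$ of \eqref{C(11,2)processes}.

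Next I would verify that $U^{C_T}$, $G(T)$, $\mathrm{Cor}$ and $M^{(\widetilde C_T)}$ are each (locally) square-integrable $\mathbb F$-martingales, so that Theorem \ref{GKWdecomposition} applies to each. The bounds $0\le G_-(T)\le1$, $\sup_{0\le t\le T}\vert M^{(C_T)}_t\vert\in L^2$ and the Fubini/Doob estimate on $(U^{C_T}_-)^2\is[G(T)]_T$ used in Theorem \ref{PropAtTerm} carry over word for word for the endowment block, while the square-integrability of $\widetilde C_T=\int_0^T C_u\,dD^{o,\mathbb F}_u$ follows from the Doob estimate of Step~1 in the proof of Theorem \ref{riskMininStrategy} applied to the optional kernel $C$. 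Applying GKW and linearity then gives $\xi^{(h,\mathbb F)}=G_-(T)\xi^{(C_T,\mathbb F)}+U^{C_T}_-\xi^{(G_T,\mathbb F)}+\xi^{(\mathrm{Cor}_T,\mathbb F)}+\xi^{(\widetilde C_T,\mathbb F)}$ together with the analogous sum for $L^{(h,\mathbb F)}$. Feeding these into \eqref{riskminG}--\eqref{riskminGremaining}, and using that the hat map $M\mapsto\widehat M$ of \eqref{processMhat} commutes with $\mathbb F$-predictable integration, yields \eqref{StrategyforAnuity}--\eqref{RemainRisk4Anuity}; in particular the $\widehat m$- and $(N^{\mathbb G})^T$-coefficients are read off from $M^h_--(h\is D^{o,\mathbb F})_-$ and $hG-M^h+h\is D^{o,\mathbb F}$ on $\Rbrack0,\tau\wedge T\Rbrack$ and $\Rbrack0,R\Lbrack$, whose endowment component supplies the $M^{(C_T)}$-terms. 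Admissibility of $\xi^{(h,\mathbb G)}$ is inherited directly from Theorem \ref{riskMininStrategy}, so it need not be re-proved.

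For part (b) I would take the annuity payment process $A$ with $A_T=h_\tau$, so that Theorem \ref{RiskMini}(b) gives $V_t(\rho^{*,\mathbb G})=\E[A_T-A_{t\wedge T}\mid\mathcal G_t]=H_t-A_{t\wedge T}$ with $H$ as in \eqref{DecompHexplicit}. Rewriting $H_t-A_{t\wedge T}$ on $\{t<\tau\}$ through $^{o,\mathbb F}(h_\tau I_{\Rbrack0,\tau\Lbrack})$ produces the first term $G^{-1}\,{}^{o,\mathbb F}(h_\tau I_{\Rbrack0,\tau\Lbrack})I_{\Rbrack0,\tau\Lbrack}$, while isolating the death benefit already disbursed by the terminal date accounts for the correction carried by $I_{[T]}$. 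The delicate point throughout is the optional (non-predictable) annuity kernel $C$: one cannot invoke the predictable Corollary \ref{riskMininStrategyPredictable}, so the full optional machinery of Theorem \ref{riskMininStrategy} must be used, and one must carefully track how the running benefit $\widetilde C$ splits between the hedgeable $\widehat{L^{(\widetilde C_T,\mathbb F)}}$-term and the residual mortality terms, together with the jump of the value process at the terminal time $T$ that forces the correction in assertion (b). This optional/terminal bookkeeping is the main obstacle; once it is settled, the remainder of the proof is a linear reassembly of GKW blocks that are already controlled by Theorem \ref{PropAtTerm} and Theorem \ref{riskMininStrategy}.
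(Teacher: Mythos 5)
Your proposal follows essentially the same route as the paper's own proof: split $h=h^{(1)}+h^{(2)}$ into the pure-endowment part with benefit $C_T$ and the term part, reduce the endowment block to Theorem \ref{PropAtTerm} (via the integration-by-parts/$\mathrm{Cor}$ decomposition of $M^{(C_T)}$), add the GKW block for $\widetilde C_T$, use linearity of the GKW decomposition, and feed the resulting $\mathbb F$-decomposition into Theorem \ref{riskMininStrategy}; part (b) is likewise obtained by combining the endowment and term-insurance value formulas. The argument is correct at the same level of detail as the paper's.
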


\begin{proof} Thanks to Theorem \ref{riskMininStrategy}, the above theorem will follow immediately as long as we prove that
\begin{align}\label{XihandLh}
\xi^{(h,\mathbb F)}&= G_{-}(T) \xi^{({C_T},\mathbb F)} + U_-^{C_T} \xi^{(G_T,\mathbb F)} + \xi^{({\rm Cor}_T(C_T),\mathbb F)}
+\xi^{(U^{{\widetilde C}_T},\mathbb F)},\nonumber\\
L^{(h,\mathbb F)}&= G_{-}(T)\is{L^{({C_T},\mathbb F)}} +{L^{({{\widetilde C}_T},\mathbb F)}}+
 U_-^{C_T} \is {L^{(G_T,\mathbb F)}}+ {L^{({\rm Cor}_T,\mathbb F)}},
\end{align}
where $\left(\xi^{(h,\mathbb F)},L^{(h,\mathbb F)}\right)$ is the risk-minimizing strategy and the remaining (undiversified) risk of the payoff
$\E\left[\int_0^{\infty}h_u dD^{o,\mathbb F}_u\ \big|\ {\cal F}_T\right]$ under the model $(S,\mathbb F)$. To prove (\ref{XihandLh}),
 we first remark that $h=h^{(1)}+h^{(2)}$, where $h^{(1)}$ --has the same form as the payoff process of
  Subsubsection \ref{SubsetionpayoffPrevisible}-- and $h^{(2)}$ are given by
\begin{equation}\label{H1H2}
h^{(1)}_t:=C_T I_{\{t>T\}}\ \ \ \ \mbox{and}\ \ \ \  h^{(2)}_t=I_{\{t\leq T\}}C_t.
\end{equation}
Thus, we derive
$$ \E\left[\int_0^{\infty}h_u dD^{o,\mathbb F}_u\ \big|\ {\cal F}_T\right]=
\E\left[\int_0^{\infty}h_u^{(1)} dD^{o,\mathbb F}_u\ \big|\ {\cal F}_T\right]+\int_0^T C_u dD^{o,\mathbb F}_u
=:\E\left[\int_0^{\infty}h_u^{(1)} dD^{o,\mathbb F}_u\ \big|\ {\cal F}_T\right]+{\widetilde  C}_T,$$
and deduce that
\begin{equation*}
\xi^{(h,\mathbb F)}=\xi^{(h^{(1)},\mathbb F)}+\xi^{({\widetilde C}_T,\mathbb F)},\quad \mbox{and}\quad L^{(h,\mathbb F)}
=L^{(h^{(1)},\mathbb F)}+L^{({\widetilde C}_T,\mathbb F)}.
\end{equation*}
Therefore, by combining this with Theorem \ref{PropAtTerm} with $g=C_T$, the proof of (\ref{XihandLh}) follows immediately.\\
The value process $V(\rho^{*,\mathbb G})$ of the risk-minimizing strategy under the model $(S^{\tau},\mathbb G)$ also consists of two
parts given by \eqref{ValueAtTerm} and \eqref{ValueTI} for $h^{(1)}$ and $h^{(2)}$, respectively. This ends the proof of the theorem.
\end{proof}

Similarly as in the case of a pure endowment, for the annuity contract, in  \cite{biaginischreiber13}  the authors derive a predictable decomposition for the martingale $M^{(C_T)}$  in terms of $S$ and of the $\mathbb F$-local
martingale $Y$ which drives $G(T)$ and which is strongly orthogonal to $S$.  Hence, their assumptions correspond to $\rm{Cor}=0$,  $\xi^{(G_T,\mathbb F) }=0$ and there is a
term in $Y$ instead of our $L^{(G_T,\mathbb F) }$ in \eqref{XihandLh}. In \cite{barbarin09}, the author proceeded very differently and  did not write the payoff of the annuity contract as a sum of a pure endowment and a term insurance, while he worked with the integral expression. This made the results very involved and hard to interpret. Again, for the annuity contract, \cite{barbarin09}  falls in the case where  $\rm{Cor}=0$  and $G(T)$ is orthogonal to $S$ (i.e., $\langle G(T), S\rangle^{\mathbb F}\equiv 0$).

\begin{corollary}\label{cases4Anuity} Consider the mortality claim $h_{\tau}$ where $h$ is given by (\ref{h3}),
 and put $U^K_t:=\E[K\ \big|\ {\cal F}_t]$ and $M^{(K)}_t:=\E[G_T K|{\cal F}_t]$  for any $K\in L^2({\cal F}_T,P)$.
  Then the following assertions hold.\\
{\rm{(a)}} Suppose $\tau$ is a pseudo-stopping time. Then the pair $(\xi^{(h,\mathbb G)},L^{(h,\mathbb G)})$, of (\ref{StrategyforAnuity})-(\ref{RemainRisk4Anuity}), becomes
\begin{align}
\xi^{(h,\mathbb G)}:=\ &\Bigl({\frac{G_{-}(T)}{G_{-}}} \xi^{(C_T,\mathbb F)} +
 {\frac{U_-^{C_T}}{G_{-}}} \xi^{(G_T,\mathbb F)} +{\frac{1}{G_{-}}} \xi^{({\rm Cor}_T,\mathbb F)}+
 {\frac{1}{G_{-}}} \xi^{(\widetilde{C}_T,\mathbb F)}\Bigr)I_{\Rbrack 0,\tau\Rbrack},\label{stategyCase22}\\
L^{(h,\mathbb G)} :=\ & I_{\Rbrack 0,\tau\Rbrack} {\frac{G_{-}(T)}{G_{-}}} \is L^{(C_T,\mathbb F)}+
{\frac{I_{\Rbrack 0,\tau\Rbrack}}{G_{-}}} \is L^{({{\widetilde C}_T},\mathbb F)}  +
 I_{\Rbrack 0,\tau\Rbrack} {\frac{U_-^{C_T}}{G_{-}}} \is L^{(G_T,\mathbb F)}\nonumber\\
&  + G_{-}^{-1}I_{\Rbrack 0,\tau\Rbrack} \is L^{({\rm Cor}_T,\mathbb F)}
 -{\frac{M^{(C_T)}}{G}} I_{\Lbrack 0,R\Lbrack} \is\left( N^{\mathbb G}\right)^T.\label{L(h,G)process22}
\end{align}
{\rm{(b)}}  Suppose that $\tau$ is independent of the initial market ${\cal F}_{\infty}$, and $P(\tau>T)>0$.
 Then we get 
 \begin{align}
\xi^{(h,\mathbb G)}_t&:= {\frac{P(\tau>T)\xi^{(C_T,\mathbb F)}_t+
\xi^{(\widetilde{C}_T,\mathbb F)}_t}{P(\tau\geq t)}} I_{\{t\leq \tau\}},\label{L(h,G)process11}\\
L^{(h,\mathbb G)}_t &:= \int_0^{t\wedge\tau} {\frac{P(\tau>T)}{P(\tau\geq s)}} dL^{(C_T,\mathbb F)}_s+
\int_0^t{\frac{1}{P(\tau>s)}} dL^{({\widetilde C}_T,\mathbb F)}_s	-\int_0^{t\wedge T} {\frac{P(\tau>T)}{P(\tau>s)}}dN^{\mathbb G}_s
.\end{align}
\end{corollary}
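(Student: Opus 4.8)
The plan is to obtain this corollary as a direct specialization of Theorem \ref{PropAnnuity}, following exactly the scheme used to derive Corollary \ref{cases4correlations} from Theorem \ref{PropAtTerm}: first prove assertion (a) by feeding the pseudo-stopping-time consequences into (\ref{StrategyforAnuity})--(\ref{RemainRisk4Anuity}), and then deduce assertion (b) from (a) by exploiting the extra structure that independence of $\tau$ and ${\cal F}_\infty$ provides. No new analytic estimate is needed, since integrability and admissibility were already settled in Theorem \ref{riskMininStrategy} and inherited through Theorem \ref{PropAnnuity}.

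For assertion (a), I would start from the characterization of \cite{nikeghbaliyor05} that $\tau$ is an $\mathbb F$-pseudo-stopping time precisely when the BMO martingale $m$ is constant, $m\equiv m_0$. From $m\equiv m_0$ I read off, exactly as in the proof of Corollary \ref{RiskMiniPseudo}, the four consequences $\varphi^{(m)}\equiv 0$, $L^{(m)}\equiv 0$, $I_{\Rbrack 0,\tau\Rbrack}\is\widehat m\equiv 0$ and $\widehat M=M^\tau$ for every $M\in{\cal M}_{\loc}(\mathbb F)$ (the last being immediate from (\ref{processMhat}), since $[M,m]\equiv 0$ when $m$ is constant). Substituting these into (\ref{StrategyforAnuity}) deletes $\varphi^{(m)}$ from the denominator and yields (\ref{stategyCase22}); substituting into (\ref{RemainRisk4Anuity}) kills the $\widehat{L^{(m)}}$-integral and the $\widehat m$-integral outright, and turns every remaining $\widehat{L^{(\cdot,\mathbb F)}}$ into its $\tau$-stopped counterpart $I_{\Rbrack 0,\tau\Rbrack}\is L^{(\cdot,\mathbb F)}$, producing (\ref{L(h,G)process22}). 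The four $\mathbb F$-building blocks attached to $C_T$, $G_T$, ${\rm Cor}_T$ and $\widetilde C_T$ are otherwise untouched.

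For assertion (b) I would record the stronger facts implied by independence of $\tau$ and ${\cal F}_\infty$, namely $G_t(T)=P(\tau>T)=G_{t-}(T)$, $G_t=P(\tau>t)$, $G_{t-}=\widetilde G_t=P(\tau\ge t)$, ${\rm Cor}\equiv 0$, and in particular $m\equiv m_0$, so that $\tau$ is a pseudo-stopping time and assertion (a) applies. Because $G(T)$ is then a \emph{constant} (deterministic) $\mathbb F$-martingale, its GKW decomposition against $S$ is trivial, giving $\xi^{(G_T,\mathbb F)}\equiv 0$ and $L^{(G_T,\mathbb F)}\equiv 0$; likewise ${\rm Cor}\equiv 0$ forces $\xi^{({\rm Cor}_T,\mathbb F)}\equiv 0$ and $L^{({\rm Cor}_T,\mathbb F)}\equiv 0$. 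The assumption $P(\tau>T)>0$ gives $R>T$ $P$-a.s., so the indicator $I_{\Lbrack 0,R\Lbrack}$ is identically one on $\Lbrack 0,T\Rbrack$ and the $(N^{\mathbb G})^T$-integral is not truncated by $R$; finally $M^{(C_T)}_t=\E[C_T G_T\mid{\cal F}_t]=P(\tau>T)U^{C_T}_t$ simplifies its integrand. Plugging all of this into (\ref{stategyCase22})--(\ref{L(h,G)process22}) collapses the expressions to (\ref{L(h,G)process11}) and its companion formula, the $\widetilde C_T$-block surviving because $\widetilde C_T=\int_0^T C_u\,dD^{o,\mathbb F}_u$ still carries a genuine financial hedge even under independence.

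I expect the only delicate point -- the \emph{main obstacle} -- to be the bookkeeping of which of the four $\mathbb F$-strategies survive under each hypothesis, together with the precise simplification of the $\widehat m$- and $N^{\mathbb G}$-integrals. Concretely, one must confirm that independence annihilates \emph{only} the $G_T$- and ${\rm Cor}_T$-blocks and not the $C_T$- and $\widetilde C_T$-blocks, and that the coefficient $M^{(C_T)}/G$ of the $N^{\mathbb G}$-term reduces cleanly on $\Rbrack 0,T\Rbrack$ once $R>T$ removes the truncation. Everything else is substitution, parallel to the pure-endowment corollary.
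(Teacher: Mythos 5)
Your proposal is correct and follows exactly the paper's own route: the paper's proof is literally ``mimic the proof of Corollary \ref{cases4correlations} using Theorem \ref{PropAnnuity} instead,'' which is precisely your plan of feeding the pseudo-stopping-time consequences ($m\equiv m_0$, hence $\varphi^{(m)}\equiv 0$, $L^{(m)}\equiv 0$, $\widehat M=M^{\tau}$) into (\ref{StrategyforAnuity})--(\ref{RemainRisk4Anuity}) for assertion (a), and then using the independence consequences ($G(T)$ constant, ${\rm Cor}\equiv 0$, $R>T$ a.s., $M^{(C_T)}=P(\tau>T)U^{C_T}$) to collapse (a) into (b). Your bookkeeping of which blocks vanish ($G_T$ and ${\rm Cor}_T$) and which survive ($C_T$ and $\widetilde C_T$) matches the paper's argument.
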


\begin{proof} The proof of this corollary mimics the proof of  Corollary \ref{cases4correlations} using Theorem \ref{PropAnnuity} instead.
\end{proof}



\section{Hedging mortality risk with insurance securitization}\label{SectionSecuritisation}

In this section, we address the hedging problem for mortality liabilities, using the risk-minimization criterion of Subsection \ref{subsection4Riskmini}, by investing in both the stock and one (or more) of the morality/longevity securities defined in Definition \ref{insuranceConctracts} . This process is known, in the insurance literature, as insurance securitization. Thus, we need to specify the dynamics of these derivatives that will be used in the securitization process.

The following theorem, see  \cite[Theorem 3.1]{choullidaveloose2018}, elaborates the prices' dynamics of these insurance contracts while allowing the death time $\tau$ to have an arbitrary model and under the assumption that {\it $P$ is
 a risk-neutral probability for the model
$(\Omega,\mathbb G)$}. {\bf This assumption means that all discounted price processes of traded securities in the market $(\Omega,\mathbb G)$ are local martingales under $P$}. In our view this assumption on the probability $P$  is not a restriction. In fact, on the one hand, it is dictated by the F\"ollmer-Sondermann optimization method of Subsection  \ref{subsection4Riskmini} considered herein. On the other hand, one can calculate every process used in the theorem (starting with the processes $G$, $\widetilde G$, $G_{-}$) under a chosen risk-neutral measure, $Q$, for the informational model $(\Omega,\mathbb G)$. It is important to mention that the following theorem is a direct application of our martingale representation theorems \cite[Theorems 2.19 and  2.22]{choullidaveloose2018}.

\begin{theorem}\cite[Theorem 3.1]{choullidaveloose2018}\label{contractsStructures} Suppose $P$ is a risk-neutral probability for $(\Omega, \mathbb G)$. Then the following hold.\\
{\rm{(a)}} The discounted price process of the pure endowment insurance contract with benefit $g\in L^1({\cal F}_T, P)$  at term $T$, is denoted by
$P^{(g)}$, and is given by
\begin{equation}\label{PureEndowment}
P^{(g)}=P^{(g)}_0+{\frac{I_{\Rbrack 0,\tau\wedge T\Rbrack}}{G_{-}}}\is\widehat{M^{(g)}}
-{\frac{M^{(g)}_{-}}{G_{-}^2}}I_{\Rbrack 0,\tau\wedge T\Rbrack}\is\widehat m -{\frac{ M^{(g)}}{G}}I_{\Rbrack 0,R\Lbrack}\is (N^{\mathbb G})^T,
\  \mbox{with}\ M^{(g)}_t:=\E\left[gG_T\ \big|\ {\cal F}_t\right].
\end{equation}
{\rm{(b)}} The discounted price process of the longevity bond, with term $T$, is denoted by $B$ and satisfies 
\begin{align}\label{LongevityBond}
B^{\tau}&= B_0+{\frac{I_{\Rbrack 0,\tau\wedge T\Rbrack}}{G_{-}}}\is \widehat{M^{(B)}}
-{\frac{M^{(B)}_{-}-{\overline D}^{o,\mathbb F}_{-}}{G_{-}^2}}I_{\Rbrack 0,T\wedge\tau\Rbrack}\is\widehat m
 +{\frac{\xi^{(G)}G- M^{(B)}+{\overline D}^{o,\mathbb F}}{G}}I_{\Rbrack 0,R\Lbrack}I_{\Rbrack 0,T\Rbrack}\is N^{\mathbb G}\nonumber\\
&\quad +\left(\E[G_T\ \big|\ {\cal G}_{\tau}]-\xi^{(G)}_{\tau}\right) I_{\Lbrack\tau, +\infty\Lbrack},
\end{align}
where
\begin{equation}\label{xiGandM(G)}
M^{(B)}_t:=\E\left[{\overline D}^{o,\mathbb F}_{\infty}-{\overline D}^{o,\mathbb F}_0\ \big|\ {\cal F}_{t\wedge T}\right],
\quad \xi^{(G)}:={\frac{d{\overline D}^{o,\mathbb F}}{d{D}^{o,\mathbb F}}},\quad {\overline D}^{o,\mathbb F}:=
\left(G_T I_{\Lbrack\tau, +\infty\Lbrack}\right)^{o,\mathbb F}.
\end{equation}
\end{theorem}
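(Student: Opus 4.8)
The plan is to compute the discounted price process of each security as the $\mathbb{G}$-optional projection of its (discounted) terminal payoff, and then apply the optional martingale representation of Theorem \ref{TheoRepresentation} (and its extension \cite[Theorem 2.22]{choullidaveloose2018}) directly. Since $P$ is a risk-neutral measure for $(\Omega,\mathbb{G})$, each discounted price process is a $\mathbb{G}$-(local) martingale that equals the conditional expectation of the payoff given $\mathcal{G}_t$; the task is merely to recognize each payoff as a functional $h_\tau$ or $h_T$ and read off the three orthogonal components. The main obstacle is that the pure endowment pays at the \emph{fixed} term $T$ rather than at $\tau$, and the longevity bond pays at $T$ on survival, so the raw representation of Theorem \ref{TheoRepresentation} (which is tailored to payoffs of the form $h_\tau$ realized at death) must be adapted to a payoff occurring at $T$ on $\{\tau > T\}$; this is exactly where the stopping at $\tau\wedge T$ and the extra jump term $(\mathbb{E}[G_T\mid\mathcal{G}_\tau]-\xi^{(G)}_\tau)I_{\Lbrack\tau,+\infty\Lbrack}$ in \eqref{LongevityBond} will come from.

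For part (a), first I would write the discounted price as $P^{(g)}_t = \mathbb{E}[g I_{\{\tau>T\}}\mid \mathcal{G}_t]$ on $\{t\le T\}$, which before $\tau\wedge T$ is a $\mathbb{G}$-martingale. I would observe that $\mathbb{E}[gI_{\{\tau>T\}}\mid\mathcal{G}_t] = G_t^{-1}\,{}^{o,\mathbb{F}}(gG_T I_{\Lbrack 0,\tau\Lbrack})_t$ on $\{t<\tau\}$, whose $\mathbb{F}$-driver is $M^{(g)}_t=\mathbb{E}[gG_T\mid\mathcal{F}_t]$ as defined in \eqref{PureEndowment}. Because the payoff is purely a survival payment at $T$ (no payment \emph{at} $\tau$), the pure-mortality-of-type-one term involving $N^{\mathbb{G}}$ carries the coefficient $-M^{(g)}/G$ rather than the $(hG-M^h+h\is D^{o,\mathbb{F}})/G$ of the general formula, and all three processes must be stopped at $\tau\wedge T$ since the contract is dormant after $T$. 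Plugging $M^{(g)}$ into Theorem \ref{TheoRepresentation} with the appropriate stopping then yields \eqref{PureEndowment} directly; I would verify the coefficient of $\widehat m$ is $-M^{(g)}_-/G_-^2$ by matching against \eqref{Gdecomposition} with $h\equiv 0$ on $[0,T]$.

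For part (b), the longevity bond pays $G_T$ at term $T$, and its discounted value before $\tau$ is $B_t=\mathbb{E}[G_T I_{\{\tau>T\}}\mid\mathcal{G}_t]/\cdots$; the key identity is that the dual optional projection of $G_T I_{\Lbrack\tau,+\infty\Lbrack}$, namely $\overline D^{o,\mathbb{F}}$ in \eqref{xiGandM(G)}, and its Radon--Nikodym derivative $\xi^{(G)}=d\overline D^{o,\mathbb{F}}/dD^{o,\mathbb{F}}$ play the role that $h\is D^{o,\mathbb{F}}$ and $h$ played in Theorem \ref{TheoRepresentation}. I would apply the representation with $M^{(B)}$ in place of $M^h$, $\overline D^{o,\mathbb{F}}$ in place of $h\is D^{o,\mathbb{F}}$, and $\xi^{(G)}$ in place of $h$, all stopped at $\tau\wedge T$; this reproduces the first three integral terms of \eqref{LongevityBond}. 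The genuinely delicate point, and the step I expect to require the most care, is the final jump term $(\mathbb{E}[G_T\mid\mathcal{G}_\tau]-\xi^{(G)}_\tau)I_{\Lbrack\tau,+\infty\Lbrack}$: after $\tau$ the bond's payoff $G_T$ becomes $\mathcal{G}_\tau$-measurably revealed, so the post-death value jumps from the $\mathbb{F}$-predictable proxy $\xi^{(G)}_\tau$ to the true conditional survival $\mathbb{E}[G_T\mid\mathcal{G}_\tau]$. I would justify this jump by comparing the $\mathbb{G}$-optional projection of $G_T I_{\Lbrack\tau,+\infty\Lbrack}$ against the value process on $\Rbrack\tau,+\infty\Lbrack$ and invoking the extended representation \cite[Theorem 2.22]{choullidaveloose2018}, which handles $\mathcal{F}_T$-measurable (rather than $h_\tau$-type) payoffs and precisely produces this additional martingale increment concentrated at $\tau$.
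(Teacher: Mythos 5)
Your proposal follows exactly the route the paper itself takes: the paper offers no independent proof of this theorem but imports it from \cite{choullidaveloose2018}, remarking that it is a direct application of the optional martingale representation theorems (Theorem \ref{TheoRepresentation} here, i.e.\ Theorem 2.19 there, plus Theorem 2.22 for the longevity bond), which is precisely your strategy of recognizing the pure endowment payoff as $h_\tau$ with $h_t=g\,I_{\{t>T\}}$ (so that $M^h=M^{(g)}$ on $[0,T]$ and the coefficients of $\widehat m$ and $N^{\mathbb G}$ reduce to $-M^{(g)}_{-}/G_{-}^2$ and $-M^{(g)}/G$), and of handling the bond via the substitutions $M^{(B)}$, ${\overline D}^{o,\mathbb F}$, $\xi^{(G)}$ together with the residual jump term at $\tau$ supplied by Theorem 2.22. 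Apart from one garbled identity (the bond's pre-death value should read $I_{\{t<\tau\}}B_t=I_{\{t<\tau\}}\,G_t^{-1}\,\E\bigl[G_T I_{\{\tau>t\}}\mid{\cal F}_t\bigr]$, not an expression involving $I_{\{\tau>T\}}$), your decomposition and coefficient matching are correct and essentially identical to the intended argument.
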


Throughout the rest of this section, we consider the following notation. Thanks to the GKW-decomposition, with respect to $S$,  
 of $G(T)$ and $M^{(B)}$ defined in (\ref{G(T)process}) and (\ref{xiGandM(G)}) respectively, we get
\begin{equation}\label{GKW4G(T)andM(B)}
G(T)=G_0(T)+\varphi^{(E)}\is S+L^{(E)},\quad M^{(B)}=M^{(B)}_0+\varphi^{(B)}\is S+L^{(B)}.
\end{equation}

The superscripts $E$ and $B$ in the strategies $\varphi^{(\cdot,\mathbb H)}$  and the remaining risks $L^{(\cdot,\mathbb H)}$
refer to the type of contract (i.e., the letter ``$E$" refers to the pure endowment insurance
contract,
 while the letter ``$B$" refers to the longevity bond). Then, throughout this section, we consider
 \begin{align}
& \varphi^{(E,\mathbb G)}:=\varphi^{(E)}\left(G_{-}+\varphi^{(m)}\right)^{-1}I_{\Rbrack 0,\tau\Rbrack}, \quad L^{(E,\mathbb G)}:=
G_{-}^{-1}I_{\Rbrack 0,\tau\Rbrack}\is\widehat {L^{(E)}}
-\frac{\varphi^{(E,\mathbb G)}}{G_{-}}\is\widehat{L^{(m)}}+\label{Phi(E,G)} \\
& \hspace*{8cm} -G_{-}(T)G_{-}^{-2}I_{\Rbrack 0,\tau\Rbrack}\is\widehat{m} -G(T)G^{-1}I_{\Lbrack 0,R\Lbrack}\is \left( N^{\mathbb G}\right)^T,\nonumber\\
& \varphi^{(B,\mathbb G)}:=\varphi^{(B)}\left(G_{-}+\varphi^{(m)}\right)^{-1}I_{\Rbrack 0,\tau\Rbrack}, \quad L^{(B,\mathbb G)}:=
{{I_{\Rbrack 0,\tau\Rbrack}}\over{G_{-}}}\is\widehat{L^{(B)}}-
{{\varphi^{(B,\mathbb G)}}\over{G_{-}}}\is\widehat{L^{(m)}}+L^{(1)}, \label{Phi(B,G)}\\
& L^{(1)}:=\left(-M^{(B)}_{-}+{\overline D}^{o,\mathbb F}_{-}\right)G_{-}^{-2}I_{\Rbrack 0,T\wedge\tau\Rbrack}\is\widehat m
+ \left[\xi^{(G)}+\left(- M^{(B)}+{\overline D}^{o,\mathbb F}\right)G^{-1}I_{\Lbrack 0,R\Lbrack}\right]\is\left( N^{\mathbb G}\right)^T\label{L1}\\
& \hspace*{6.75cm} 
+ \E\left[G_T-\xi^{(G)}_{\tau}\ \big|\ {\cal G}_{\tau}\right]I_{\Lbrack\tau, +\infty\Lbrack}.\nonumber
 \end{align}
Here $\xi^{(G)}$ and $\overline{D}^{o,\mathbb F}$ are given by (\ref{xiGandM(G)}). Now, we are in the stage of announcing our main result of this section.

\begin{theorem}\label{hedgingEG} Suppose that (\ref{mainassumpOn(X,tau)}) holds, and let $h\in L^2({\cal O}(\mathbb F), P\otimes D)$.
Consider $(\varphi^{(B,\mathbb G)}, L^{(B,\mathbb G)})$ and $(\varphi^{(E,\mathbb G)}, L^{(E,\mathbb G)})$
 defined in (\ref{GKW4G(T)andM(B)})-(\ref{Phi(E,G)}) and (\ref{GKW4G(T)andM(B)})-(\ref{Phi(B,G)}) respectively, and
$(\xi^{(h,\mathbb G)}, L^{(h,\mathbb G)})$ given by (\ref{riskminG})-(\ref{riskminGremaining}). Then the following assertions hold.\\
{\rm{(a)}} Consider the market model $\left(S^{\tau}, B^{\tau},\mathbb G\right)$.
Then the risk-minimizing strategy and the remaining risk in this market model, for the insurance contract with
 payoff $h_{\tau}$, are denoted by $(\xi^{(h,1)}, \xi^{(h,2)})$ and $L^{(\mathbb G)}$ respectively, satisfy
\begin{equation*}
H:=\ ^{o,\mathbb G}(h_{\tau})=H_0+\xi^{(h,1)}\is S^{\tau}+\xi^{(h,2)}\is B^{\tau}+L^{(\mathbb G)},
\end{equation*} 
and are given by 
\begin{equation}
\xi^{(h,2)}:=\frac{d\langle L^{(h,\mathbb G)},L^{(B,\mathbb G)}\rangle^{\mathbb G}}{d\langle L^{(B,\mathbb G)}\rangle^{\mathbb G}},
\quad \xi^{(h,1)}:=\xi^{(h,\mathbb G)}-\varphi^{(B,G)}\xi^{(h,2)},\quad
L^{(\mathbb G)}:=L^{(h,\mathbb G)}-\xi^{(h,2)}\is L^{(B,\mathbb G)}.\
\end{equation}
{\rm{(b)}} Consider the market model $\left(S^{\tau}, P^{(1)},\mathbb G\right)$.
Then the risk-minimizing strategy and the remaining risk in this market model, for the insurance contract with payoff
 $h_{\tau}$, are denoted by $(\widetilde{\xi}^{(h,1)}, \widetilde{\xi}^{(h,2)})$ and $\widetilde{L}^{(\mathbb G)}$ respectively, satisfy
\begin{equation*}
H:=\ ^{o,\mathbb G}(h_{\tau})=H_0+\widetilde{\xi}^{(h,1)}\is S^{\tau}+\widetilde{\xi}^{(h,2)}\is P^{(1)}+\widetilde{L}^{(\mathbb G)},
\end{equation*}
and are given by
\begin{equation}
\widetilde{\xi}^{(h,2)}:={{d\langle L^{(h,\mathbb G)},L^{(E,\mathbb G)}\rangle^{\mathbb G}}\over{d\langle
 L^{(E,\mathbb G)}\rangle^{\mathbb G}}}, \quad
\widetilde{\xi}^{(h,1)}:=\xi^{(h,\mathbb G)}-\varphi^{(E,G)}\widetilde{\xi}^{(h,2)},\quad  \widetilde{L}^{(\mathbb G)}:=L^{(h,\mathbb G)}-\widetilde{\xi}^{(h,2)}\is L^{(E,\mathbb G)}.
\ \
\end{equation}
{\rm{(c)}} Consider the market model $\left(S^{\tau}, P^{(1)}, B^{\tau},\mathbb G\right)$.
Then the risk-minimizing strategy and the remaining risk in this market, for the insurance contract with payoff $h_{\tau}$,
 are denoted by $(\overline{\xi}^{(h,1)}, \overline{\xi}^{(h,2)}, \overline{\xi}^{(h,3)})$ and $\overline{L}^{(\mathbb G)}$ respectively, and satisfy
\begin{equation*}
H:=H_0+\overline{\xi}^{(h,1)}\is S^{\tau}+\overline{\xi}^{(h,2)}\is P^{(1)}+\overline{\xi}^{(h,3)}\is B^{\tau}+\overline{L}^{(\mathbb G)},
\end{equation*}
where
\begin{align*}
\overline{\xi}^{(h,2)}&:={{\widetilde{\xi}^{(h,2)}
	-\psi^{(E,B)}{\xi}^{(h,2)}}\over{1-\psi^{(E,B)}\theta^{(E,B)}}}I_{\{\psi^{(E,B)}\theta^{(E,B)}\not=1\}}, \ \ \ \ \ 
\overline{\xi}^{(h,3)}:={{\xi^{(h,2)}-\theta^{(E,B)} \widetilde{\xi}^{(h,2)}}\over{1-\psi^{(E,B)}\theta^{(E,B)}}}I_{\{\psi^{(E,B)}\theta^{(E,B)}\not=1\}}, \\
\overline{\xi}^{(h,1)}&:=\xi^{(h,\mathbb G)}-\varphi^{(E,G)}\overline{\xi}^{(h,2)}-\varphi^{(B,G)}\overline{\xi}^{(h,3)}\ \ \hskip 9mm
\overline{L}^{(\mathbb G)}:= L^{(h,\mathbb G)}-\overline{\xi}^{(h,2)}\is L^{(E,\mathbb G)}-\overline{\xi}^{(h,3)}\is L^{(B,\mathbb G)}.
\end{align*}
Here $\theta^{(E,B)}$ and $\psi^{(E,B)}$ are given by

$$
\theta^{(E,B)}:={{d\langle L^{(E,\mathbb G)},L^{(B,\mathbb G)}\rangle^{\mathbb G}}\over{d\langle
L^{(B,\mathbb G)}\rangle^{\mathbb G}}},\ \ \ \psi^{(E,B)}:={{d\langle L^{(E,\mathbb G)},L^{(B,\mathbb G)}\rangle^{\mathbb G}}\over{d\langle
L^{(E,\mathbb G)}\rangle^{\mathbb G}}}.$$
\end{theorem}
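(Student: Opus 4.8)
The plan is to reduce each of the three securitization problems to a Galtchouk-Kunita-Watanabe projection carried out \emph{inside} the subspace of $\mathbb G$-martingales orthogonal to $S^\tau$. The key structural input is that, by Theorem~\ref{riskMininStrategy} together with the decompositions recorded in \eqref{Phi(E,G)}--\eqref{L1}, the claim and the two traded securities share the same $S^\tau$-hedgeable part: we have $H-H_0=\xi^{(h,\mathbb G)}\is S^\tau+L^{(h,\mathbb G)}$, $P^{(1)}-P^{(1)}_0=\varphi^{(E,\mathbb G)}\is S^\tau+L^{(E,\mathbb G)}$ and $B^\tau-B_0=\varphi^{(B,\mathbb G)}\is S^\tau+L^{(B,\mathbb G)}$, where the three remainders $L^{(h,\mathbb G)},L^{(E,\mathbb G)},L^{(B,\mathbb G)}\in\mathcal M^2_0(\mathbb G)$ are each orthogonal to $S^\tau$ by Lemma~\ref{consequences4MainAssum}(b). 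Thus the only information a security contributes beyond the stock is encoded in its orthogonal remainder, and hedging with $\{S^\tau,\text{security}\}$ amounts to hedging $L^{(h,\mathbb G)}$ with the security's remainder.

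For assertion (a) I would substitute $B^\tau-B_0=\varphi^{(B,\mathbb G)}\is S^\tau+L^{(B,\mathbb G)}$ into the ansatz $H-H_0=\xi^{(h,1)}\is S^\tau+\xi^{(h,2)}\is B^\tau+L^{(\mathbb G)}$ and match it against the known decomposition of $H$. Equating the $S^\tau$-coefficients yields $\xi^{(h,1)}=\xi^{(h,\mathbb G)}-\varphi^{(B,\mathbb G)}\xi^{(h,2)}$, while equating the $S^\tau$-orthogonal parts forces $L^{(h,\mathbb G)}=\xi^{(h,2)}\is L^{(B,\mathbb G)}+L^{(\mathbb G)}$ with $L^{(\mathbb G)}$ orthogonal to $L^{(B,\mathbb G)}$. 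This is precisely the GKW decomposition of $L^{(h,\mathbb G)}$ along $L^{(B,\mathbb G)}$ (Theorem~\ref{GKWdecomposition}), so $\xi^{(h,2)}=d\langle L^{(h,\mathbb G)},L^{(B,\mathbb G)}\rangle^{\mathbb G}/d\langle L^{(B,\mathbb G)}\rangle^{\mathbb G}$, the Radon-Nikodym derivative existing by the Kunita-Watanabe inequality. Because both $L^{(h,\mathbb G)}$ and $L^{(B,\mathbb G)}$ are orthogonal to $S^\tau$, so is $\xi^{(h,2)}\is L^{(B,\mathbb G)}$, whence $L^{(\mathbb G)}$ is orthogonal to the whole span $\{S^\tau,B^\tau\}$, as required. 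Assertion (b) follows verbatim after replacing $B^\tau,\varphi^{(B,\mathbb G)},L^{(B,\mathbb G)}$ by $P^{(1)},\varphi^{(E,\mathbb G)},L^{(E,\mathbb G)}$.

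Assertion (c) is the one genuinely new step, because $H$ must now be projected onto the two-dimensional stable subspace spanned by $L^{(E,\mathbb G)}$ and $L^{(B,\mathbb G)}$, which are \emph{not} mutually orthogonal. Absorbing $S^\tau$ exactly as above reduces the task to $L^{(h,\mathbb G)}=\overline{\xi}^{(h,2)}\is L^{(E,\mathbb G)}+\overline{\xi}^{(h,3)}\is L^{(B,\mathbb G)}+\overline{L}^{(\mathbb G)}$ with $\overline{L}^{(\mathbb G)}$ orthogonal to both integrators. Imposing the two orthogonality conditions, i.e.\ taking the predictable bracket of this identity against $L^{(E,\mathbb G)}$ and against $L^{(B,\mathbb G)}$ and dividing by the corresponding angle brackets, produces the linear system $\overline{\xi}^{(h,2)}+\psi^{(E,B)}\overline{\xi}^{(h,3)}=\widetilde{\xi}^{(h,2)}$ and $\theta^{(E,B)}\overline{\xi}^{(h,2)}+\overline{\xi}^{(h,3)}=\xi^{(h,2)}$ in the four Radon-Nikodym derivatives of the statement. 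Its determinant is $1-\psi^{(E,B)}\theta^{(E,B)}$, and Cramer's rule then gives the announced $\overline{\xi}^{(h,2)}$ and $\overline{\xi}^{(h,3)}$, with back-substitution yielding $\overline{\xi}^{(h,1)}$ and $\overline{L}^{(\mathbb G)}$. I expect the main obstacle to be the degenerate set: by the Kunita-Watanabe inequality the determinant of the $2\times2$ bracket matrix is nonnegative and vanishes exactly when $L^{(E,\mathbb G)}$ and $L^{(B,\mathbb G)}$ are collinear (one security redundant), which is precisely where $\psi^{(E,B)}\theta^{(E,B)}=1$ and where the indicator $I_{\{\psi^{(E,B)}\theta^{(E,B)}\neq1\}}$ switches the solution off; off this set the system is uniquely solvable. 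It remains to verify admissibility, namely that the resulting strategies lie in $L^2(S^\tau,\mathbb G)$, $L^2(P^{(1)},\mathbb G)$, $L^2(B^\tau,\mathbb G)$ and that the remainders belong to $\mathcal M^2_0(\mathbb G)$; this is inherited from the square-integrability of $L^{(h,\mathbb G)},L^{(E,\mathbb G)},L^{(B,\mathbb G)}$ already established in Theorem~\ref{riskMininStrategy} and \eqref{Phi(E,G)}--\eqref{Phi(B,G)}.
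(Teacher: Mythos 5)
Your proposal is correct and follows essentially the same route as the paper: substitute the decompositions $B^\tau = B_0 + \varphi^{(B,\mathbb G)}\is S^\tau + L^{(B,\mathbb G)}$ and $P^{(1)} = P^{(1)}_0 + \varphi^{(E,\mathbb G)}\is S^\tau + L^{(E,\mathbb G)}$ into the ansatz, compare with $H = H_0 + \xi^{(h,\mathbb G)}\is S^\tau + L^{(h,\mathbb G)}$, and reduce everything to the GKW projection of $L^{(h,\mathbb G)}$ onto the orthogonal remainders (your $2\times 2$ linear system and Cramer's rule in part (c) correctly fill in what the paper dismisses as ``follows immediately''). The only point you gloss over is that the two price decompositions are not actually \emph{recorded} in (\ref{Phi(E,G)})--(\ref{L1}) (those equations only define the processes); they must first be derived by inserting (\ref{decompositionXhat}) and (\ref{GKW4G(T)andM(B)}) into the price dynamics of Theorem \ref{contractsStructures}, a short substitution which constitutes Parts 1 and 2 of the paper's proof.
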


\begin{proof} This proof is achieved in three parts where we prove assertions (a), (b) and (c) respectively. \\
 {\bf Part 1):} By combining (\ref{GKW4G(T)andM(B)}) and (\ref{decompositionXhat}) in \eqref{LongevityBond}, we derive
\begin{equation}\label{equa300}
B^{\tau}=B_0+\varphi^{(B,\mathbb G)}\is S^{\tau}-{{\varphi^{(B,\mathbb G)}}\over{G_{-}}}\is\widehat{L^{(m)}}
+{{I_{\Rbrack 0,\tau\Rbrack}}\over{G_{-}}}\is\widehat{L^{(B)}}+L^{(1)}=
B_0+\varphi^{(B,\mathbb G)}\is S^{\tau}+L^{(B,\mathbb G)},
\end{equation}
where $\varphi^{(B,\mathbb G)}$ and $L^{(1)}$ are given in (\ref{Phi(B,G)}). Then, by inserting this equality in
$H=H_0+\xi^{(h,1)}\is S^{\tau}+\xi^{(h,2)}\is B^{\tau}+L^{(\mathbb G)}$, we obtain
$$
H=H_0+\left[\xi^{(h,1)}+\varphi^{(B,\mathbb G)}\xi^{(h,2)}\right]\is S^{\tau}+\xi^{(h,2)}\is L^{(B,\mathbb G)}+L^{(\mathbb G)}.$$
Thus, by comparing this resulting equation with
\begin{equation}\label{decompo4H}
H=H_0+\xi^{(h,\mathbb G)}\is S^{\tau}+L^{(h,\mathbb G)},
\end{equation}
where $\xi^{(h,\mathbb G)}$ and $L^{(h,\mathbb G)}$ are given by (\ref{riskminG})-(\ref{riskminGremaining}), we conclude that
$$
\xi^{(h,\mathbb G)}=\xi^{(h,1)}+\varphi^{(B,\mathbb G)}\xi^{(h,2)},\ \ L^{(h,\mathbb G)}= \xi^{(h,2)}\is L^{(B,\mathbb G)}+L^{(\mathbb G)}.$$
This is due to the fact that $L^{(\mathbb G)}$ is orthogonal to $(S^{\tau},B^{\tau})$ if and only if it is also orthogonal to
 $(S^{\tau}, L^{(B,\mathbb G)})$. Therefore, the proof of assertion (a) follows immediately.\\
 {\bf Part 2):}  To prove assertion (b),  similarly we derive the following decomposition for $P^{(1)}$ in \eqref{PureEndowment}
\begin{align}
P^{(1)}&= P^{(1)}_0+\varphi^{(E,\mathbb G)}\is S^{\tau}-{{\varphi^{(E,\mathbb G)}}\over{G_{-}}}\is\widehat{L^{(m)}}+
{{I_{\Rbrack 0,\tau\Rbrack}}\over{G_{-}}}\is\widehat{L^{(E)}}
-{{G(T)_{-}}\over{G_{-}^2}}I_{\Rbrack 0,\tau\Rbrack}\is\widehat m
-{{ G(T)}\over{G}}I_{\Lbrack 0,R\Lbrack}\is\left( N^{\mathbb G}\right)^T\nonumber\\
&= P^{(1)}_0+\varphi^{(E,\mathbb G)}\is S^{\tau}+L^{(E,\mathbb G)}.\label{equa333}
\end{align}
Then, by combining this with (\ref{decompo4H}), the proof of assertion (b) follows immediately.\\
 {\bf Part 3):}  Herein, we prove assertion (c). By inserting (\ref{equa300}) and (\ref{equa333}) in
 $H=H_0+\overline{\xi}^{(h,1)}\is S^{\tau}+\overline{\xi}^{(h,2)}\is P^{(1)}+\overline{\xi}^{(h,3)}\is B^{\tau}+L^{(\mathbb G)},$  we obtain
$$
H=H_0+\left[\overline{\xi}^{(h,1)}+\varphi^{(E,\mathbb G)}\overline{\xi}^{(h,2)}+\varphi^{(B,\mathbb G)}\overline{\xi}^{(h,3)}\right]\is S^{\tau}+
\overline{\xi}^{(h,2)}\is L^{(E,\mathbb G)}+\overline{\xi}^{(h,3)}\is L^{(B,\mathbb G)}+\overline{L}^{(\mathbb G)}.$$
Therefore, the proof of assertion (c) follows immediately from combining this with (\ref{decompo4H}) and the fact that
 the orthogonality of $\overline{L}^{(\mathbb G)}$ to $(S^{\tau}, P^{(1)},B^{\tau})$ is equivalent to
the orthogonality of $\overline{L}^{(\mathbb G)}$ to $(S^{\tau}, L^{(E,\mathbb G)},L^{(B,\mathbb G)})$. This ends proof of the theorem.
\end{proof}

Up to our knowledge, Theorem \ref{hedgingEG} generalizes all the existing literature on risk-minimizing strategies using mortality securitization in many directions.  Our approach in this theorem, which is based essentially on our optional martingale decomposition of Theorem \ref{TheoRepresentation} and the resulting risk decomposition of Theorem \ref{contractsStructures}, allows us to work on any model $(S,\tau)$ fulfilling (\ref{mainassumpOn(X,tau)}). As aforementioned, this assumption covers all the cases treated in the literature and goes beyond that. The reader can see easily this fact by comparing our framework to those considered in \cite{barbarin08,  biaginibotero15, biaginirheinlander16, biaginischreiber13} and the references there in to cite few. Indeed, in \cite{biaginischreiber13} the assumptions include H-hypothesis (i.e., all $\mathbb F$-local martingales are $\mathbb G$-local martingale), $\tau$ avoids the $\mathbb F$-stopping times and the hazard rate exists, and/or the mortality follows affine models. In \cite{biaginibotero15,  biaginirheinlander16}, the authors assume the independence between the stock price process and the mortality rate process, and consider the Brownian filtration. Barbarin assumes, in \cite{barbarin08}, that the mortality follows the  Heath-Jarrow-Morton model, and considers the Brownian filtration for $\mathbb F$.

 Furthermore, our results in Theorem \ref{hedgingEG}  are very explicit and more importantly they explain the impact of the securitization on the pair of risk-minimizing strategy and the remaining risk in the following sense.   For any securitization model ${\cal S}:=(S^{\tau}, Y^{(1)}, Y^{(2)}, \mathbb G)$, where $Y^{(i)}$ denotes the price process of the $i^{th}$  mortality security,  we describe in Theorem \ref{hedgingEG} very precisely how the pair of the risk-minimizing strategy and the remaining risk associated with this securitization model $(\xi^{({\cal S})}, L^{(\cal S)} )$ is obtained from the pair of the case without securitization $(\xi^{(h,\mathbb G)}, L^{(h,\mathbb G)})$, and/or from the pair that is associated with the securitization model $(S^{\tau}, Y^{(i)}, \mathbb G)$, $i=1,2$.\\
 
When $\tau$ is independent of ${\cal F}_{\infty}$ such that $P(\tau>T)>0$, then on the one hand (\ref{PureEndowment}) becomes
 \begin{equation}\label{Pg}
 P^{(g)}=P^{(g)}_0-{\frac{ gP(\tau>T)}{P(\tau>\cdot)}}\is \left( N^{\mathbb G}\right)^T=
 P^{(g)}_0-{\frac{ gP(\tau>T)}{P(\tau>\cdot)}}\is\left({\overline N}^{\mathbb G}\right)^T,
 \end{equation}
 where $\overline{N}^{\mathbb G}:= D-G_{-}^{-1}I_{\Rbrack 0, \tau\Rbrack}\is D^{p,\mathbb F}$ is the $\mathbb G$-martingale in the Doob-Meyer decomposition of $D$.
 
 On the other hand, the longevity bond has a constant price process equal to $G_T$, and hence it cannot be used for hedging any risk! Thus,  under the independence condition between $\tau$ and $\mathbb F$, the pure endowment insurance with benefit one (the contract that pays one dollars to the beneficiary if s/he survives) is more adequate to hedge pure mortality/longevity risk in insurance
 liabilities, while the longevity bond has no effect at all.\\

\begin{appendices}


\section{Proof of Lemma \ref{consequences4MainAssum}}\label{AppendixproofLemma}

 The proof of this lemma requires two lemmas that we start with.

\begin{lemma}\label{GFCompensators} Let $V$ be an $\mathbb F$-adapted process with $\mathbb F$-locally integrable variation. Then we have
\begin{equation}\label{GcompensatorofVbeforetaugeneral}
\comg{V^{\tau}} =(G_{-})^{-1}I_{\Rbrack 0,\tau\Rbrack}\is\bigl({\widetilde G}\is V\bigr)^{p,\mathbb F}  . \end{equation}
\end{lemma}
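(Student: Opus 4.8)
The plan is to identify the left-hand side by testing the candidate process
$A:=(G_{-})^{-1}I_{\Rbrack 0,\tau\Rbrack}\is(\widetilde{G}\is V)^{p,\mathbb F}$
against arbitrary bounded $\mathbb G$-predictable integrands, using the characterization of the dual $\mathbb G$-predictable projection: $\comg{V^{\tau}}$ is the unique $\mathbb G$-predictable process of $\mathbb G$-locally integrable variation with $\E[\int_0^\infty W\,dV^{\tau}]=\E[\int_0^\infty W\,d\comg{V^{\tau}}]$ for every bounded $\mathbb G$-predictable $W$. First I would check that $A$ is an admissible candidate: $G_{-}$ and $(\widetilde{G}\is V)^{p,\mathbb F}$ are $\mathbb F$-predictable, hence $\mathbb G$-predictable, while $I_{\Rbrack 0,\tau\Rbrack}=1-D_{-}$ is left-continuous and $\mathbb G$-adapted, thus $\mathbb G$-predictable; the hypothesis on $V$ together with $0\le\widetilde{G}\le 1$ secures the required local integrability. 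It then suffices to match the two expectations.

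The core computation rests on the standard reduction that any bounded $\mathbb G$-predictable $W$ agrees on $\Lbrack 0,\tau\Rbrack$ with some bounded $\mathbb F$-predictable $\widetilde W$, i.e.\ $W I_{\Lbrack 0,\tau\Rbrack}=\widetilde W I_{\Lbrack 0,\tau\Rbrack}$. Since $dV^{\tau}$ is carried by $\Lbrack 0,\tau\Rbrack$,
$$\E\Bigl[\int_0^\infty W\,dV^{\tau}\Bigr]=\E\Bigl[\int_0^\infty \widetilde W\, I_{\Rbrack 0,\tau\Rbrack}\,dV\Bigr]=\E\Bigl[\int_0^\infty \widetilde W\,\widetilde{G}\, dV\Bigr]=\E\Bigl[\int_0^\infty \widetilde W\, d(\widetilde{G}\is V)^{p,\mathbb F}\Bigr],$$
where the middle step replaces $I_{\Lbrack 0,\tau\Rbrack}$ by its $\mathbb F$-optional projection $\widetilde{G}$ (legitimate as $V$ is $\mathbb F$-optional of locally integrable variation and $\widetilde W$ is $\mathbb F$-predictable) and the last step is the defining property of the dual $\mathbb F$-predictable projection applied to the $\mathbb F$-predictable integrand $\widetilde W$. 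On the other side,
$$\E\Bigl[\int_0^\infty W\,dA\Bigr]=\E\Bigl[\int_0^\infty \widetilde W\,(G_{-})^{-1}I_{\Rbrack 0,\tau\Rbrack}\, d(\widetilde{G}\is V)^{p,\mathbb F}\Bigr]=\E\Bigl[\int_0^\infty \widetilde W\,(G_{-})^{-1}G_{-}\, d(\widetilde{G}\is V)^{p,\mathbb F}\Bigr],$$
this time projecting $I_{\Lbrack 0,\tau\Rbrack}$ \emph{predictably} via ${}^{p,\mathbb F}(I_{\Lbrack 0,\tau\Rbrack})=G_{-}$, which is permitted because the integrator $(\widetilde{G}\is V)^{p,\mathbb F}$ is already $\mathbb F$-predictable. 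Both sides then collapse to $\E[\int_0^\infty \widetilde W\,d(\widetilde{G}\is V)^{p,\mathbb F}]$, giving the claim.

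The main obstacle is the careful bookkeeping of the two distinct projection steps: the left-hand side uses the $\mathbb F$-\emph{optional} projection of $I_{\Lbrack 0,\tau\Rbrack}$ (namely $\widetilde{G}$) against the raw optional integrator $V$, whereas the right-hand side uses the $\mathbb F$-\emph{predictable} projection of $I_{\Lbrack 0,\tau\Rbrack}$ (namely $G_{-}$) against the predictable integrator $(\widetilde{G}\is V)^{p,\mathbb F}$; conflating the two produces the wrong normalisation. A secondary technical point is the well-definedness of $(G_{-})^{-1}$ and the cancellation $(G_{-})^{-1}G_{-}=1$ inside the integral, which I would justify by noting $\Rbrack 0,\tau\Rbrack\subseteq\{G_{-}>0\}$ and that $\widetilde{G}$ vanishes on $\{G_{-}=0\}$ (so neither $A$ nor the measure $d(\widetilde{G}\is V)^{p,\mathbb F}$ charges $\{G_{-}=0\}$). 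The identity ${}^{p,\mathbb F}(I_{\Lbrack 0,\tau\Rbrack})=G_{-}$ is the remaining key input and would be cited from the classical enlargement-of-filtration literature.
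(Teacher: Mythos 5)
Your proof is correct. Note that for this lemma the paper gives no argument of its own---it simply refers to \cite[Lemma 3.1]{aksamit2017}---and your argument (test the candidate against bounded $\mathbb G$-predictable integrands, reduce them to $\mathbb F$-predictable ones on $\Rbrack 0,\tau\Rbrack$ via Jeulin's lemma, then project $I_{\Lbrack 0,\tau\Rbrack}$ \emph{optionally} to $\widetilde G$ against the $\mathbb F$-optional integrator $V$ on one side, and \emph{predictably} to $G_{-}$ against the $\mathbb F$-predictable integrator $(\widetilde G\is V)^{p,\mathbb F}$ on the other) is precisely the standard route taken in that cited reference, so there is nothing to contrast. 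The only points you leave implicit are routine: the expectation characterization of the $\mathbb G$-compensator must be localized (test with $W I_{\Lbrack 0,T_n\Rbrack}$), since $V^{\tau}$ has only locally integrable variation, and the $\mathbb G$-local integrability of your candidate $A$ rests on the $\mathbb G$-local boundedness of $G_{-}^{-1}I_{\Rbrack 0,\tau\Rbrack}$ (a consequence of $\Rbrack 0,\tau\Rbrack\subseteq\{G_{-}>0\}$ and left-continuity of $G_{-}$) rather than on $0\le\widetilde G\le 1$ alone; neither gap is substantive.
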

 For the proof of this lemma, we refer the reader to \cite[Lemma 3.1]{aksamit2017}. 
 \begin{lemma} \label{lemmaapp}
For a non-negative \(\mathbb H\)-optional process, \(\phi\), such that \(0\le \phi \le 1\) and \(V \in {\cal A}^+_{\loc} (\mathbb H)\),
the following assertions hold. \\
(i) There exists an \(\mathbb H\)-predictable process, \(\psi\), satisfying
\[
0 \le \psi \le 1 \quad \text{ and } \quad \big(\phi \is V\big)^{p,\mathbb H} = \psi \is V^{p,\mathbb H}.
\]
(ii)
If \(P\otimes V(\{\phi =0 \}) = 0\), then \(\psi\) can be chosen strictly positive for all \((\omega, t) \in \Omega \times \mathbb R_+\).
\end{lemma}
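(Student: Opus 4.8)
The plan is to establish (i) by a domination and Radon--Nikodym argument, deliberately avoiding the tempting identification of $\psi$ with the predictable projection ${}^{p,\mathbb H}\phi$; that identification fails in general, because the increasing process $V$ may charge totally inaccessible $\mathbb H$-stopping times, at which $\phi$ and ${}^{p,\mathbb H}\phi$ need not agree. First I would localize along a sequence of $\mathbb H$-stopping times to reduce to $V\in{\cal A}^+(\mathbb H)$, so that every dual predictable projection appearing below is well defined.

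Since $0\le\phi\le1$, both $\phi\is V$ and $(1-\phi)\is V$ belong to ${\cal A}^+(\mathbb H)$, and the dual predictable projection, being linear and sending increasing processes to increasing processes, gives $(\phi\is V)^{p,\mathbb H}+((1-\phi)\is V)^{p,\mathbb H}=V^{p,\mathbb H}$ with both summands nondecreasing; hence $V^{p,\mathbb H}-(\phi\is V)^{p,\mathbb H}$ is nondecreasing. Equivalently, for every nonnegative predictable $\zeta$ one has $\mathbb E[\int\zeta\,d(\phi\is V)^{p,\mathbb H}]=\mathbb E[\int\zeta\phi\,dV]\le\mathbb E[\int\zeta\,dV]=\mathbb E[\int\zeta\,dV^{p,\mathbb H}]$, so the Dol\'eans measure of $(\phi\is V)^{p,\mathbb H}$ on ${\cal P}(\mathbb H)$ is dominated by that of $V^{p,\mathbb H}$. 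The Radon--Nikodym theorem on $(\Omega\times\mathbb R_+,{\cal P}(\mathbb H))$ then produces a predictable density $\psi$ with $0\le\psi\le1$, and since $\psi\is V^{p,\mathbb H}$ and $(\phi\is V)^{p,\mathbb H}$ are predictable increasing processes with the same Dol\'eans measure they are indistinguishable, which proves (i).

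For (ii) I would test the representation against the predictable indicator $\zeta=I_{\{\psi=0\}}$: the defining property of the dual predictable projection yields $\mathbb E[\int I_{\{\psi=0\}}\phi\,dV]=\mathbb E[\int I_{\{\psi=0\}}\,d(\phi\is V)^{p,\mathbb H}]=\mathbb E[\int I_{\{\psi=0\}}\psi\,dV^{p,\mathbb H}]=0$, whence $I_{\{\psi=0\}}\phi=0$ $\,P\otimes V$-a.e. Because the hypothesis $P\otimes V(\{\phi=0\})=0$ says $\phi>0$ $\,P\otimes V$-a.e., this forces $P\otimes V(\{\psi=0\})=0$; as $\{\psi=0\}$ is predictable, the coincidence of $P\otimes V$ and $P\otimes V^{p,\mathbb H}$ on ${\cal P}(\mathbb H)$ upgrades this to $\int I_{\{\psi=0\}}\,dV^{p,\mathbb H}=0$ $\,P$-a.s. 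Consequently, replacing $\psi$ by $\psi I_{\{\psi>0\}}+I_{\{\psi=0\}}$ keeps it predictable and bounded by $1$, leaves $\psi\is V^{p,\mathbb H}$ unchanged, and renders it strictly positive for all $(\omega,t)$.

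The main obstacle is the density extraction in (i): the crucial point is to recognise that $\psi$ cannot simply be taken to be ${}^{p,\mathbb H}\phi$ and must instead be obtained as a predictable Radon--Nikodym derivative from the measure-domination $(\phi\is V)^{p,\mathbb H}\le V^{p,\mathbb H}$ on ${\cal P}(\mathbb H)$; once this is in place, the localization and the two-sided bound $0\le\psi\le1$ are routine.
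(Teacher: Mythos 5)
Your proof is correct and takes essentially the same route as the paper's: both arguments extract a predictable Radon--Nikodym density $\psi$ of $(\phi\is V)^{p,\mathbb H}$ with respect to $V^{p,\mathbb H}$, and both prove (ii) by showing that $\{\psi=0\}$ is $P\otimes V^{p,\mathbb H}$-negligible under the hypothesis and then adding $I_{\{\psi=0\}}$ to $\psi$. The only cosmetic difference is that you build the bound $0\le\psi\le 1$ into the Radon--Nikodym step via domination of Dol\'eans measures on ${\cal P}(\mathbb H)$, whereas the paper first produces an a priori unbounded density $\psi^{(1)}$ and then shows $P\otimes V^{p,\mathbb H}(\{\psi^{(1)}>1\})=0$ by a testing argument and truncates to $\psi^{(1)}\wedge 1$.
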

\begin{proof}
(i) Since \(\phi\le1\), it is clear that $d (\phi \is V)^{p,\mathbb H} \ll d V^{p,\mathbb H}$, $P$-a.s.. Hence, there
 exists a non-negative and \(\mathbb H\)-predictable process \(\psi^{(1)}\) such that
\begin{equation} \label{equalitypsi1}
\big(\phi \is V\big)^{p,\mathbb H} = \psi^{(1)} \is V^{p,\mathbb H}.
\end{equation}
As a result, we derive $
0 = I_{\{\psi^{(1)}>1\}} \is\Big[ (\phi \is V)^{p,\mathbb H} - \psi^{(1)} \is V^{p,\mathbb H}\Big]
 = \Big( (\phi - \psi^{(1)} ) I_{\{\psi^{(1)}>1\}}\is V  \Big)^{p,\mathbb H},$
 and deduce that $P\otimes V^{p,\mathbb H}(\{\psi^{(1)}>1\})=0$. Thus, by putting $\psi=\psi^{(1)}\wedge 1$, assertion (a) follows. \\
(ii) It is clear from (\ref{equalitypsi1}) that $0=I_{\{\psi^{(1)}=0\}}\is (\phi\is V)^{p,\mathbb H}=
(\phi I_{\{\psi^{(1)}=0\}}\is V)^{p,\mathbb H}$. This implies that $\{\psi^{(1)}=0\} \subset \{\phi=0\}$ $P\otimes V$-a.e..
 Therefore, assertion (b) follows from  putting $\psi=\psi^{(1)}\wedge 1 +I_{\{\psi^{(1)}=0\}}$, and the proof of the lemma is completed.
\end{proof}

\begin{proof}[Proof of Lemma  \ref{consequences4MainAssum}]:  This proof consists of three steps, where we prove assertions (a), (b) and (c)-(d) respectively. \\
{\bf Step 1:} Thanks to \cite{jeulin80}, $S^{\tau}-G_{-}^{-1}I_{\Rbrack 0,\tau\Rbrack}\is\langle S,m\rangle^{\mathbb F}$
is a $\mathbb G$-local martingale. Thus, by combining this with the second assumption in (\ref{mainassumpOn(X,tau)}) (i.e., $\langle S,m\rangle^{\mathbb F}\equiv 0$), we deduce that $S^{\tau}$
is $\mathbb G$-local martingale. Thus, the assertion (a) follows immediately.\\
{\bf Step 2:} Due to the third assumption in (\ref{mainassumpOn(X,tau)}), it holds that \(\Delta S I_{\Lbrack\widetilde R\Rbrack}=\Delta S I_{\{\widetilde G=0<G_{-}\}} \equiv 0\). Thus, for any
$L\in {\cal M}_{loc}(\mathbb F)$ orthogonal to $S$, we have
	$$
	[\widehat{L},S^\tau] = G_{-}{\widetilde{G}}^{-1}I_{\Rbrack 0,\tau\Rbrack} \is [L,S]+
\ ^{p,\mathbb F}(\Delta LI_{\Lbrack\widetilde R\Rbrack} )\is S^{\tau}.
	$$
	Since $\ ^{p,\mathbb F}(\Delta LI_{\Lbrack\widetilde R\Rbrack} )\is S^{\tau}$ a $\mathbb G$-local martingale and $\Delta L\Delta S I_{\Lbrack\widetilde R\Rbrack} \equiv 0$, Lemma \ref{GFCompensators}
	 implies that
	$$\langle\widehat{L},S^\tau\rangle^{\mathbb G}=I_{\Lbrack 0,\tau\Rbrack} \is\langle L,S\rangle^{\mathbb F}\equiv 0.$$
	This proves assertion (b).\\
{\bf Step 3:} Since $m$ is bounded and orthogonal to $S\in {\cal M}_{loc}^2(\mathbb F)$, it is clear that $U:=I_{\{G_{-}>0\}}\is [S,m]\in {\cal M}_{0,loc}^2(\mathbb F)$. Then, an application
of the Galtchouk-Kunita-Watanabe decomposition of $U$ with respect to $S$, we get the first property in (\ref{decompositionUandhatU}).
To prove the second property in  (\ref{decompositionUandhatU}), we remark that  \( [U, S] = \Delta m I_{\{G_{-}>0\}} \is [S,S]\),
and put
	\[
	W := G_- \is [S,S] + [U,S] = \widetilde{G} I_{\{G_{-}>0\}} \is [S,S]\ \ \ \ \ \mbox{and}\ \ \ \ V := I_{\{G_{-}>0\}} \is [S,S].
	\]
	A direct application of Lemma \ref{lemmaapp} 
	to the pair $(V,\widetilde G+I_{\{\widetilde G=0\}})$ (it is easy to see that
the assumptions of this lemma are fulfilled as \(P\otimes V(\{\phi =0 \}) = P\otimes I_{\{G_{-}>0\}} \is [S,S](\{\widetilde{G} =0 \}) = 0\)
which follows from \( I_{\{\widetilde{G} =0 < G_{-} \}} \Delta S =0 \)), we deduce that the existence of $\mathbb F$-predictable \(\psi\) such that \(0< \psi \le 1\) and
	\[
	W^{p,\mathbb F} = \psi I_{\{G_{-}>0\}} \is\langle S,S \rangle^{\mathbb F} =
 (G_- + \varphi^{(m)}) I_{\{G_->0\}} \is\langle S,S \rangle^{\mathbb F}.
	\]
	This completes the proof of assertion (c).  Thus, the rest of the proof focuses on proving assertion (d). To this end, we notice that due to $\Delta U I_{\{\widetilde G=0<G_{-}\}}=-G_{-}\Delta S  I_{\{\widetilde G=0<G_{-}\}}=0$, it is clear that
	$$
	\widehat{U} = U^\tau - {\widetilde{G}}^{-1} I_{\Lbrack 0,\tau\Rbrack} \is [U,m] =  I_{\Lbrack 0,\tau\Rbrack} \is [S,m] -
{\widetilde{G}}^{-1} I_{\Lbrack 0,\tau\Rbrack} \Delta m \is [S,m] = I_{\Rbrack 0,\tau\Rbrack} 	G_- {\widetilde{G}}^{-1} \is U.
	$$
	As a result, on the one hand, we get
	\begin{equation} \label{hatXinhatU}
	\widehat{S}  =   S^\tau -
 G_-^{-1} I_{\Rbrack 0,\tau\Rbrack}\is\widehat{U}.
	\end{equation}
	On the other hand, due to (\ref{decompositionUandhatU}), we derive
	$$
	\widehat{U} =  \varphi^{(m)} \is\widehat{S} + \widehat{L^{(m)}} =
  \varphi^{(m)} \is S^\tau - \varphi^{(m)} G_-^{-1} I_{\Rbrack 0,\tau\Rbrack} \is\widehat{U}  +\widehat{L^{(m)}} .
	$$
	Solving for \(\widehat{U}\), we get
	\[
	(G_- + \varphi^{(m)})G_-^{-1} I_{\Rbrack 0,\tau\Rbrack} \is\widehat{U}   =
 \varphi^{(m)} \is S^\tau + \widehat{L^{(m)}}.
	\]
	By inserting this equality in (\ref{hatXinhatU}), (\ref{decompositionXhat}) follows immediately, and the proof of
the lemma is complete.\end{proof} 
\end{appendices}

\end{document}